\crefname{section}{Section}{Sections}
\crefname{subsection}{Subsection}{Subsections}
\crefname{appendix}{Appendix}{Appendix}
\crefname{figure}{Figure}{Figures}
\crefname{table}{Table}{Tables}
\crefname{property}{Property}{Properties}
\crefname{theorem}{Theorem}{Theorem}
\newtheorem{theorem}{Theorem}
\newtheorem{lemma}[theorem]{Lemma}
\newtheorem{deduction}[theorem]{Deduction}
\newtheorem{property}[theorem]{Property}
\newtheorem{definition}[theorem]{Definition}
\theoremstyle{remark} }
\newcommand\bbR{\mathbb{R}}
\newcommand\bbN{\mathbb{N}}
\newcommand\br{\boldsymbol{r}}
\newcommand\bF{ {\boldsymbol{F}}}
\newcommand\bJ{ {\bf J}}
\newcommand\bS{ {\boldsymbol{S}}}
\newcommand\bU{ {\boldsymbol{U}}}
\newcommand\dd{\,\mathrm{d}}
\newcommand\eq{\mathrm{eq}}
\newcommand\sgn{\mathrm{sgn}}
\newcommand\out{\mathrm{out}}
\newcommand\ctend{c\,t_{\mathrm{end}}}
\newcommand\cell{\mathrm{cell}}
\newcommand\PN{{$P_N$ }}
\newcommand\MN{{$M_N$ }}
\newcommand\MP[1]{{${M\!P}_{#1}$ }}
\newcommand\MPN{{\MP{\!N}}}
\newcommand\Mone{{$M_1$ }}
\newcommand\MPtwo{{\MP{2}}}
\newcommand\mE{{\mathcal{E}}}
\newcommand\mS{{\mathcal{S}}}
\newcommand\weight{\omega^{[\alpha]}}
\newcommand\pl{\phi^{[\alpha]}}
\newcommand\Pl{\Phi^{[\alpha]}}
\newcommand\pd[2]{\dfrac{\partial {#1}}{\partial {#2}}}
\newcommand\od[2]{\dfrac{\dd {#1}}{\dd {#2}}}
\newcommand\inner[2]{\left\langle {#1}, ~{#2}\right\rangle^{[\alpha]}}
\newcommand\moment[1]{{\langle#1\rangle}}
\numberwithin{equation}{section}
\title
{A Nonlinear Moment Model for Radiative Transfer Equation in Slab Geometry}
\author{ 
Yuwei Fan\thanks{Department of Mathematics, Stanford University,
    Stanford, CA 94305, email: {\tt ywfan@stanford.edu}},~~
Ruo Li\thanks{HEDPS \& CAPT, LMAM \& School of Mathematical
    Sciences, Peking University, Beijing, China, email: {\tt rli@math.pku.edu.cn}}, 
  ~~and
Lingchao Zheng\thanks{School of Mathematical Sciences, Peking
    University, Beijing, China, email: {\tt lczheng@pku.edu.cn}} }
\begin{document}
\maketitle

\begin{abstract}
    This paper is concerned with the approximation of the radiative transfer
    equation for a grey medium in the slab geometry by the moment method.
    We develop a novel moment model inspired by the classical \PN model and \MN
    model. The new model takes the ansatz of the \Mone model as the weight
    function and follows the primary idea of the \PN model to approximate the
    specific intensity by expanding it around the weight function in terms of
    orthogonal polynomials. The weight function uses the information of the first
    two moments, which brings the new model the capability to approximate an
    anisotropic distribution. Mathematical properties of the moment model are
    investigated, and particularly the hyperbolicity and the characteristic
    structure of the Riemann problem of the model with three moments are studied
    in detail. Some numerical simulations demonstrate its
    numerical efficiency and show its superior in comparison to the \PN model.

  \vspace*{4mm}
  \noindent {\bf Keywords:} Radiative transfer equation; slab geometry; grey
  medium; moment method; anisotropic; hyperbolicity.
\end{abstract}

\section{Introduction}
In kinetic theory, the radiative transfer equation (RTE), which describes the
particle propagation and interaction with a background medium, has applications
in a wide variety of subjects, such as neutron transport in reactor physics
\cite{pomraning1973equations, duderstadt1979transport}, light transport in
atmospheric radiative transfer \cite{marshak20053d}, heat transfer
\cite{koch2004evaluation} and optical imaging
\cite{klose2002optical,tarvainen2005hybrid}.
The RTE is a high-dimensional integro-differential kinetic transport equation
for the specific intensity of radiation, and its high-dimensionality makes it
particularly challenging to solve numerically.  Currently, numerical methods for
solving the RTE can be categorized into two groups: the probabilistic
approaches, for example, the direct simulation Monte Carlo methods \cite{Bird,
hayakawa2007coupled, densmore2012hybrid} and the deterministic schemes
\cite{broadwell1964study, larsen2010advances, Stamnes1988Electromagnetic,
jeans1917stars, Davison1960on, dubroca1999theoretical, minerbo1978maximum,
alldredge2016approximating, fan2018fast}.
The Monte Carlo method solves the RTE by simulating a lot of individual
particles and determine the intensity by taking a statistical average over
particles. This method has made remarkable successes in solving the RTE, but the
statistical noise is an important issue for its accuracy.

Among the deterministic methods, two major approaches are the discrete ordinates
method \cite{broadwell1964study, larsen2010advances, Stamnes1988Electromagnetic}
and the moment methods \cite{jeans1917stars, Davison1960on,
dubroca1999theoretical, alldredge2016approximating}.
As the most popular deterministic method in solving the RTE, the
discrete-ordinates method ($S_N$) solves the transfer equation along a discrete
set of angular directions. The main flaw of this method is the so-called
\emph{ray effects} \cite{larsen2010advances} because the number of the discrete
angular directions is finite and the particles are only allowed to move along
these directions.

Moment method depicts the evolution of a finite number of moments of the
specific intensity. Typically, the governing equations of a lower order moments
depend on higher order moments. Hence a moment closure is required to close the
moment system. A common method for the moment closure is to construct an ansatz
to approximate the specific intensity. Based on this idea, the two most popular
moment methods are the the spherical harmonics method ($P_N$)
\cite{pomraning1973equations} and the maximum entropy method ($M_N$)
\cite{levermore1996moment, dubroca1999theoretical, minerbo1978maximum}. The \PN
model constructs the ansatz by expanding the specific intensity around the
equilibrium in terms of spherical harmonics in the velocity direction.
The resulting model is a linear symmetric hyperbolic system and easy to
implement, but it may lead to nonphysical oscillations, which may
lead to negative particle concentration \cite{brunner2001one, brunner2005two,
mcclarren2008solutions}.
The \MN model constructs the ansatz is based on the principle of maximum
entropy\cite{levermore1996moment, dubroca1999theoretical}.
The resulting model retains fundamental properties of the underlying kinetic
equations such as hyperbolicity, entropy dissipation, and positivity of the
intensity. However, for the the case $N\geq 2$, there is no algebraic
expression of the closed moments, and one has to solve an ill-conditioned
optimization problem to obtain the closed moments in the implementation, which
strongly limits the applications of the \MN model.


In this paper, we first propose two points of view --- the entropy-based
viewpoint and the weighted expansion viewpoint --- on the relationship between
the \PN model and the \MN model. 
Based on the entropy-based viewpoint, we show that both the models can be
attributed to a minimization problem with specific object functions.  The
weighted expansion viewpoint reveals that the two models can also be treated as
approximating the intensity by expanding it around a given
weight function in terms of orthogonal polynomials.
These viewpoints indicate that one can construct a new moment model by choosing
a new weight function.

Although there is no algebraic expression of the closed moment for the \MN
model for $N\geq 2$, the expression of the \Mone model is simple.
We take the ansatz of the \Mone model as the weight function to develop a novel
arbitrary order moment model (we call it \MPN here and put the explanation in
\cref{sec:M1Pndeduction}) for the RTE. Since the weight function contains the
information of the zeroth-order moment and the first-order moment, the ansatz of
the new model is expected to have the capability to approximate an anisotropic
distribution. The new moment model has a simple algebraic expression and easy
to implement. We study the hyperbolicity and the characteristic structure of the
\MPtwo model in detail.  This model is hyperbolic, and the judging criteria on
the wave-type are investigated.
Comparison with the $P_2$ and $M_2$ models, the \MPtwo lies in the betweenness
of these two models and can be viewed as an approximation of the $M_2$ model.
Numerical simulations are performed to study the numerical behavior of the new
moment model and show that the \MPN model has the power to simulate strong
anisotropic intensity, and has superiority on the \PN model.

The rest of this paper is arranged as follows. In \cref{sec:preliminiaries}, we
briefly introduce the radiative transfer equation and the moment method, and
present two viewpoints on the relationship between the \PN and \MN models.
In \cref{sec:M1Pndeduction}, we derive the 
arbitrary order \MPN model and investigate the \MPtwo model in detail.
Numerical issues, including the numerical scheme, details on implementation, and
numerical results, are presented in \cref{sec:numericalresults}. 
The paper ends with a conclusion in \cref{sec:conclusion}.


\section{Radiative transfer equation and moment method} \label{sec:preliminiaries}
In this paper, we study the time-dependent radiative transfer equation (RTE)
for a grey medium in the slab geometry as
\begin{equation} \label{eq:radiativetransfer}
    \frac{1}{c}\pd{I}{t}+\mu\pd{I}{z} = \mS(I),
\end{equation}
where $c$ is the speed of light, $I=I(z,t,\mu)$ is the \emph{specific intensity}
of radiation, and $\mu \in [-1, 1]$ is the velocity related variable such that
$\arccos(\mu)$ represents the angle between the photon velocity and the
$z$-axis. The right hand side $\mS(I)$ denotes the actions by the background
medium on the photons. Here we adopt a common form of $\mS(I)$ given in
\cite{Bru02, McClarren2008Semi} as
\begin{equation} \label{eq:source}
    \mS(I) = -\sigma_t I + \frac{1}{2}ac\sigma_a T^4 
    + \frac{1}{2}\sigma_s \int_{-1}^1 I \dd\mu + \frac{s}{2}, 
\end{equation}
where $a$ is the radiation constant, and $s=s(z)$ is an isotropic external
source of radiation.
The scattering coefficient $\sigma_s$ and the absorption coefficient $\sigma_a$
depend on the position $z$ and the material temperature $T(z,t)$, and the total
opacity coefficient is $\sigma_t=\sigma_a+\sigma_s$.
Denote the $k$-th moment of the specific intensity by
\begin{equation} \label{eq:momentsdefine} 
    \moment{I}_k \triangleq \int_{-1}^1 \mu^{k} I(\mu)\dd\mu, \quad k \in \bbN.
\end{equation} 
For simplicity of notations, in \eqref{eq:momentsdefine} and the following
discussion, the explicit dependence of the specific intensity and the moments
on spatial coordinate and time has been suppressed (i.e., ${\moment{I}_k=
\moment{I}_k(z,t)}$, $I(\mu)=I(z,t,\mu)$).
The evolution equation of the internal energy $e$ of the background medium is
\begin{equation} \label{eq:internalenergy}
    \pd{e}{t} = \sigma_a\left(\moment{I}_0 - acT^4\right).
\end{equation}
The relationship between the temperature $T$ and the internal energy $e$ is
problem dependent. We will assign it in the numerical examples when necessary.

For the whole system, in the absence of any external source of radiation
i.e. $s=0$, the total energy $e+\moment{I}_0/c$ is conserved:
\begin{equation}\label{eq:conservation}
    \pd{e}{t}+\frac{1}{c}\pd{\moment{I}_0}{t} +\pd{\moment{I}_1}{z}=0.
\end{equation}


Moreover, for later usage, we introduce the equilibrium of the RTE as
\begin{equation}\label{eq:equilibrium}
    I_{\eq}=\frac{\moment{I}_0}{2}.
\end{equation}

\subsection{Moment method for RTE}
Multiplying \eqref{eq:radiativetransfer} by $\mu^k$ and integrating it with
respect to $\mu$ over $[-1,1]$ yields the following equations
\begin{equation} \label{eq:momentequations} 
    \frac{1}{c}\pd{\moment{I}_k}{t}
    + \pd{ \moment{I}_{k+1}}{z} = \moment{\mS(I)}_k, \quad k \in \bbN,
\end{equation}
where 
\[
    \moment{\mS(I)}_k = -\sigma_t \moment{I}_k
    +\frac{1-(-1)^{k+1}}{2k+2}(\sigma_a acT^4 + \sigma_s \moment{I}_0+s).
\]
In \eqref{eq:momentequations} the governing equation of $\moment{I}_k$ also
depends on $\moment{I}_{k+1}$, which means that the full system 
contains infinite number of equations.
To derive a moment model for \eqref{eq:radiativetransfer}, we first truncate
the system by discarding all the governing equations of $\moment{I}_k$,
$k>N$, for a given integer $N \in \bbN$. Clearly, the truncated system is
not closed, due to its dependence on the ($N+1$)-th moment $\moment{I}_{N+1}$,
thus we have to apply a so-called \emph{moment closure} to this system. 
Normally, the moment closure is to find an approximation for $\moment{I}_{N+1}$
formulated as
\begin{equation}\label{eq:momentclosure}
    \moment{I}_{N+1} \approx E_{N+1} = E_{N+1}(\moment{I}_0, \cdots,
    \moment{I}_N).
\end{equation}
To achieve this goal, a popular method is to construct an ansatz for the
specific intensity. Precisely, let $E_k$, $k=0, \cdots, N$, be the known
moments for a certain unknown specific intensity $I$. Then one proposes an
approximation $\hat{I}(\mu;E_0,\dots,E_N)$, which is called \emph{ansatz}
of $I$, such that
\begin{align} \label{eq:momentrelation}
    \moment{\hat{I}(\cdot;E_0,\dots,E_N)}_k = E_k, \quad k=0, \cdots, N,
\end{align}
and meanwhile $\hat{I}$ is uniquely determined by \eqref{eq:momentrelation}.
For the ($N+1$)-th moment of $I$, it is then directly approximated by
the ($N+1$)-th moment of $\hat{I}$, i.e.,
\begin{equation}\label{eq:closure}
    E_{N+1} = \moment{\hat{I}(\cdot; E_0, \cdots, E_N)}_{N+1}.
\end{equation}
The resulting moment system is
\[
  \frac{1}{c} \pd{E_k}{t} + \pd{E_{k+1}}{z} = \langle \mS(\hat{I})
  \rangle_k, \quad k = 0, \cdots, N.
\]
%

In the following, we briefly review the two most popular moment models
for the RTE: 
the \PN model \cite{jeans1917stars} and 
the \MN model \cite{dubroca1999theoretical, levermore1996moment,
minerbo1978maximum}, by specifying a
certain ansatz $\hat{I}$.

\subsubsection{\PN model}
The \PN model is a counterpart of the Grad's moment method \cite{Grad} in the
RTE. It expands the specific intensity around the equilibrium in terms of
orthogonal polynomials with respect to $\mu$.
For the RTE, the normalized equilibrium is a constant, hence the corresponding
orthogonal polynomials are the Legendre polynomials. Denote
the monic Legendre polynomial of degree $m$ by $P^{(m)}(\mu)$, then the 
ansatz for the \PN model, denoted by $\hat{I}_P$, is
\begin{equation}\label{eq:PnAnsatz}
  \hat{I}_P(\mu; E_0, \cdots, E_N) = \sum_{m=0}^N f_m(E_0, \cdots, E_N) P^{(m)}(\mu).
\end{equation}
Due to the moment constraints \eqref{eq:momentrelation}, we have 
\begin{equation}\label{eq:constrains}
  \sum_{m=0}^N \langle P^{(m)} \rangle_k f_m = E_k, \quad k=0, \cdots, N.
\end{equation}
Hence, the expansion coefficients $f_m$, $m=0,\dots,N$ are uniquely determined
by the moments $E_k$, $k=0,\dots,N$ through the linear system \eqref{eq:constrains}.
The moment closure is then given as
\[
  E_{N+1} = \sum_{m=0}^N f_m \langle P^{(m)} \rangle_{N+1},
\]
which is a linear function of $E_k$, $k=0,\dots,N$.


The \PN model is widely used in the numerical simulations of RTE due to its 
good mathematical properties.
For example, it has a simple analytical form such that the evaluation of the
flux is fast. Its system can be transformed into a symmetric hyperbolic
system \cite{Fischer1972The}.
Moreover, the \PN model formally converges in an $L^2$ setting to the solution
of the transport equation as $N \rightarrow \infty$ \cite{Davison1960on}.

Meanwhile, the \PN model also suffers some drawbacks. Its solution may have
undesirable oscillations, which may lead to negative particle concentration
\cite{brunner2005two, mcclarren2008solutions}.  Its approximation rate to the
RTE is low, such that a lot of moments are required in numerical simulations
\cite{Benassi1984High,Eshagh2009On}. 

\subsubsection{\MN model}
The \MN model takes the solution of the entropy minimization problem, denoted
by $\hat{I}_M$, to close the system. Precisely, we have
\begin{equation} \label{eq:entropy-basedequation}
  \begin{array}{rrl}
      \hat{I}_M = & \text{argmin} & \displaystyle\int_{-1}^1 \eta(\hat{I}) \dd\mu, \\
                  & \text{s.t.} & \moment{\hat{I}}_k = E_k, \quad k=0,1,\cdots,N.
  \end{array}
\end{equation}
Here $\eta: \bbR \rightarrow \bbR$ is the Bose-Einstein entropy
\begin{equation}\label{eq:entropy}
    \eta(f) = f\log f - (1+f)\log (1+f),
\end{equation}
for photon. The intensity takes the form
\cite{levermore1996moment,dubroca1999theoretical}
\begin{equation}\label{eq:Mnansatz_general}
  \frac{2\hbar\nu^3}{c^2}
  \left( \exp\left( \frac{\hbar\nu}{k_B}\sum_{i=0}^{N}\alpha_i\mu^i 
  \right) -1\right)^{-1},
\end{equation}
where $\alpha_i$, $i=0,\dots,N$ are the Lagrange multipliers to be determined,
$\hbar$ is the Planck constant, and $k_B$ is the Boltzmann constant.
Integrating \eqref{eq:Mnansatz_general} with respect to the frequency $\nu$ on
$[0,\infty]$, we obtain the intensity for the grey medium case
\begin{equation}\label{eq:Mnansatz}
  \hat{I}_M = \frac{\sigma}{\left(\sum\limits_{i=0}^N \alpha_i \mu^i\right)^{4}},
\end{equation}
where $\sigma$ is the Stefan-Boltzmann constant. Clearly, the Lagrange
multipliers $\alpha_i$, $i=0,\dots,N$ are uniquely determined by the moments
$E_k$, $k=0,\dots,N$, and the moment closure ${E_{N+1} = \langle \hat{I}_M
\rangle_{N+1}}$ follows.

Particularly, if $N=0$, the ansatz is the equilibrium, i.e.,
\begin{equation}
    \hat{I}_M=I_{\eq} =\frac{E_0}{2}.
\end{equation}
If $N=1$, the Lagrange multipliers $\alpha_i$, $i=0,1$ can be directly solved,
and the solution of the minimizing entropy problem is given as
\begin{equation}\label{eq:M1model}
  \hat{I}_M(\mu)= E_0 
  \frac{\varepsilon}{(1+\alpha\mu)^4},
\end{equation}
where 
\begin{align} \label{eq:alphaandsigma}
    \alpha = -\frac{3E_1/E_0}{2+\sqrt{4-3(E_1/E_0)^2}},\quad
    \varepsilon = \frac{3(1-\alpha^2)^3}{2(3+\alpha^2)}.
\end{align}
The corresponding moment closure is
\begin{equation*}
    E_2 = E_0\frac{3+4(E_1/E_0)^2}{5+2\sqrt{4-3(E_1/E_0)^2}}.
\end{equation*}

However, for $N\geq2$, there is no algebraic expression of the Lagrange
multipliers $\alpha_i$ with respect to the moments $E_k$. Thus, an expensive
iterative procedure is required to solve the Lagrange multipliers. This drawback
of the \MN model strongly limited its applications with $N\geq2$, though it has
been demonstrated that the \MN models yield promising results
\cite{Hauck2011high}.

As for the properties of the system, the \MN model retains many fundamental
properties from the kinetic formalism. The characteristic speed of this model is 
no larger than the speed of light, 
which agrees with the
fact that information cannot travel faster than the speed of light. The ansatz
$\hat{I}_M$ is always positive, the \MN model is equipped with entropy, and the
resulting system of equations can be transformed into a symmetric hyperbolic
system.

For the \MN model, though the drawback in the numerical simulations hinders its
application, its praised properties motivate researchers to construct new moment
models.  The moment model developed in the next section is partially inspired by
the \MN model.


\subsubsection{Relationship between \PN and \MN model}\label{sec:PN_MN}
In this subsection, we propose two points of view --- the
entropy-based viewpoint and the weighted expansion viewpoint --- on the
relationship between the \PN and \MN model, and show our motivation on the novel
moment model developed in this paper. All the discussion in this subsection is
formal, not rigorous.

\paragraph{Entropy-based viewpoint}
If the specific intensity $I$ is close to its equilibrium, i.e., $I\sim
I_{\eq}$, using the Taylor expansion on $\ln(1+x)$, we can approximate the
Bose-Entropy entropy \eqref{eq:entropy} as
\begin{equation}
    \begin{aligned}
        \eta(I)\approx \eta_{app}(I)\triangleq &
        I_{\eq}\ln(I_{\eq})+(1+\ln(I_{\eq}))(I-I_{eq})
        - (1+I_{\eq})\ln(1+I_{\eq})\\
        &-(1+\ln(1+I_{\eq}))(I-I_{eq})
        +\frac{(I-I_{\eq})^2}{2I_{\eq}(1+I_{\eq})},
    \end{aligned}
\end{equation}
by discarding high order term with respect to $O((I-I_{\eq})^3)$. Let
$\eta_{\scaleto{P}{4pt}}(I)=\eta_{app}(I)$ or equivalently
$\dfrac{I^2}{I_{\eq}}$, then one can easily check that the ansatz of the \PN
model can be obtained by minimizing $\eta_{\scaleto{P}{4pt}}$ as
\begin{equation} 
  \begin{array}{rrl}
      \hat{I}_P = & \text{argmin} & \displaystyle\int_{-1}^1 \eta_P(\hat{I}) \dd\mu, \\
                  & \text{s.t.} & \moment{\hat{I}}_k = E_k, \quad k=0,1,\cdots,N.
  \end{array}
\end{equation}
This indicates that the \PN model can also be brought into the framework of the
``entropy'' minimization problem by choosing a proper object function $\eta$.
It motivates us to construct new moment models by selecting a different object
function.
On the other hand, the start point of the \PN model is expanding the velocity
distribution around the equilibrium, where the distribution is implicitly
assumed to be close to the equilibrium. It is consistent with the assumption in the
approximation of $\eta_{app}$.

\paragraph{Weighted expansion viewpoint}
For the \PN model, the approach to construct its ansatz can be generalized as
an expansion of the velocity distribution function around a weight function
$\omega(\mu)$ in terms of orthogonal polynomials.
Precisely, given a weight function $\omega(\mu)$ satisfying
\begin{equation}
    \omega(\mu)\geq0,\quad \int_{-1}^1\omega(\mu)\dd\mu=1,
\end{equation}
we denote the monic orthogonal polynomial of degree $m$ by $\phi_m(\mu)$ such
that
\begin{equation}
    \int_{-1}^1\omega(\mu)\phi_m(\mu)\phi_n(\mu)\dd\mu=\delta_{mn}c_m,
\end{equation}
where $\delta$ is the Kronecker delta and $c_m$ are non-zero constants.
It is worth to point out that the weight function $\omega$ is allowed to depend
on the moments $E_k$.
Then we construct the ansatz as
\begin{equation}\label{eq:ansatz_gene}
    \hat{I}=\omega(\mu)\sum_{m=0}^Nf_m(E_0,\dots,E_N)\phi_m(\mu).
\end{equation}
Due to the moment constraints \eqref{eq:momentrelation}, we obtain the system
\begin{equation}\label{eq:momentconstrains}
    \sum_{m=0}^N \int_{-1}^1\omega(\mu)\phi_m(\mu)\mu^k f_m \dd\mu 
    = E_k, \quad k=0, \cdots, N,
\end{equation}
which uniquely determines the ansatz \eqref{eq:ansatz_gene}.

For the \PN model, the weight function is $\omega_P= 1 / 2$, and the orthogonal
polynomials are the Legendre polynomials.
For the \MN model, the weight function is the normalization of the ansatz
$\hat{I}_M$ itself, and the coefficients satisfy $f_m=0$, $m=1,\dots,N$.
To check it is not difficult but rather complex. We refer readers to
\cite[Section 5.4]{Fan2015} and \cite[Section 4.2.2]{framework} for details.

In the above, we also bring the \MN model in the framework of the \PN model.
The key point of this framework is the weight function. Once the weight
function is given, one can directly obtain the corresponding moment model
following the above routines.

These viewpoints build a bridge between the \PN and \MN models and also
present two methods to construct new models. 
The following part of this paper will focus on the new model and its analysis.

\section{$M_1$-based moment model} \label{sec:M1Pndeduction}
The viewpoints in \cref{sec:PN_MN} provide methods to construct new moment
models. The remaining issue is how to choose the weight function. A weight
function, which contains much information of the moments, usually resulting
in a strong nonlinear system, for instance, the \MN model. Such a system is
expected to have a good approximation to the RTE, but the evaluation of its flux is,
in general, expensive. Hence, one has to make a trade-off between the numerical
efficiency and the approximation rate.

\subsection{$M_1$-based moment system}\label{sec:MPNmodel}
The ansatz of the \MN model with $N=1$ (we call it \Mone hereafter) contains the
zeroth-order moment $E_0$ and the first order moment $E_1$. This brings it the
capability to approximate an anisotropic distribution. If we take the ansatz of
the \Mone model as the weight function to construct a new model, the
corresponding ansatz is expected to have a better approximation on the
anisotropic distributions than that of the \PN model.
In this section, based on this idea, we develop a novel moment model and study
its properties in detail.

Let the weight function be
\begin{equation}\label{eq:weight}
    \weight(\mu)= \frac{\varepsilon}{(1+\alpha\mu)^4}, 
    \quad \alpha\in(-1,1), \quad
    \text{such that} \quad 
    \int_{-1}^1\weight(\mu)\dd\mu=1,
\end{equation}
so $\varepsilon = \frac{3(1-\alpha^2)^3}{2(3+\alpha^2)} > 0$.
Here $\alpha$ is a parameter to be determined.
Denote by $\pl_j(\mu)$ the monic orthogonal polynomial of $\mu$ with degree $j$
respect to the weight function $\weight(\mu)$.
We introduce the moments of the weight function and the inner product as
\begin{equation}
    \begin{aligned}
        \mE_k & \triangleq \moment{\weight(\mu)}_k,\quad
        \inner{f}{g} &\triangleq \int_{-1}^1 f(\mu)g(\mu)\weight(\mu) \dd\mu.
    \end{aligned}
\end{equation}
%
%
Then by Gram-Schmidt orthogonalization, the polynomials $\pl_j(\mu)$
are obtained recursively as
\begin{align}\label{eq:Gram-Schmidt}
  \pl_0(\mu) = 1, \quad \pl_j(\mu) = \mu^j - \sum_{k=0}^{j-1}
  \frac{A_{j,k}} {A_{k,k}} \pl_k(\mu), \quad j\geq 1,
\end{align}
where $A_{j,k} = \inner{\mu^j}{\pl_k(\mu)}$. The orthogonality of
$\pl_j$ indicates that
\[
    A_{k,k} = \inner{\pl_k}{\pl_k}, \quad A_{k,j}=0,\ k<j.
\]
Applying the inner product on \eqref{eq:Gram-Schmidt} and $\mu^i$ yields
the relationship between $A_{i,j}$ and $\mE_{k}$ as
\[
    A_{0,0}=1,\quad
  A_{i,j}=\mathcal{E}_{i+j} - \sum_{k=0}^{j-1}
  \frac{A_{j,k} A_{i,k}}{A_{k,k}}, \quad 1\leq j\leq i.
\]
The corresponding ansatz $\hat{I}$ is defined as
\begin{equation}\label{eq:ansatz}
    \hat{I}(\mu; E_0, \cdots, E_N) \triangleq \sum_{i=0}^N f_i \Pl_i(\mu), 
\end{equation}
where $\Pl_i(\mu)=\pl_i(\mu)\weight(\mu), i = 0,1,\cdots,N$ 
are the basis functions, and $f_i$ are the expansion coefficients 
to be determined by the moment
constraints \eqref{eq:momentconstrains}. Thanks to the orthogonality of $\pl_i$,
we have
\[
  f_i = \frac{1}{A_{i,i}} \int_{-1}^1 \pl_i(\mu)\hat{I}(\mu)\dd\mu.
\]
Substituting the recursive relationship \eqref{eq:Gram-Schmidt} into the upper
equation yields the following recursive formulation for $f_i$, which are
functions dependent on $E_i$,
\begin{equation}\label{eq:fk_Ek}
    f_0 = E_0,\quad
    f_i = \frac{1}{A_{i,i}} \left(E_i - \sum_{j=0}^{i-1}
    A_{i,j} f_j\right),\quad 0\leq i\leq N.
\end{equation}
Thus the explicit formation of the ansatz \eqref{eq:ansatz} is obtained. 
The moment closure is then given as
\[ 
E_{N+1} = \sum_{k=0}^N f_k A_{N+1,k}.
\]

If $\alpha\equiv 0$, then $\weight=1 / 2$, the orthogonal polynomials are the
Legendre polynomials, and the resulting system is the \PN model.
If we set 
\begin{equation}
  \label{eq:alphachoose}
    \alpha = -\frac{3E_1/E_0}{2+\sqrt{4-3(E_1/E_0)^2}},
\end{equation}
the weight function is the normalization of the ansatz of the \Mone model.
Some calculations yield
\begin{equation}
    f_0=E_0,\quad f_1=0,\quad
    \mathcal{E}_0 = 1, \quad \mathcal{E}_1 = \frac{E_1}{E_0}.
\end{equation}
In the following part of the paper, we always use this setup.

Notice that the moment model uses the ansatz of the \Mone model as the weight
function, and generates the arbitrary order models following the idea of the \PN
model. We call the moment model as the \emph{\MPN}model in the following.

In the \cref{sec:PN_MN}, we have put the \PN model into the framework of the
entropy minimization form with the corresponding object function
$\eta_{\scaleto{P}{4pt}}(I)=\frac{I^2}{I_{\eq}}$.
For the \MPN model, let $\eta_{\omega}=\frac{I^2}{\weight}$, then one can easily
check that the ansatz \eqref{eq:ansatz} is also the minimizer of the following
problem
\begin{equation} 
  \begin{array}{rrl}
      & \text{argmin} & \displaystyle\int_{-1}^1 \eta_{\omega}(\hat{I}) \dd\mu, \\
      & \text{s.t.} & \moment{\hat{I}}_k = E_k, \quad k=0,1,\cdots,N.
  \end{array}
\end{equation}

\begin{figure}[ht]
    \centering
    \subfloat[specific intensity of $I^{(1)}$ in \eqref{eq:dis2}]{
        \includegraphics[clip, width=0.4\textwidth]{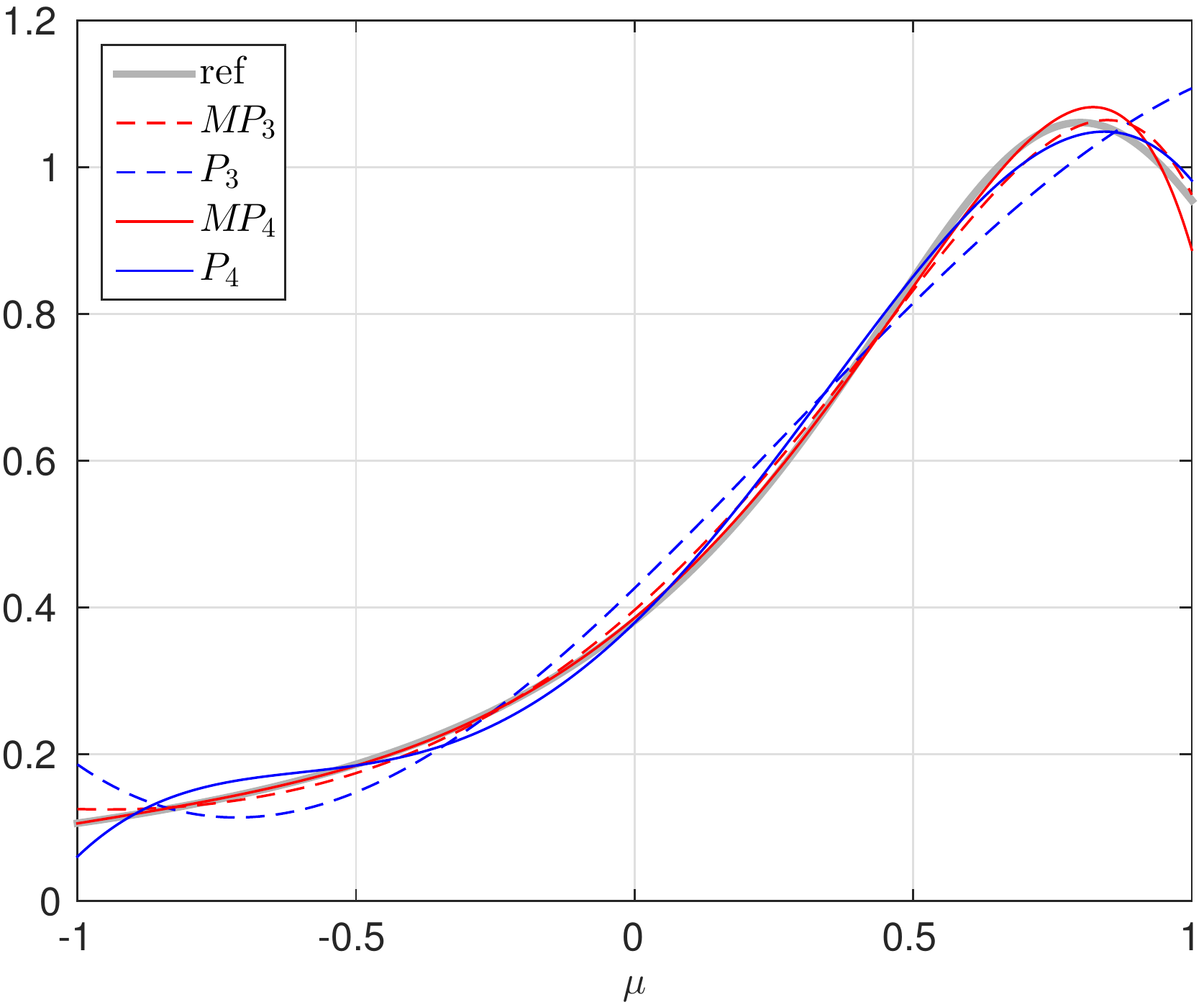}
    }\qquad\qquad
    \subfloat[specific intensity of $I^{(2)}$ in \eqref{eq:dis2}]{
        \includegraphics[trim={2mm 0mm 0mm 0mm},clip, width=0.4\textwidth]{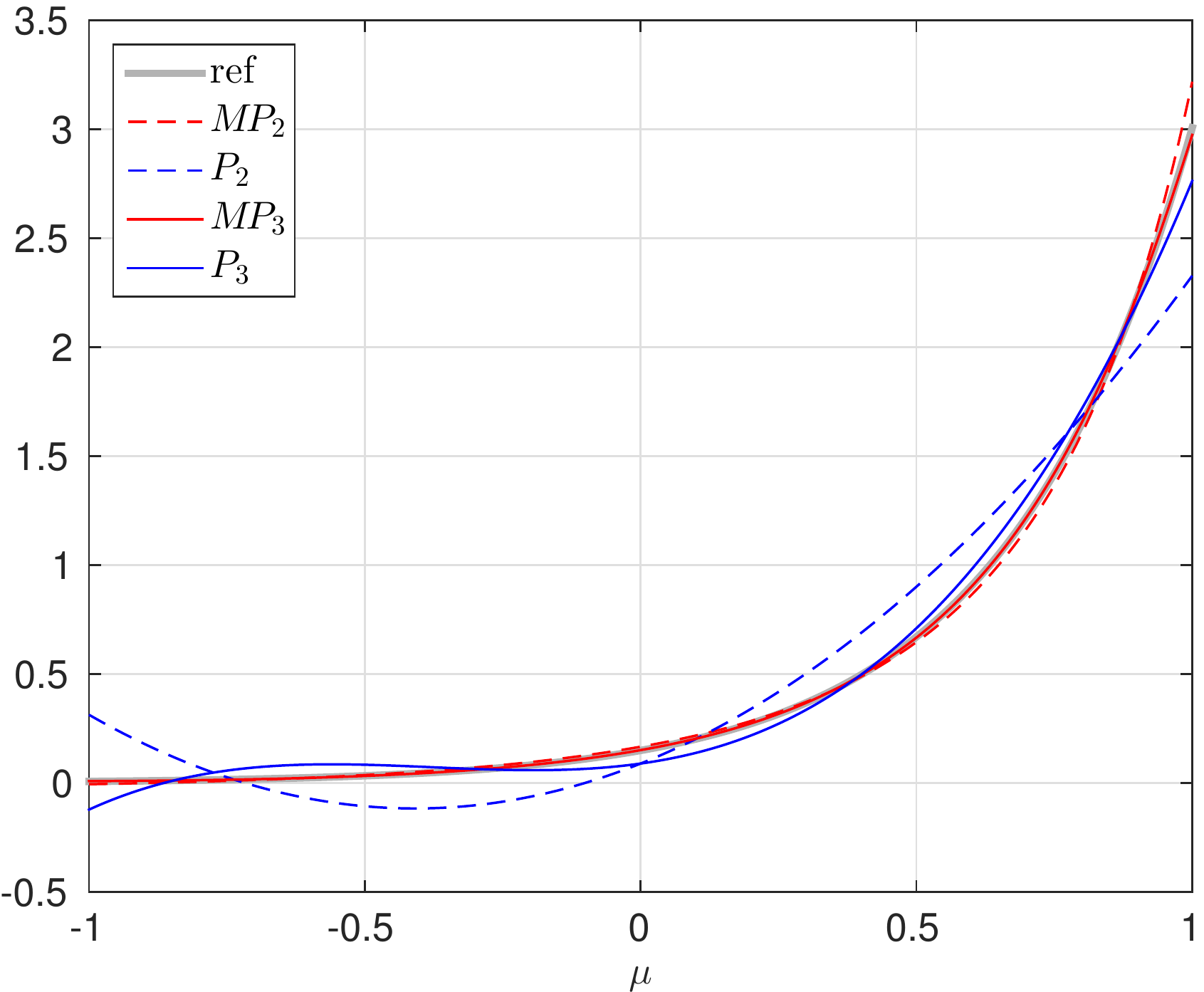}
    }
    \caption{Profile of the ansatz of the \MPN and \PN models for two anisotropic 
    distributions.}
    \label{fig:distribution}
\end{figure}

We compare the approximation efficiency of the \MPN model and the \PN model, and
select the following two intensity as examples:
\begin{align}
    I^{(1)} = \frac{6}{5\pi} \frac{1}{1-\frac{8}{5}\mu+\mu^2},\qquad
    I^{(2)} = \frac{3}{2\sinh(3)} \exp(3\mu).\label{eq:dis2}
\end{align}
The corresponding profiles of the intensity are presented in
\cref{fig:distribution}. Clearly, the ansatz of the \MPN model shows advantage
in the approximating such anisotropic distributions, because of its specific
weight function.


\subsection{\MPtwo model} \label{sec:M1P2Model}
The complex form of \MPN \eqref{eq:fk_Ek} makes it not easy to investigate the
\MPN model with arbitrary order.
Here we provide a perspective on the \MPN model by studying the simplest
non-trivial case, i.e., $N=2$ case in detail.

\subsubsection{\MPtwo moment system}
Direct calculation on \eqref{eq:Gram-Schmidt} yields
the orthogonal polynomials 
\[  
  \begin{aligned}
    \pl_0(\mu) &= 1,\\
    \pl_1(\mu) &=\mu - \mE_1,\\
    \pl_2(\mu) &= \mu^2-\mE_2 - 
    \beta(\mu-\mE_1),
    \quad
    \beta = \frac{\mE_3-\mE_2\mE_1} {\mE_2-\mE_1^2}.
  \end{aligned}
\]
The expansion coefficients are given as
\begin{equation}
  \label{eq:expansioncoefficients}
  f_0 = E_0,\quad f_1=0, \quad
  f_2=\frac{E_2-E_0\mE_2}{\mE_4-\mE_2^2-\beta(\mE_3 -\mE_1\mE_2)},
\end{equation}
and the ansatz $\hat{I}(\mu)$ is
\begin{equation}\label{eq:ansatzM1P2}
  \hat{I}(\mu) = \weight(\mu) (f_0 + f_2\pl_2(\mu)).
\end{equation}
Then, the moment closure is directly obtained by the third moment of the
intensity
\begin{equation} \label{eq:E3ofM1P2}
  E_3 = \langle \hat{I}(\mu) \rangle_3 = E_0\mE_3
  + f_2 \left( \mE_5 - \mE_2\mE_3 -\beta\left(\mE_4 - \mE_1\mE_3\right) \right).
\end{equation}

For the RTE, the positivity of the specific intensity $I$ provides constraints on
the moments. Based on this, we introduce 
\emph{realizable domain} for the RTE. 
\begin{definition} 
    The realizable domain is 
    the set of moments where each point corresponds to a 
    positive intensity, i.e.,
    \begin{equation}
        \Omega_R \triangleq \{ (E_0,E_1,E_2)^T : \exists I(\mu)> 0, \langle I
        \rangle_k = E_k, k=0,1,2 \}.
    \end{equation}
\end{definition}
By the Cauchy-Schwarz inequality, $|E_1|< E_0$, and $E_1^2< E_0 E_2$ have to be
fulfilled for a positive intensity $I$. Direct calculations indicate
that the realizable region is
\begin{equation} \label{eq:realizablityspace}
  \Omega_R = \{(E_0,E_1,E_2)^T : E_0>0, E_2< E_0, E_1^2< E_0E_2\}.
\end{equation}
Hereafter we focus on the \MPtwo model in the
realizable domain $\Omega_R$.
Let $\sgn(x)$ be the sign function 
\begin{equation*}
    \sgn(x)=\left\{ \begin{array}{ll}
        -1, &   x<0,\\
        0, &   x=0,\\
        1, &   x>0.
    \end{array} \right.
\end{equation*}
Then we have the following properties on the 
closure of \MPtwo.
\begin{property}\label{pro:M1P2}
    For the \MPtwo moment system, the closed moment $E_3=E_3(E_0,E_1,E_2)$,
    $(E_0,E_1,E_2)\in\Omega_R$, satisfies the following properties: 
    \begin{enumerate}
        \renewcommand{\theenumi}{\alph{enumi})}
        \item $\sgn\left(\pd{E_3}{E_0}\right)=-\sgn(E_1)$; \label{itm:E3E0}
        \item $\pd{E_3}{E_1}\Big|_{E_1=0}>0$; \label{itm:E3E1}
        \item $\sgn\left(\pd{E_3}{E_2}\right)=\sgn{(E_1)}$; \label{itm:E3E2_1}
        \item $E_3$ is linear dependent on $E_2$, i.e.  $\pd{^2E_3}{E_2^2}=0$;\label{itm:E3E2}
        \item $\sgn\left(\mE_3-\mE_1^3-(\mE_2-\mE_1^2)\pd{E_3}{E_2}\right)=\sgn(E_1)$.
            \label{itm:E3geE1}
    \end{enumerate}
\end{property}
These properties depict the behavior of the closed moment $E_3$. One can
directly check these properties with the help of \eqref{eq:E3ofM1P2} and
\cref{fig:comparsionE3}. But the rigorous proof is tedious, and we put it in the
\cref{sec:proof_M1P2}.

\begin{figure}[ht]
  \centering
  \subfloat[$P_2$ model]{
  \includegraphics[trim={2mm 0mm 14mm 8mm}, clip, width=0.30\textwidth]{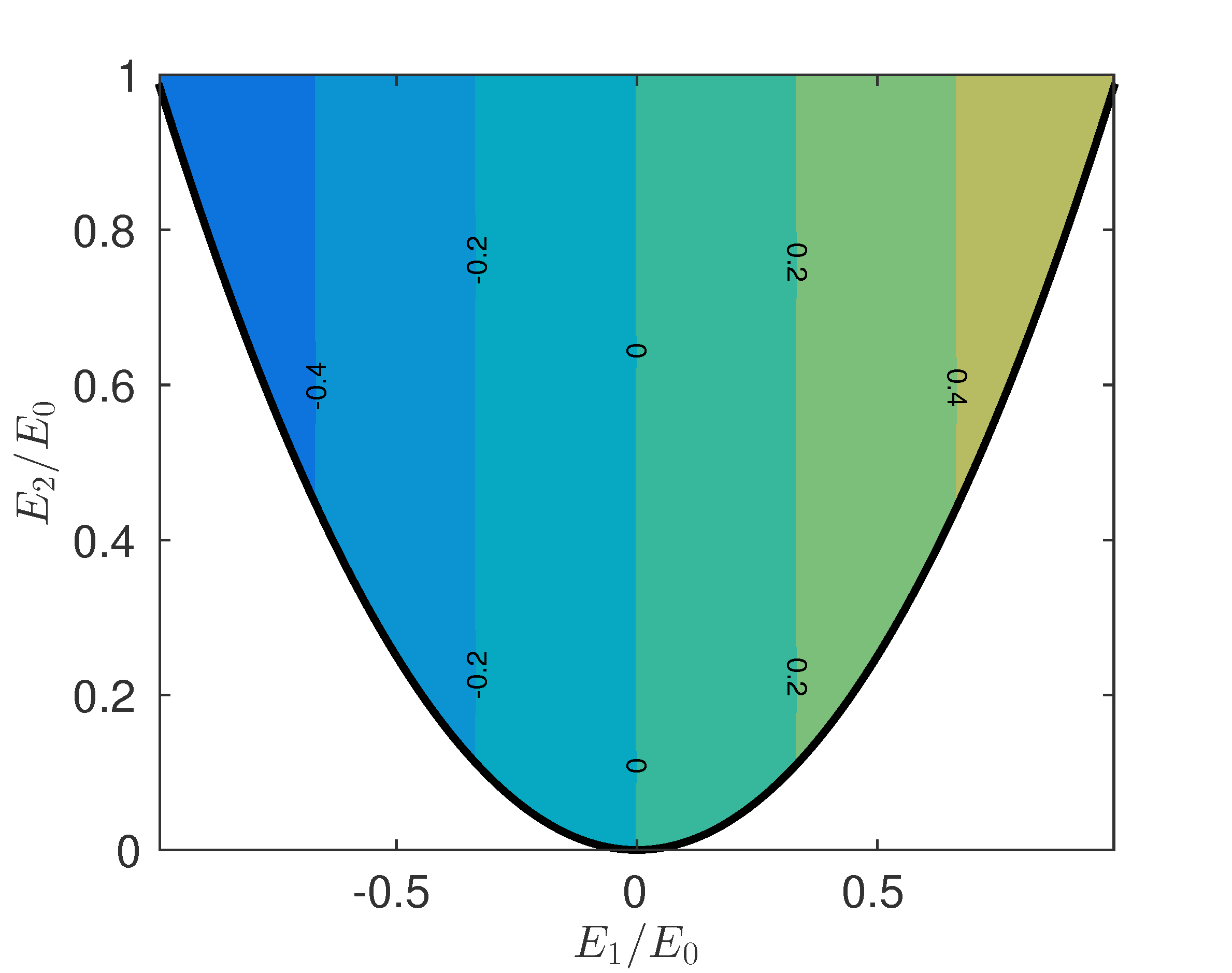}
  }\quad
  \subfloat[\MPtwo model]{
  \includegraphics[trim={2mm 0mm 14mm 8mm}, clip, width=0.30\textwidth]{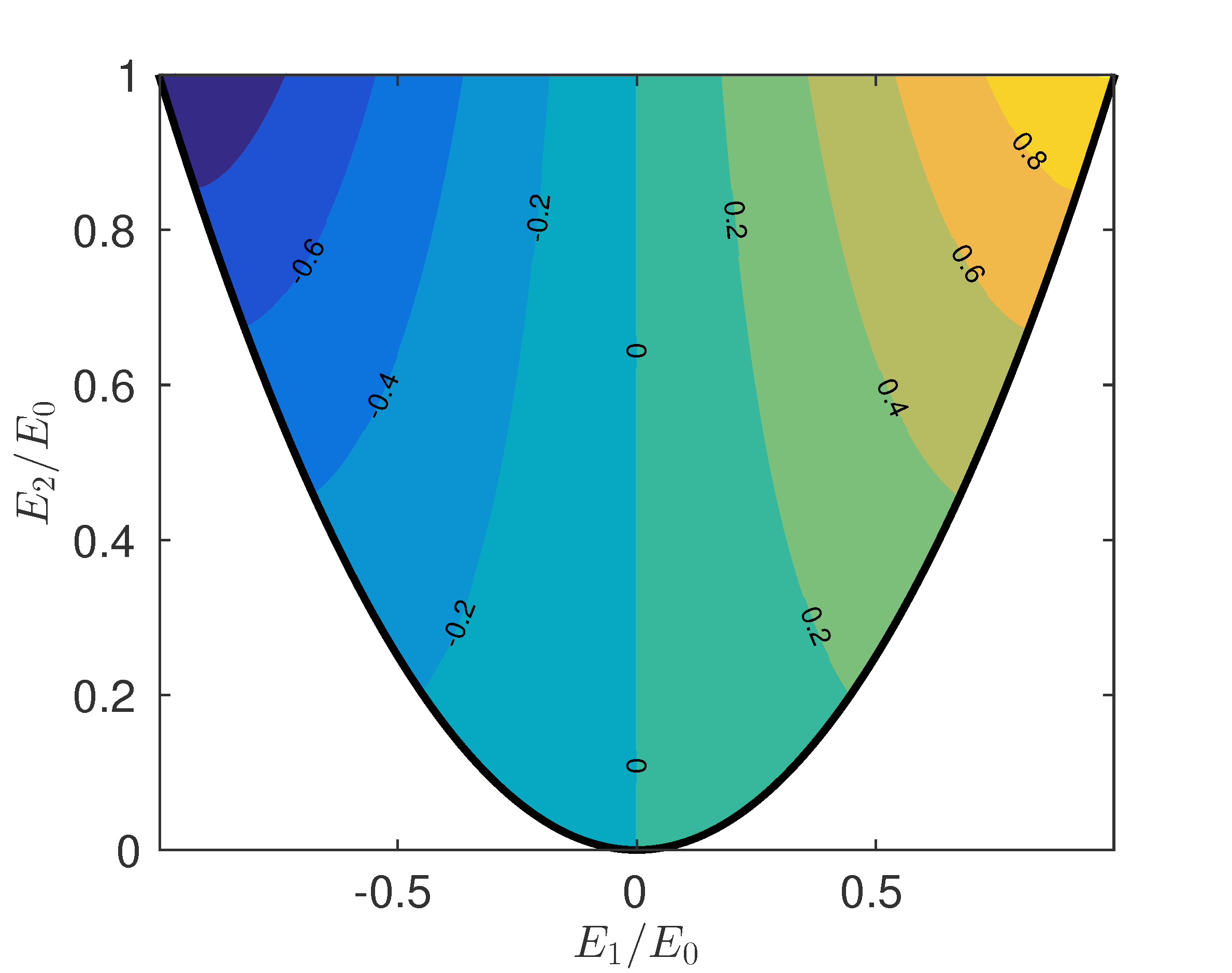}
  }\quad
  \subfloat[$M_2$ model]{
  \includegraphics[trim={0mm 0mm 14mm 8mm}, clip, width=0.30\textwidth]{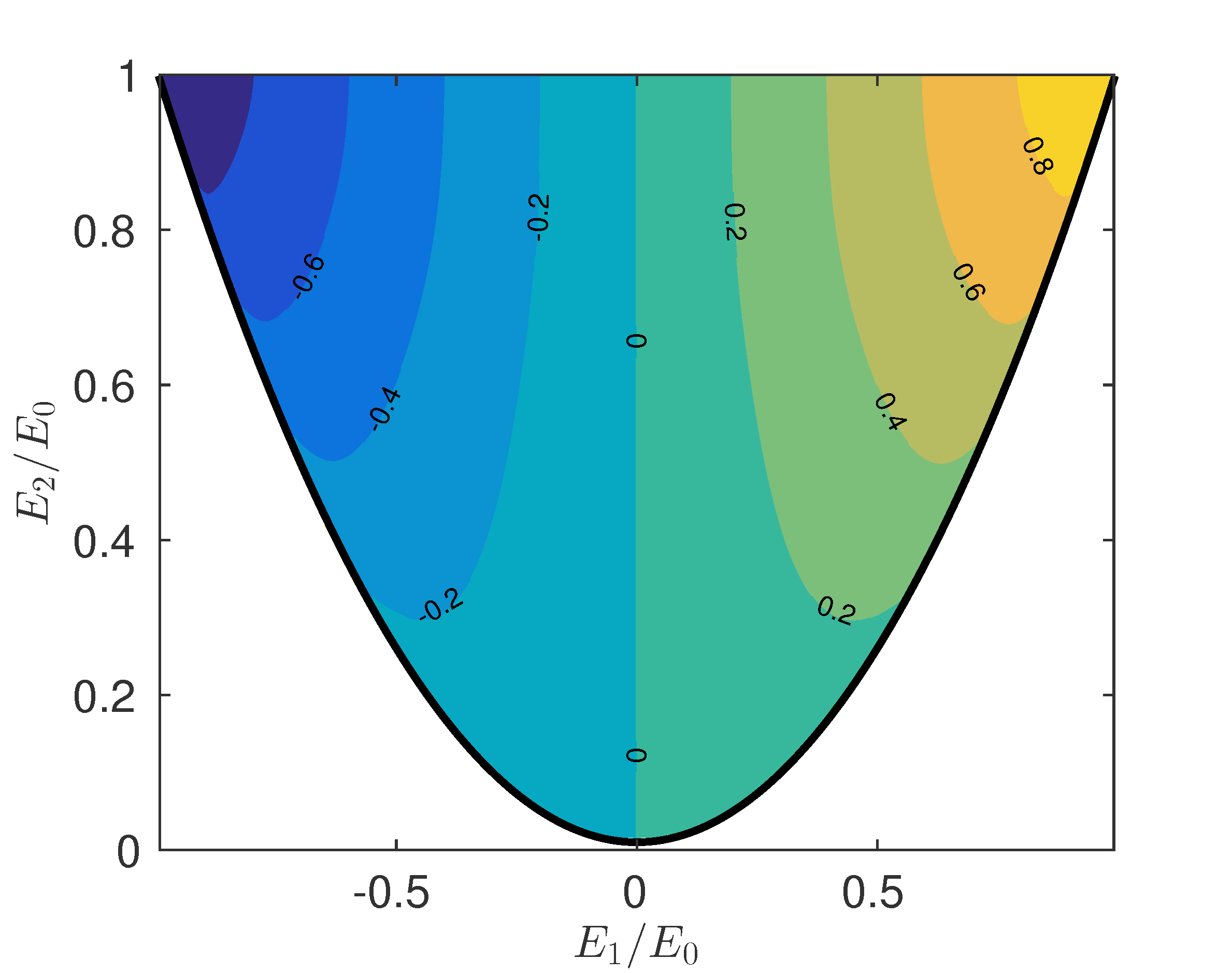}
  }
  \caption{\label{fig:comparsionE3}
      Contour of $\frac{E_3}{E_0}$ of the $P_2$ model, \MPtwo model, and $M_2$
      model.  The range of $\frac{E_3}{E_0}$ for the \MPtwo and $M_2$ model is
      $(-1,1)$ in the realizable domain, while that for the $P_2$ model is
      $(-3/5,3/5)$.
      }
\end{figure}
\begin{figure}[ht]
  \centering
  \subfloat[$\frac{E_2}{E_0}=1/5$]{
  \includegraphics[width=0.32\textwidth,height=0.16\textheight]{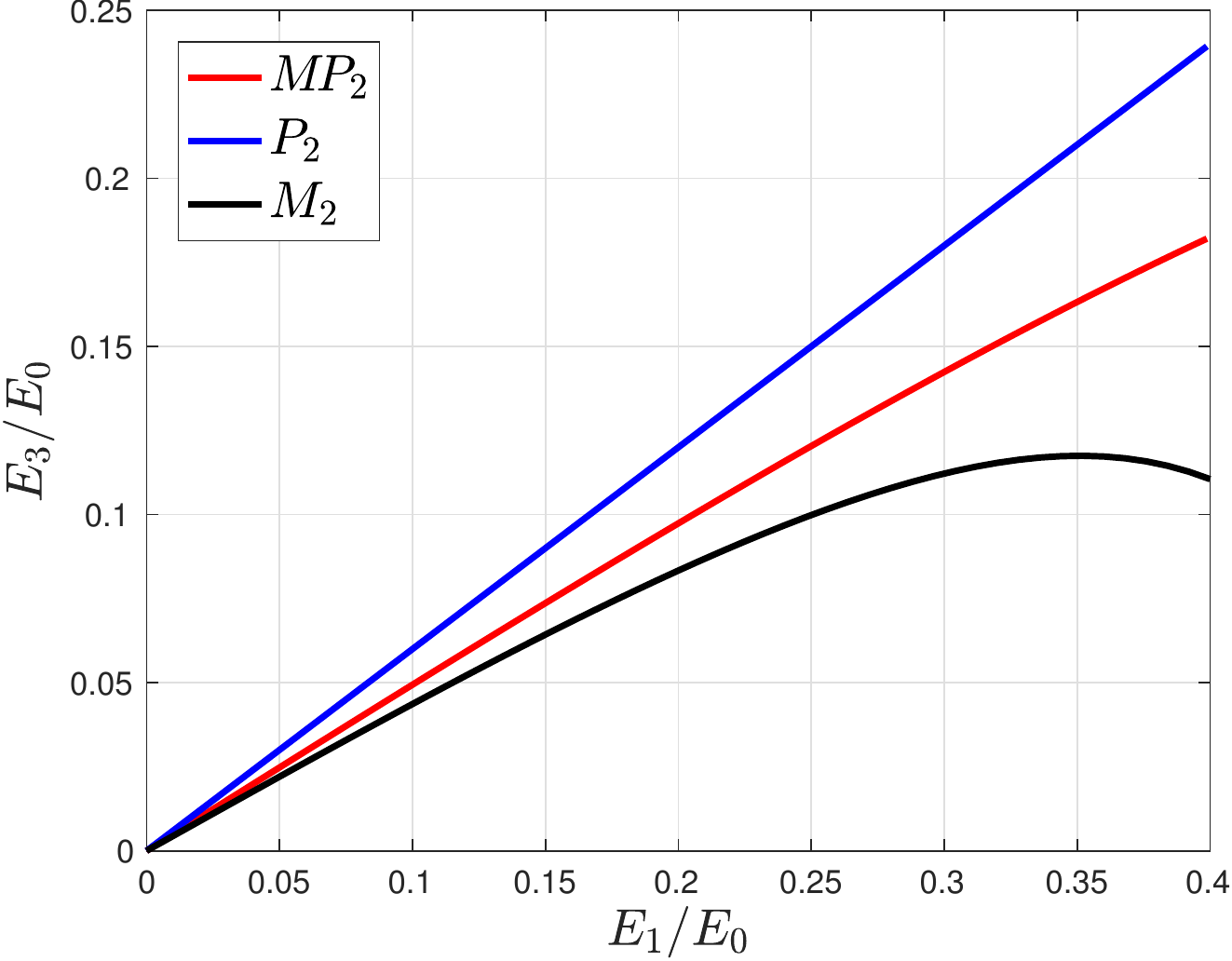}
  }
  \subfloat[$\frac{E_2}{E_0}=1/3$]{
  \includegraphics[width=0.32\textwidth,height=0.16\textheight]{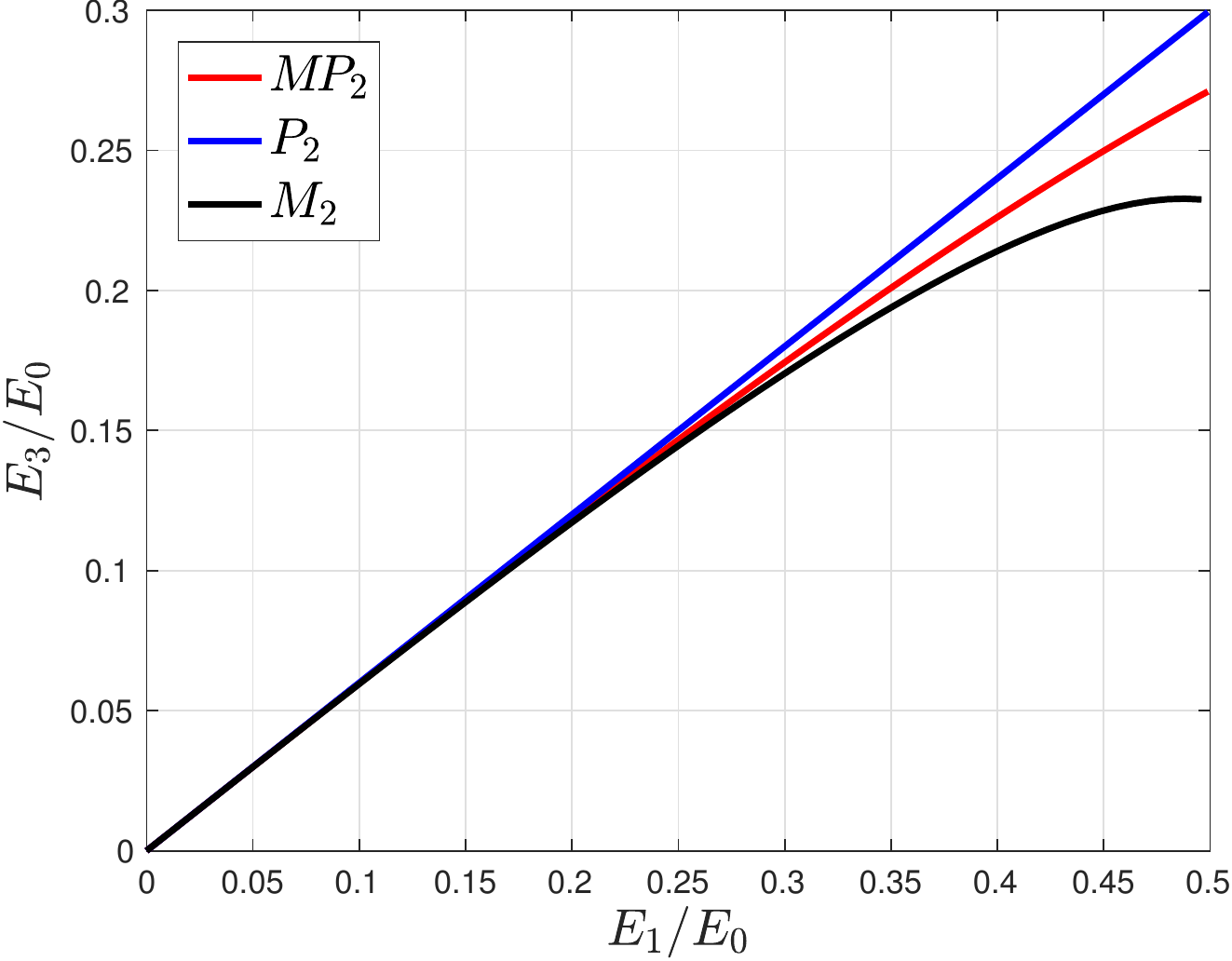}
  }
  \subfloat[$\frac{E_2}{E_0}=81/100$]{
  \includegraphics[width=0.32\textwidth,height=0.16\textheight]{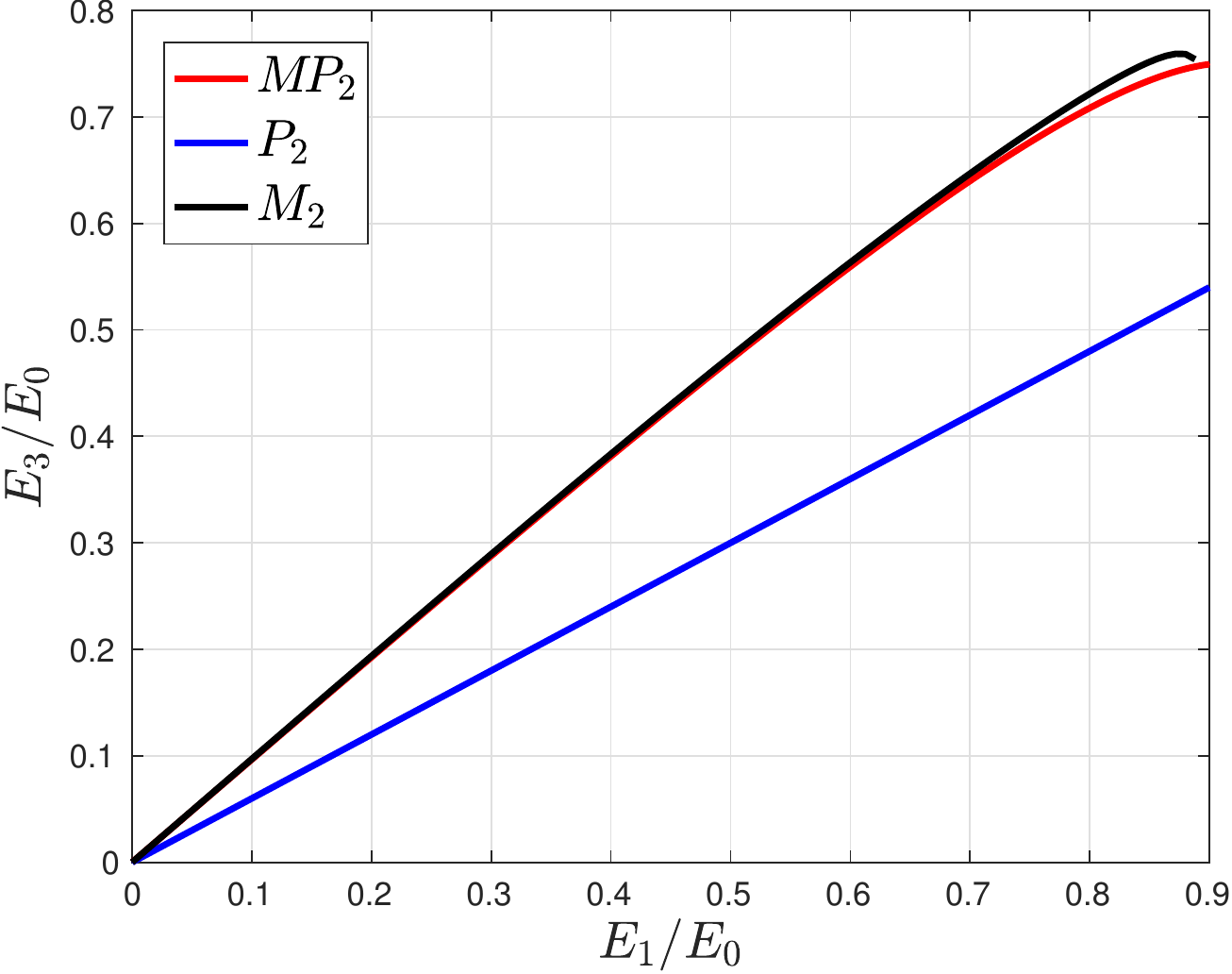}
  }
  \caption{\label{fig:comparsionE3_E2}
  Profile of $E_3/E_0$ with respect to $E_1 / E_0$ for given $E_2 / E_0$ for the
  $P_2$ model, the \MPtwo model, and the $M_2$ model.
  }
\end{figure}

Now we compare the \MPtwo model with the $P_2$ model and the $M_2$ model.
\Cref{fig:comparsionE3} presents the contour of $\frac{E_3}{E_0}$ in $\Omega_R$
for the three models, and \cref{fig:comparsionE3_E2} presents some cross
sections with respect to $\frac{E_2}{E_0}=1/5,1/3,81/100$.
The $P_2$ model is a linear system, hence its closed moment $E_3$ is linearly
dependent on the moments $E_0$, $E_1$, and $E_2$. The $M_2$ model is a nonlinear
system due to its complex ansatz. The \MPtwo model is also a nonlinear system
because of its weight function, which contains the information of $E_0$ and
$E_1$. Compared with the $P_2$ and $M_2$ model, the \MPtwo model falls in the
middle.
In the sense of the closed moment, the \MPtwo can be treated as an approximation
of the $M_2$ model.

\subsubsection{Hyperbolicity}
Denote the relevant moments and the flux by
\begin{align} \label{eq:moments}
  \bU = (E_0, E_1, E_2)^T,
  \quad
  \bF(\bU) = (E_1,E_2,E_3)^T,
\end{align}
then the moment equation is given by 
\begin{equation}\label{eq:M1P2_vec}
    \frac{1}{c} \pd{\bU}{t} + \pd{\bF(\bU)}{z} = \bm S,
\end{equation}
with $\bS=(S_0, S_1, S_2)^T$, and $S_k = \langle \mS(I) \rangle_k$.
We declare the main conclusion of this subsection in the following theorem.
\begin{theorem} \label{thm:hyperbolic}
    The \MPtwo system \eqref{eq:M1P2_vec} is hyperbolic for any $\bU\in\Omega_R$.
\end{theorem}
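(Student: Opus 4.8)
The plan is to read hyperbolicity off the flux Jacobian $\bJ=\partial\bF/\partial\bU$ and reduce the claim to a statement about the roots of its characteristic polynomial. Since $\bF=(E_1,E_2,E_3)^T$ and $\bU=(E_0,E_1,E_2)^T$, the first two components of $\bF$ are themselves components of $\bU$, so the first two rows of $\bJ$ are trivial and $\bJ$ is a companion matrix,
\begin{equation*}
  \bJ = \begin{pmatrix} 0 & 1 & 0 \\ 0 & 0 & 1 \\ \pd{E_3}{E_0} & \pd{E_3}{E_1} & \pd{E_3}{E_2} \end{pmatrix},
  \qquad
  p(\lambda) = \lambda^3 - \pd{E_3}{E_2}\lambda^2 - \pd{E_3}{E_1}\lambda - \pd{E_3}{E_0}.
\end{equation*}
A companion matrix is non-derogatory (its minimal and characteristic polynomials coincide), so it is diagonalizable over $\bbR$ if and only if $p$ has three distinct real roots; its right eigenvector for a root $\lambda$ is the Vandermonde vector $(1,\lambda,\lambda^2)^T$. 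Thus \cref{thm:hyperbolic} is equivalent to showing that $p$ has three distinct real roots for every $\bU\in\Omega_R$.

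Next I would exploit the homogeneity of the closure to cut the problem down to two parameters. Writing $r=E_1/E_0$ and $\theta=E_2/E_0$, the explicit formula \eqref{eq:E3ofM1P2} shows that $E_3/E_0$ is a function of $(r,\theta)$ alone and is affine in $\theta$ (this is precisely \cref{pro:M1P2}, item~\ref{itm:E3E2}), because every $\mE_k$ and $\beta$ depends only on $\alpha=\alpha(r)$ through \eqref{eq:alphachoose}. Differentiating $E_3=E_0\,h(r,\theta)$ and using $\partial r/\partial E_0=-r/E_0$, $\partial\theta/\partial E_0=-\theta/E_0$, together with the remaining elementary partials, expresses the three coefficients of $p$ as explicit functions of $(r,\theta)$ on the realizable triangle $\{r^2<\theta<1\}$; in particular $p$ is itself affine in $\theta$. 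The reflection $\mu\mapsto-\mu$ sends $E_1\mapsto-E_1$ and maps the system to itself, so it suffices to treat $r\ge 0$. The value $r=0$ furnishes a clean base case: there $\alpha=0$, the weight $\weight$ is uniform, and the model coincides with the symmetric-hyperbolic $P_2$ model, whose eigenvalues are real and distinct.

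The crux is to establish three distinct real roots for $r>0$, and here I see two routes. The first is entropy symmetrization: the ansatz minimizes $\int \hat I^2/\weight\dd\mu$ under the moment constraints, which suggests that the Hessian of the associated macroscopic entropy furnishes a symmetric positive-definite symmetrizer. The obstacle is that the weight $\weight$ itself depends on $\bU$ through $r$, so the usual fixed-weight symmetrization does not close and this argument alone does not settle the question --- which is presumably why a direct analysis is needed. The second, concrete route is to study the discriminant $\Delta(r,\theta)$ of the cubic $p$ and show $\Delta>0$ throughout the triangle. Because $p$ is affine in $\theta$, one can instead track the three roots as $\theta$ ranges over $(r^2,1)$ for fixed $r$, localizing them by evaluating the sign of $p$ at well-chosen test points --- most naturally at $\lambda=\pm 1$ and at $\lambda=r$, exploiting the physically expected bound that all characteristic speeds lie in $(-1,1)$ --- and invoking the intermediate value theorem to produce three real roots, with distinctness following from strict monotonicity of $p$ between the sign changes. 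I expect the genuine difficulty to lie entirely in this last step: the weight moments $\mE_k(\alpha)$ are rational functions of $\alpha$ of growing degree, so $\Delta$ (or the test-point values of $p$) is an unwieldy rational expression in $(r,\theta)$, and verifying its sign over the entire realizable triangle is the tedious technical heart of the proof. The sign and monotonicity information collected in \cref{pro:M1P2} is exactly the kind of input I would use to tame these expressions.
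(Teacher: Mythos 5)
Your overall framework --- the companion-matrix structure of $\bJ$, the reduction to ``$p(\lambda)$ has three distinct real roots,'' scale invariance in $E_0$, affinity of $E_3$ in $E_2$, and the reflection symmetry fixing the sign of $E_1$ --- is exactly the paper's skeleton. But the two steps that carry the actual content both fail. First, your base case is not what you claim: at $E_1=0$ the \MPtwo closure agrees with the $P_2$ closure in value ($E_3=0$ on that slice) but not to first order, because hyperbolicity concerns the Jacobian, whose entry $\pd{E_3}{E_1}$ is a transverse derivative, and $\alpha$ varies with $E_1$ (indeed $\pd{\alpha}{E_1}=-\tfrac{3}{4E_0}$ at $E_1=0$). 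A first-order expansion of \eqref{eq:E3ofM1P2} gives
\[
\pd{E_3}{E_1}\Big|_{E_1=0}=\frac{12}{35}+\frac{27}{35}\,\frac{E_2}{E_0},
\]
which equals the $P_2$ value $3/5$ only on the single slice $E_2=E_0/3$ (and is consistent with $G(0)=12/35$ in the paper's appendix). So you cannot import the symmetric hyperbolicity of $P_2$; the conclusion at $E_1=0$ is still true, but it requires \cref{pro:M1P2}\ref{itm:E3E1} (itself proved by a nontrivial computation) to see that $p(\lambda)=\lambda^3-\pd{E_3}{E_1}\lambda$ has a positive linear coefficient and hence three distinct real zeros.

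Second, for $E_1\neq0$ the test points you name cannot work even in principle. For a monic cubic whose roots are expected to lie in $(-1,1)$, the values $p(-1)<0<p(1)$ are automatic and carry no information; to force three real roots by the intermediate value theorem you need two \emph{interior} points $a<b$ with $p(a)>0>p(b)$, producing the sign pattern $-,+,-,+$ together with the behavior at $\pm\infty$. With $\lambda=\pm1$ and $\lambda=E_1/E_0$ you get (in the actual configuration of roots) only one sign alternation, hence a guarantee of only one real root. The paper's choice is $a=E_1/E_0$ and $b=0$ for $E_1<0$: the sign of $p(0)=-\pd{E_3}{E_0}$ is \cref{pro:M1P2}\ref{itm:E3E0}, and the sign of $p(E_1/E_0)$ is \cref{lem:plambda}, whose proof is the genuinely clever step missing from your sketch --- Euler's relation for the degree-one homogeneous closure eliminates $\pd{E_3}{E_0}$, affinity in $E_2$ (\cref{pro:M1P2}\ref{itm:E3E2}) shows $p(E_1/E_0)$ is independent of $E_2$, and one may therefore evaluate it at the special point $E_2=E_0\mE_2$, where $f_2=0$ and $E_3=E_0\mE_3$, reducing everything to \cref{pro:M1P2}\ref{itm:E3geE1}. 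Deferring all of this (and the appendix verification of \cref{pro:M1P2}) to ``the tedious technical heart'' leaves the proof without its load-bearing steps: what you have is a correct reduction plus a plan whose stated implementation is partly wrong (the base case) and partly insufficient (the test points).
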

Before the proof of the theorem, we introduce the following notations and lemma.
The Jacobian of the $M_1$-$P_2$ model is
\begin{equation} \label{eq:JacobianofM1P2}
  \bJ = \pd{\bF}{\bU} = 
  \begin{pmatrix}
      0             & 1             & 0 \\
      0             & 0             & 1 \\
      \pd{E_3}{E_0} & \pd{E_3}{E_1} & \pd{E_3}{E_2}
  \end{pmatrix}.
\end{equation}
Its characteristic polynomial is
\begin{equation} \label{eq:cpofM1P2} 
    p(\lambda)=\lambda^3 - \pd{E_3}{E_2} \lambda^2 - \pd{E_3}{E_1} \lambda -
    \pd{E_3}{E_0},
\end{equation} 
which satisfies the following property:
\begin{lemma} \label{lem:plambda}
    For any $\bU \in \Omega$, we have
    \begin{equation}\label{eq:pE1E0}
    \sgn\left( p\left( \frac{E_1}{E_0} \right) \right)=-\sgn(E_1).
    \end{equation}
\end{lemma}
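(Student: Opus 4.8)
The plan is to evaluate $p$ directly at the point $\lambda = E_1/E_0$ and recognize the result as (minus) the quantity whose sign is pinned down by part~\ref{itm:E3geE1} of \cref{pro:M1P2}. The starting observation is that, for the weight chosen in \eqref{eq:alphachoose}, one has $E_1/E_0 = \mE_1$, the first moment of $\weight$. Hence the lemma is really a statement about $p(\mE_1)$, and the whole task is to massage $p(\mE_1)$ into the form $\mE_3-\mE_1^3-(\mE_2-\mE_1^2)\pd{E_3}{E_2}$.

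The key to keeping this manageable is to \emph{not} compute $\pd{E_3}{E_0}$ and $\pd{E_3}{E_1}$ separately, since that would force me to differentiate $\alpha$ and every weight moment $\mE_k$ with respect to the conserved moments, which is the messiest route. Instead I would first note that $E_3=\langle\hat I\rangle_3$ is homogeneous of degree one in $(E_0,E_1,E_2)$: scaling the three prescribed moments by a common factor leaves $\alpha$, and therefore the weight and all the $\mE_k$, unchanged, while scaling $f_0$ and $f_2$ in \eqref{eq:expansioncoefficients} by that factor, so $\hat I$ and hence $E_3$ scale linearly. Euler's identity then yields
\[
  E_0\pd{E_3}{E_0} + E_1\pd{E_3}{E_1} + E_2\pd{E_3}{E_2} = E_3,
\]
which I can use to replace the two awkward derivatives $\pd{E_3}{E_0}$ and $\pd{E_3}{E_1}$ by $E_3$ and $\pd{E_3}{E_2}$ alone.

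Next I would invoke the linearity of $E_3$ in $E_2$, part~\ref{itm:E3E2} of \cref{pro:M1P2}, to write $E_3 = \pd{E_3}{E_2}\,E_2 + h$, where both $\pd{E_3}{E_2}$ and $h$ depend on $(E_0,E_1)$ only through $\alpha$ and the $\mE_k$; reading off \eqref{eq:E3ofM1P2} identifies $h = E_0\left(\mE_3 - \mE_2\,\pd{E_3}{E_2}\right)$. I would then multiply $p(\mE_1)$ by $E_0$, substitute Euler's identity and this decomposition, and repeatedly use $E_1 = E_0\mE_1$. The $E_2$-terms cancel against those produced by Euler, the $h$-term contributes the $\mE_3$ and $\mE_2$ pieces, and I expect everything to collapse to the clean identity
\[
  p\!\left(\tfrac{E_1}{E_0}\right) = -\left(\mE_3 - \mE_1^3 - (\mE_2 - \mE_1^2)\pd{E_3}{E_2}\right).
\]

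With this identity in hand the conclusion is immediate, since part~\ref{itm:E3geE1} of \cref{pro:M1P2} asserts precisely that the bracketed quantity has sign $\sgn(E_1)$, whence $p(E_1/E_0)$ carries the opposite sign, giving \eqref{eq:pE1E0}. The only real obstacle is bookkeeping in the third step: one must be careful that the weight moments are held constant under $\partial/\partial E_2$ (they depend on $E_1/E_0$, not on $E_2$) but not under $\partial/\partial E_0$ and $\partial/\partial E_1$, and it is exactly to sidestep the latter that the homogeneity shortcut is introduced. Granting \cref{pro:M1P2}, no new estimate is needed and the remaining work is elementary algebra; the nondegeneracy $\mE_2-\mE_1^2>0$ on $\Omega_R$ guarantees $\beta$ and the denominators in \eqref{eq:expansioncoefficients} are well defined throughout.
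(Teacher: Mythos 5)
Your proposal is correct and takes essentially the same route as the paper: Euler's identity from the degree-one homogeneity of $E_3$, the linearity of $E_3$ in $E_2$ (\cref{pro:M1P2}\ref{itm:E3E2}), reduction to the identity $p\left(E_1/E_0\right)=\mE_1^3-\mE_3-\left(\mE_1^2-\mE_2\right)\pd{E_3}{E_2}$, and the sign conclusion via \cref{pro:M1P2}\ref{itm:E3geE1}. The only cosmetic differences are that the paper exploits the linearity by showing $p\left(E_1/E_0\right)$ is independent of $E_2$ and then evaluating at the convenient point $E_2=E_0\mE_2$ (where $f_2=0$), whereas you substitute the affine decomposition $E_3=\pd{E_3}{E_2}E_2+E_0\left(\mE_3-\mE_2\pd{E_3}{E_2}\right)$ directly, and the paper disposes of the case $E_1=0$ separately via \cref{pro:M1P2}\ref{itm:E3E0} while your identity handles it uniformly.
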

\begin{proof}
    If $E_1=0$, then $p(0)=-\pd{E_3}{E_0} = 0$ thanks to
    \cref{pro:M1P2}\ref{itm:E3E0}. So \eqref{eq:pE1E0} holds.

    Next we assume $E_1\neq0$. Noticing that $\dfrac{E_3}{E_0}$ is a function
    of $\dfrac{E_1}{E_0}$ and $\dfrac{E_2}{E_0}$, we have
    \[
        \pd{E_3}{E_0} = \frac{E_3}{E_0} - \pd{E_3}{E_2}\frac{E_2}{E_0}
        -\pd{E_3}{E_1}\frac{E_1}{E_0}.
    \]
    Therefore, $p\left(\dfrac{E_1}{E_0}\right)$ can be simplified as 
    \[
        p\left(\frac{E_1}{E_0}\right) = 
        \left(\frac{E_1}{E_0}\right)^3-
        \pd{E_3}{E_2}\left(\frac{E_1}{E_0}\right)^2-
        \pd{E_3}{E_1}\frac{E_1}{E_0}
        -\pd{E_3}{E_0} 
        = 
        \left(\frac{E_1}{E_0}\right)^3- \frac{E_3}{E_0}-
        \left(\left(\frac{E_1}{E_0}\right)^2-\frac{E_2}{E_0}
        \right)\pd{E_3}{E_2}. 
    \]
    According to \cref{pro:M1P2}\ref{itm:E3E2}, i.e., $\pd{^2E_3}{E_2^2}=0$, we obtain
    \[ 
        \pd{p\left(E_1/E_0\right)}{E_2} = -\frac{1}{E_0}\pd{E_3}{E_2}
        + \frac{1}{E_0}\pd{E_3}{E_2} = 0,
    \]
    which indicates that $p\left(\dfrac{E_1}{E_0}\right)$ is independent of $E_2$.
    Hence, we can set $E_2$ as any available value, for instance $E_2=E_0\mE_2$,
    where $f_2=0$, and then $E_3=E_0\mE_3$.
    In this case, we have
    \begin{equation} \label{eq:pE1}
			p\left(\frac{E_1}{E_0}\right) = \mE_1^3-\mE_3-\left( \mE_1^2-\mE_2 \right)\pd{E_3}{E_2},
    \end{equation}
    whose sign is different from that of $E_1$, due to \cref{pro:M1P2}\ref{itm:E3geE1}.
    This completes the proof.
\end{proof}

\begin{proof}[Proof of \cref{thm:hyperbolic}]
    To prove the system \eqref{eq:M1P2_vec} is hyperbolic, that is, the matrix
    $\bJ$ is real diagonalizable, we need only to show that the characteristic
    polynomial $p\left(\lambda\right)$ has three distinct real zeros.

    If $E_1=0$, then $p(\lambda) = \lambda^3-\pd{E_3}{E_1}\lambda$.
    \Cref{pro:M1P2}\ref{itm:E3E1} shows $\pd{E_3}{E_1}\Big|_{E_1=0}>0$, so
    $p(\lambda)$ has three distinct zeros.

    If $E_1\neq0$, without loss of generality, we can assume that
    $E_1<0$. According to Lemma \ref{lem:plambda}, $p\left( \dfrac{E_1}{E_0}
    \right) > 0$. \Cref{pro:M1P2}\ref{itm:E3E0} shows that $p(0) = -\pd{E_3}{E_0} < 0$. Hence
    $p(\lambda)$ has three distinct zeros, which satisfy $\lambda_1 <
    \dfrac{E_1}{E_0} < \lambda_2 < 0 <\lambda_3$.
    This completes the proof.
\end{proof}


Denote the three distinct zeros of $p(x)$ by $\lambda_i(\bU)$, $i=1,2,3$
satisfying $\lambda_1 (\bU) < \lambda_2(\bU) < \lambda_3(\bU)$. 
One can directly deduce the following conclusion from 
\cref{thm:hyperbolic} and its proof.
\begin{deduction} \label{lem:lambda}
For any $\bU \in \Omega_R$, we have
\begin{equation} \label{eq:eigenvaluescondition}
  \lambda_1 (\bU) <\frac{E_1}{E_0}<\lambda_3 (\bU).
\end{equation}
\end{deduction}
%

\subsubsection{Riemann problem}
The characteristic structure of the moment model is fundamental for
further investigations into the behavior of the solution of the system.
Meanwhile, the solution structure of the Riemann problem is instructional for
studying the approximate Riemann solver, which is the basis of the numerical
methods using Godunov type schemes.
Here we investigate the characteristic structure of the \MPtwo model 
by the following Riemann problem:
\begin{equation}\label{eq:Riemann}
    \left\{ 
        \begin{aligned}
            &\frac{1}{c} \pd{\bU}{t} + \pd{\bF(\bU)}{z} = 0,\\
            &\bU(z,t=0) = \left\{ \begin{array}{ll}
                \bU^L,  &   \text{ if } z<0,\\
                \bU^R,  &   \text{ if } z>0.
            \end{array} \right.
        \end{aligned}
        \right.
\end{equation}

Recall that the Jacobian $\bJ$ has three distinct eigenvalues. For the eigenvalue
$\lambda_k$, $k=1,2,3$, the corresponding eigenvector is 
\begin{equation}\label{eq:eigenvectors}
  \br_{k}=(1,\lambda_k,\lambda_k^2)^T.
\end{equation}
We have the following conclusion on the wave type of each characteristic field.
\begin{theorem}\label{thm:wavetype}
    The $1$- and $3$-characteristic field are genuinely nonlinear, and the
    $2$-characteristic field is neither genuinely nonlinear nor linearly
    degenerate.
\end{theorem}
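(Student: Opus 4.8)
The plan is to compute, for each field $k$, the scalar
$D_k \triangleq \br_k\cdot\nabla_{\bU}\lambda_k = \pd{\lambda_k}{E_0}+\lambda_k\pd{\lambda_k}{E_1}+\lambda_k^2\pd{\lambda_k}{E_2}$,
since the $k$-field is genuinely nonlinear exactly when $D_k\neq0$ for all $\bU\in\Omega_R$ and linearly degenerate when $D_k\equiv0$. Rather than differentiate the eigenvalues directly, I would use that each $\lambda_k$ is a \emph{simple} root of the characteristic polynomial $p(\lambda;\bU)$ in \eqref{eq:cpofM1P2} (the roots are distinct by \cref{thm:hyperbolic}). Implicit differentiation of $p(\lambda_k(\bU);\bU)\equiv0$ along $\br_k$ gives $D_k=-\dfrac{(\br_k\cdot\nabla_{\bU}p)\big|_{\lambda=\lambda_k}}{p'(\lambda_k)}$, where $\nabla_{\bU}p$ holds $\lambda$ fixed and $p'(\lambda_k)\neq0$ because the roots are simple. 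This reduces everything to the numerator, which is a fixed linear combination of the second derivatives of the closure $E_3$.

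The second step is to simplify that numerator using the structure of $E_3$. First, the closure \eqref{eq:E3ofM1P2} is homogeneous of degree one in $\bU$ (the weight moments $\mE_k$ depend only on the ratio $E_1/E_0$ through \eqref{eq:alphachoose}, and $f_2$ is degree one), and by \cref{pro:M1P2}\ref{itm:E3E2} it is affine in $E_2$; hence one may write $E_3=E_0\,G(s)+E_2\,H(s)$ with $s\triangleq E_1/E_0$ and $H=\pd{E_3}{E_2}$. Substituting this into $\br_k\cdot\nabla_{\bU}p$ and collecting terms, I expect the numerator to factor as $\br_k\cdot\nabla_{\bU}p=\dfrac{(\lambda_k-s)}{E_0}\,B_k$ with $B_k=(\lambda_k-s)\,Q+2H'(\lambda_k^2-E_2/E_0)$ and $Q=G''+(E_2/E_0)H''$, so that $D_k=\dfrac{(\lambda_k-s)\,B_k}{E_0\,p'(\lambda_k)}$. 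The appearance of the factor $\lambda_k-s=\lambda_k-E_1/E_0$ is the structural key.

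The third step separates the three fields. For $k=1,3$, \cref{lem:lambda} gives $\lambda_1<E_1/E_0<\lambda_3$, so $\lambda_k-s\neq0$; since $E_0>0$ and $p'(\lambda_k)\neq0$, genuine nonlinearity reduces to showing the bracket $B_k\neq0$ throughout $\Omega_R$. For $k=2$, note that at $E_1=0$ the symmetry recorded in \cref{pro:M1P2}\ref{itm:E3E0} and \ref{itm:E3E2_1} forces $\pd{E_3}{E_0}=\pd{E_3}{E_2}=0$, so $p(\lambda)=\lambda^3-\pd{E_3}{E_1}\lambda$ has middle root $\lambda_2=0=E_1/E_0$; hence $\lambda_2-s=0$ and $D_2=0$ there, which already shows the $2$-field is not genuinely nonlinear. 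To show it is also not linearly degenerate, I would exhibit an asymmetric state ($E_1\neq0$) at which $\lambda_2\neq E_1/E_0$ (by the strict ordering established in the proof of \cref{thm:hyperbolic}) and $B_2\neq0$, giving $D_2\neq0$.

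The main obstacle is the sign/nonvanishing of the bracket $B_k$ for the extreme fields $k=1,3$ over the whole realizable domain. I would handle this by making $G$, $H$ and their derivatives fully explicit: everything depends only on $\alpha\in(-1,1)$ through the weight moments $\mE_k(\alpha)$ of \eqref{eq:weight}--\eqref{eq:alphachoose} and on the single ratio $E_2/E_0$, so $B_k$ becomes an explicit algebraic function of two scalars whose sign I would pin down using the monotonicity and sign information already collected in \cref{pro:M1P2} together with the localization $\lambda_1<E_1/E_0<\lambda_3$. This last bookkeeping is heavy, and---consistently with the treatment of \cref{pro:M1P2}---I would relegate the routine algebra to an appendix and keep only the factorization and the field-by-field argument in the main text.
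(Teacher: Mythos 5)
Your reduction is the same as the paper's: implicit differentiation of $p(\lambda_k(\bU);\bU)=0$ along $\br_k$ gives $\nabla_{\bU}\lambda_k\cdot\br_k = q(\lambda_k)/p'(\lambda_k)$ with $q(\lambda)=\sum_{i,j=0}^2\frac{\partial^2 E_3}{\partial E_i\partial E_j}\lambda^{i+j}$ (watch your signs: $q(\lambda_k)=-\left(\br_k\cdot\nabla_{\bU}p\right)\big|_{\lambda=\lambda_k}$, so your two displayed expressions for $D_k$ are inconsistent by a factor $-1$), and your factorization is correct: writing $E_3=E_0\,G(s)+E_2\,H(s)$ with $s=E_1/E_0$, one indeed gets $q(\lambda)=\frac{\lambda-s}{E_0}\left[(\lambda-s)\,Q+2H'\left(\lambda^2-\frac{E_2}{E_0}\right)\right]$, $Q=G''+(E_2/E_0)H''$. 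That factorization is a genuine refinement over the paper: it explains structurally why the $2$-field degenerates on the axis $E_1=0$ (there $\lambda_2=s=0$ kills the prefactor), whereas the paper simply evaluates $q(\lambda_2)=\pd{^2E_3}{E_0^2}=0$ at such states.

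The gap is in your final step. For $k=1,3$ everything hinges on $B_k\neq0$ throughout $\Omega_R$, and you assert this can be pinned down "using the monotonicity and sign information already collected in \cref{pro:M1P2}" together with $\lambda_1<E_1/E_0<\lambda_3$. It cannot, at least not as stated: \cref{pro:M1P2} controls only first derivatives of the closure (plus $\pd{^2E_3}{E_2^2}=0$), while $B_k$ is built from $H'$ and $Q$, i.e.\ precisely the second-derivative data about which \cref{pro:M1P2} says nothing. Concretely, these are polynomial--$\mathrm{atanh}$ combinations of $\alpha$ of the same type as $C_3$, $C_4$, $C_5$ in the paper's appendix, and certifying their sign over all of $\Omega_R$ is the actual content of the theorem; the paper itself does not do it by hand, but verifies $q(\lambda_1)<0$ and $q(\lambda_3)>0$ on $\Omega_R$ by cylindrical algebraic decomposition in Maple. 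So your outline defers exactly the step that required a computer-algebra certificate and offers no replacement argument. (A smaller, easily fixed omission: to conclude the $2$-field is not linearly degenerate you must exhibit a concrete state with $q(\lambda_2)\neq0$; the paper uses $E_1/E_0=\pm1/10$, $E_2/E_0=1/3$, giving $q(\lambda_2)\approx\pm0.0293$.)
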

\begin{proof}
    Denote $\Delta_k := \nabla_{\bU}\lambda_k\cdot \br_{k}$, $k=1,2,3$. Noticing
    that $\lambda_k$ is a zero of $p(\lambda)=0$, we can obtain the following
    formula by the implicit differentiation
    \[
        \Delta_k = \frac{1}{p'(\lambda_k)}
        \sum_{i,j=0}^2\frac{\partial^2E_3}{\partial E_i\partial
        E_j}\lambda_k^{i+j}, \quad k=1,2,3.
    \]
    The fact that $p(\lambda)$ has three distinct zeros indicates
    $p'(\lambda_1)>0$, $p'(\lambda_2)<0$ and $p'(\lambda_3)>0$.
    Hence, we need only to study the sign of 
    \begin{equation}\label{eq:def_q}
        q(\lambda) = \sum_{i,j=0}^2\frac{\partial^2E_3}{\partial E_i
        \partial E_j}\lambda^{i+j},
    \end{equation}
    on $\lambda = \lambda_k$, $k=1,2,3$.
    \Cref{pro:M1P2}\ref{itm:E3E2} shows $\pd{^2E_3}{E_2^2}=0$, so $q(\lambda)$
    is a polynomial of degree $3$.

    For the $1$- and $3$-characteristic fields, we can directly solve the values
    of $\lambda_1$ and $\lambda_3$ because the degree of $p(\lambda)$ is $3$,
    and then substitute them into $q(\lambda)$ to check their sign for
    $\bU\in\Omega_R$.  However, the calculation is too complex. With the help of
    the cylindrical algebraic decomposition in computer algebraic
    \cite{collins1975quantifier} and its implementation in Maple\footnote{Maple
    is a trademark of Waterloo Maple Inc.},  we validate that $q(\lambda_1)<0$
    and $q(\lambda_3)>0$ for all $\bU\in\Omega_R$.

    For the $2$-characteristic field, 
    if we set $E_1=0$, then $q(\lambda_2) = \pd{^2E_3}{E_0^2}=0$.  If we set
    $E_1 / E_0=\pm 1/10$ and $E_2 / E_0= 1/3$, then $q(\lambda_2)\approx \pm
    0.0293 \gtrless0$.
    Thus the sign of $q(\lambda_2)$ varies over
    $\Omega_R$, which indicates that the $2$-characteristic field is neither
    genuinely nonlinear nor linearly degenerate.
\end{proof}

\begin{figure}[ht]
  \begin{center}
      \subfloat[$q(\lambda_1)$]{
          \includegraphics[width=0.33\textwidth,height=0.16\textheight]{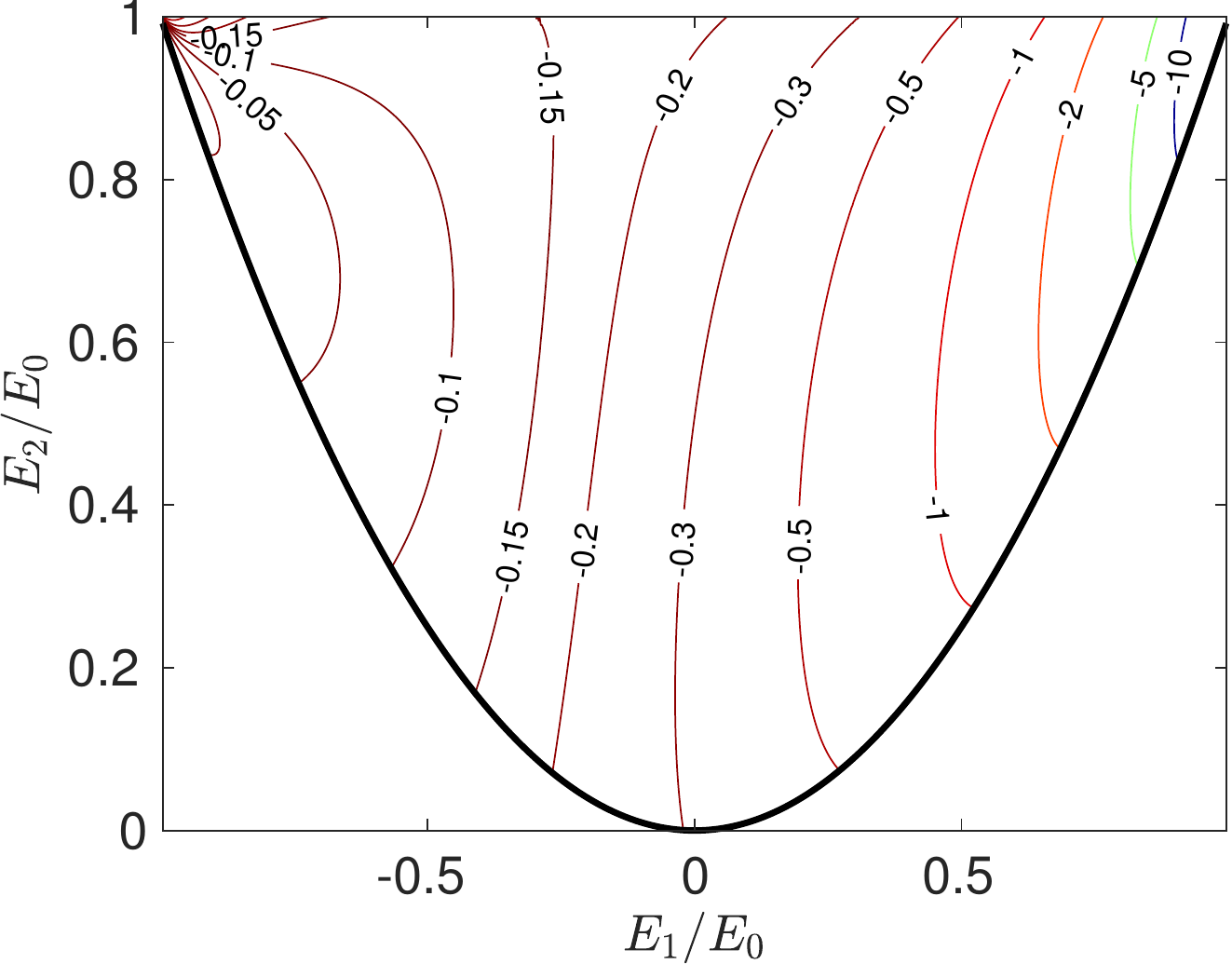}
      }
      \subfloat[$q(\lambda_2)$]{
          \includegraphics[width=0.33\textwidth,height=0.16\textheight]{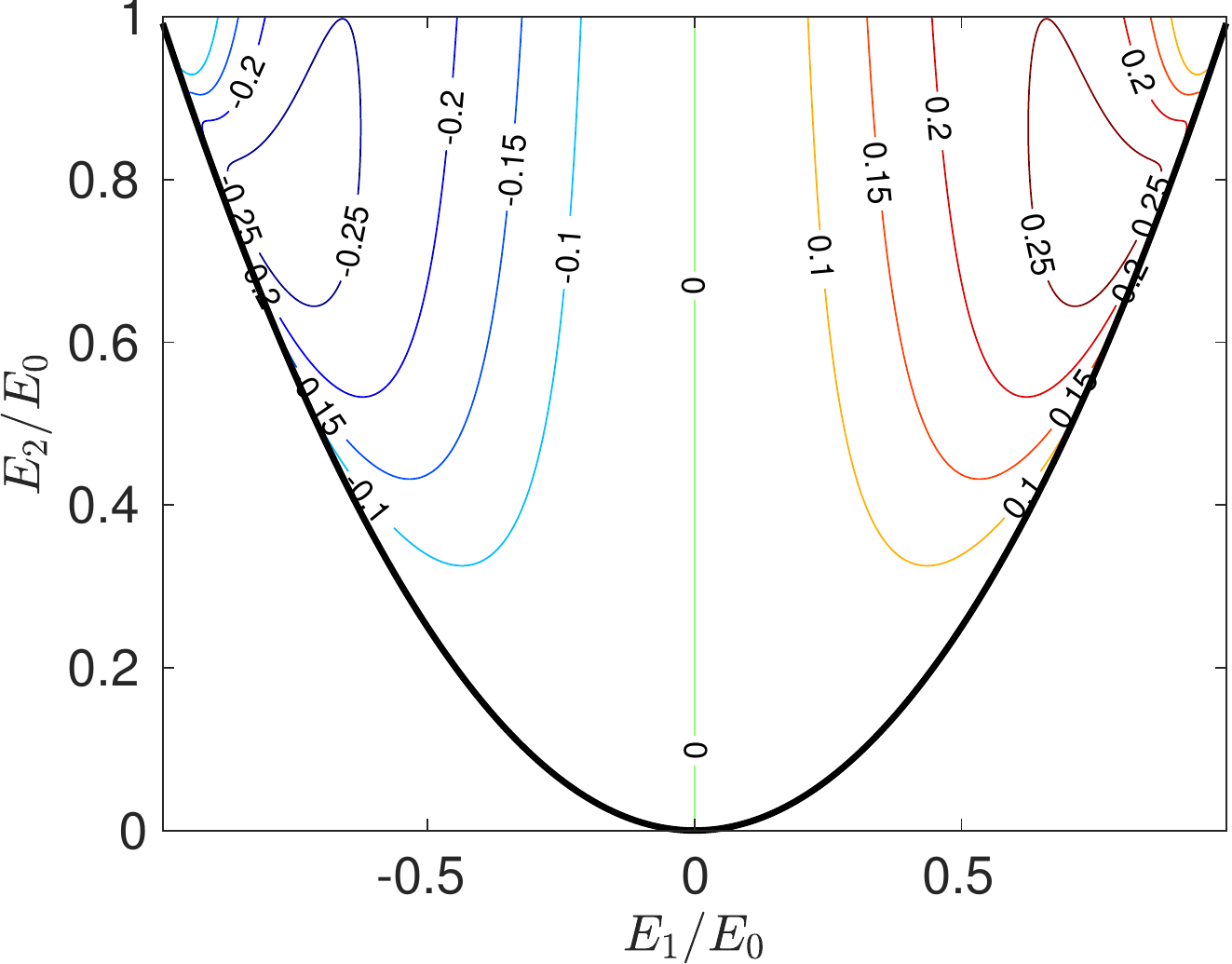}
      }
      \subfloat[$q(\lambda_3)$]{
          \includegraphics[width=0.33\textwidth,height=0.16\textheight]{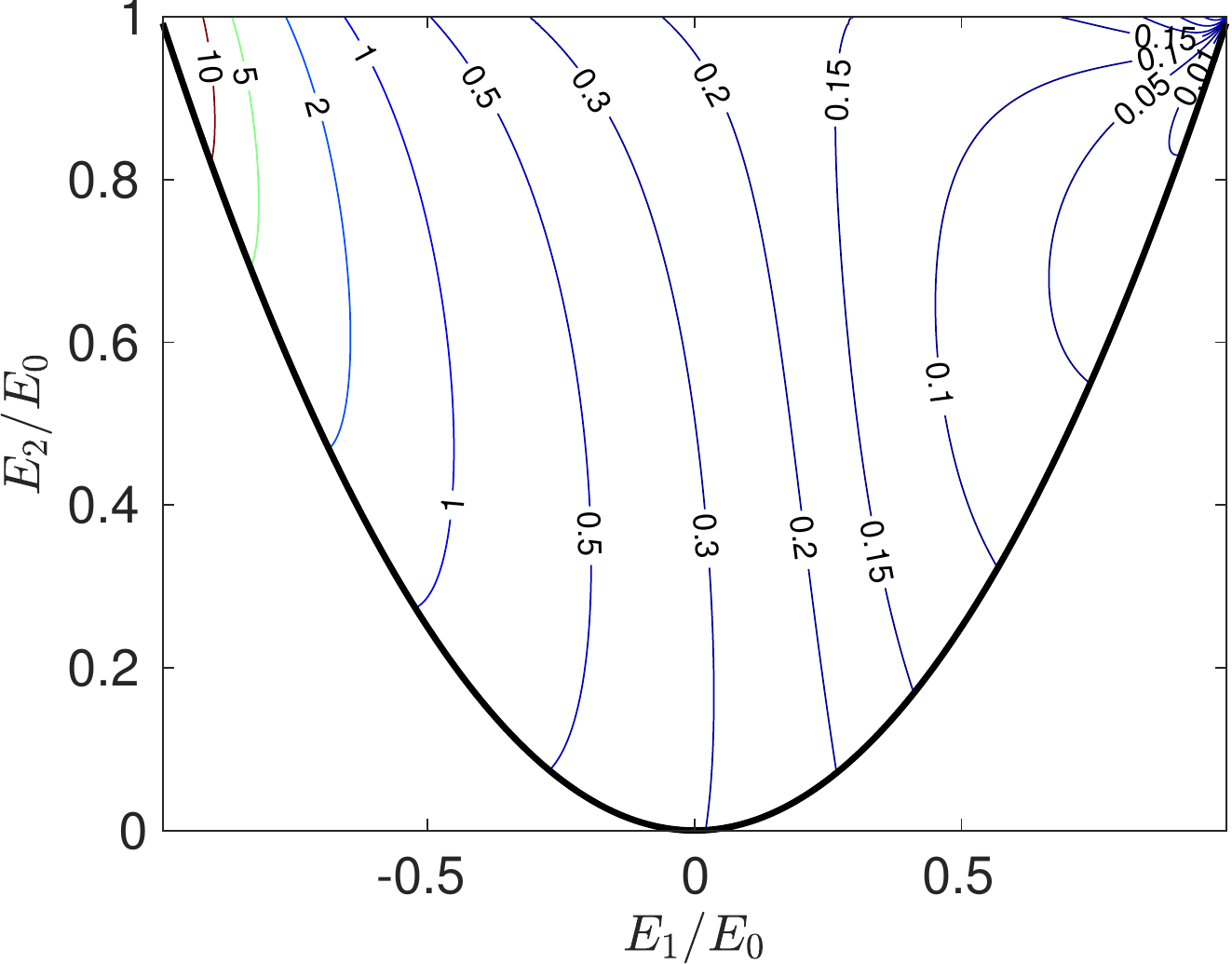}
      }
  \end{center}
  \caption{Contour of $q(\lambda)$ with $\lambda=\lambda_k$,
      $k=1,2,3$.\label{fig:delta13}.}
\end{figure}
To present a visualization on the sign of $q(\lambda)$ in \eqref{eq:def_q}, we
plot the contour of $q(\lambda_k)$, $k=1,2,3$ with $E_0=1$ as functions of $E_1$
and $E_2$ in \cref{fig:delta13}. The contour lines in the figures agree with the
conclusion in the \cref{thm:wavetype}.

The $1$- and $3$-characteristic fields are genuinely nonlinear, thus each field
associates with one wave, whose type is either rarefaction wave or shock.
However, for the $2$-characteristic fields, it corresponds to the nonconvex flux
\cite[Chapter 16.1]{leveque2002finite}, and one field may associate with more
than one wave. Investigation on nonconvex wave requires too many tools, so we
will not discuss the $2$-field too much. Below, we study the rarefaction waves
and shocks for the $1$- and $3$-fields.
%

\paragraph{Rarefaction waves}
For the system \eqref{eq:Riemann}, if two states $\bU^L$ and $\bU^R$ are
connected by a rarefaction wave in a genuinely nonlinear field, then 
the following two conditions must be satisfied:
\begin{enumerate}
    \item Constancy of the generalised Riemann invariants across the wave.
        That is to say, the integral curve $\tilde{\bU}(\zeta)=(\tilde{E}_0(\zeta),
        \tilde{E}_1(\zeta),\tilde{E}_2(\zeta))^T$ satisfies
        \begin{equation}\label{eq:integralcurve}
            \tilde{\bU}'(\zeta) = C(\zeta)\br_k(\tilde{\bU}(\zeta)),\quad k=1,3,
        \end{equation}
        where $C(\zeta)$ is a nonzero scalar factor with a fixed sign for any
        $\zeta$;
    \item Consistent condition:
        \begin{equation}\label{eq:consistent}
            \lambda_k(\bU^L)<\lambda_k(\bU^R).
        \end{equation}
\end{enumerate}

Since the characteristic speed is $\lambda_k=\lambda_k(\tilde{\bU})$, along the
integral curve, we have
\[
\dfrac{1}{C(\zeta)}\od{\lambda_k}{\zeta} = 
\nabla_{\tilde{\bU}}\lambda_k\cdot\br_k
  \left\{
    \begin{array}{ll}
      <0, &   k=1,\\
      >0, &   k=3,
    \end{array}\right.
\]
according to the proof of the \cref{thm:wavetype}.
Noticing $\bU^L$ and $\bU^R$ lie in an integral curve, we let $\bU^L =
\tilde{\bU}(0)$, then these exists a $\zeta_*$ such that $\bU^R =
\tilde{\bU}(\zeta_*)$.
According to \eqref{eq:consistent}, we have 
\begin{equation}\label{eq:signOfZeta}
  C(\zeta_*) \zeta_* \left\{
    \begin{array}{ll}
      <0, &   k=1,\\
      >0, &   k=3.
    \end{array} \right.
\end{equation}

For the $3$-rd characteristic field,
\eqref{eq:integralcurve} and \eqref{eq:signOfZeta} indicates
\(
\od{\tilde{E}_0(\zeta)}{\zeta} = C(\zeta), 
\)
and thus $$E_0^L < E_0^R.$$
Let $\tilde{u}(\zeta) = \dfrac{\tilde{E}_1(\zeta)}{\tilde{E}_0(\zeta)}$
and
$\tilde{p}(\zeta) = \tilde{E}_2(\zeta) -
\dfrac{\tilde{E}_1^2(\zeta)}{\tilde{E}_0(\zeta)}$.
By \eqref{eq:integralcurve}, we obtain
\begin{equation*}
    \begin{aligned}
\od{\tilde{u}(\zeta)}{\zeta} &=
\dfrac{\tilde{E}_0(\zeta)\od{\tilde{E}_1(\zeta)}{\zeta}
-\od{\tilde{E}_0(\zeta)}{\zeta}\tilde{E}_1(\zeta)
}{\tilde{E}^2_0(\zeta)} 
= C(\zeta)\left(\lambda_3 - \tilde{u}
\right),\\
\od{\tilde{p}(\zeta)}{\zeta}&=C(\zeta)(\lambda_3-\tilde{u})^2.
    \end{aligned}
\end{equation*}
According to \eqref{eq:eigenvaluescondition}, $\lambda_3-\tilde{u}>0$.
Thus we can obtain 
\[
  u^L < u^R, \quad p^L<p^R. 
\]

Analogously, for the $1$-st characteristic wave, we have
\[
  E_0^L > E_0^R,\quad u^L < u^R, \quad p^L > p^R.
\]

\paragraph{Shock waves}
If two states $\bU^L$ and $\bU^R$ are connected by a shock in a genuinely
nonlinear field, then we have the following two relationships:
\begin{enumerate}
    \item Rankine-Hugoniot condition:
        \begin{equation} \label{eq_shock_RH}
            \begin{aligned}
                E_1^R-E_1^L &= s_k (E_0^R-E_0^L),\\
                E_2^R-E_2^L &= s_k (E_1^R-E_1^L),\\
                E_3^R-E_3^L &= s_k (E_2^R-E_2^L), ~k=1,3,
            \end{aligned}
        \end{equation}
        where $s_k$ is the speed of the shock;
    \item Entropy condition:
        \begin{equation}\label{eq:entropycondition}
            \lambda_k^L>s_k>\lambda_k^R,\quad k=1,3.
        \end{equation}
\end{enumerate}
The first two equations in \eqref{eq_shock_RH} indicate
\begin{equation}\label{eq:shock_sk}
  s_k = \frac{E_1^R-E_1^L}{E_0^R-E_0^L} = \frac{E_2^R-E_2^L}{E_1^R-E_1^L}.
\end{equation}
Let $u = \dfrac{E_1}{E_0}$ and $p = E_2 - \dfrac{E_1^2}{E_0}$.
For the $3$-rd characteristic field, noticing
$s_3 > \lambda_3^R > u^R$, we can obtain
\begin{equation}\label{eq:shock_u}
  (E_0^R-E_0^L)(u^R-u^L)  
  = \frac{(E_0^R-E_0^L)^2}{E_0^L} (s_3-u^R)
  > 0,
\end{equation}
and
\begin{equation}\label{eq:shock_p}
  \begin{aligned}
    (E_0^R-E_0^L) (p^R-p^L) &= (E_1^R-E_1^L)^2-(E_0^R-E_0^L)
    \left(\frac{(E_1^R)^2}{E_0^R}
      -\frac{(E_1^L)^2}{E_0^L}\right) \\
      &= \frac{1}{E_0^LE_0^R}\left( E_0^R E_1^L -E_0^L E_1^R \right)^2 > 0.
  \end{aligned}
\end{equation}
The remaining work is to study the sign of $E_0^R-E_0^L$ for the
$3$-characteristic field. 

Denote the Hugoniot curves by $\tilde{\bU}(\tau)$, with $\tilde{\bU}(0)=\bU^L$
and $\tilde{\bU}(1)=\bU^R$. For a given $\tau_*\in[0,1)$ and a sufficient small
$\epsilon$, let $\bU^l=\tilde{\bU}(\tau_*)$ and 
$\bU^r=\tilde{\bU}(\tau_*+\epsilon)$, then $\bU^l$ and $\bU^r$ also satisfy
the Rankine-Hugoniot condition \eqref{eq_shock_RH} and the entropy condition
\eqref{eq:entropycondition}.

Let $d=E_0^r-E_0^l$, then $d\ll 1$ and
\[
   E_1^r=E_1^l + s_3 d,\quad E_2^r=E_2^l + s_3^2d.
\]
According to the Taylor expansion, we have
\begin{align} 
  \lambda_3^r-\lambda^l_3=d\left(
  \left(\pd{\lambda_3}{E_0}\right)^l+\left(\pd{\lambda_3}
  {E_1}\right)^ls_3
  +\left(\pd{\lambda_3}{E_2}\right)^ls_3^2
  \right)  + O(d^2).
\end{align}
Thus, $|\lambda_3^l-\lambda_3^r|\ll 1$, and $|s_3-\lambda_3^l|\ll 1$.
With 
\begin{align} 
\left(\pd{\lambda_3}{E_0}\right)^l+\left(\pd{\lambda_3}
{E_1}\right)^l\lambda_3^l
+\left(\pd{\lambda_3}{E_2}\right)^l(\lambda^l_3)^2  = (\nabla_{\bU}
\lambda_3 \cdot \br_3)|_{\bU=\bU^l} = \Delta_3 |_{\bU=\bU^l}
> 0,
\end{align}
and the entropy condition, we obtain that
\[  
  d < 0,\quad E_0^r<E_0^l.
\]
Because the upper relation holds for any $\tau\in[0,1)$, we have that 
for the $3$-characteristic field
\begin{equation}
    E_0^L>E_0^R,\quad u^L > u^R,\quad p^L > p^R.
\end{equation}
Analogously, for the $1$-characteristic field, we have
\begin{equation}
    E_0^L<E_0^R,\quad u^L > u^R,\quad p^L < p^R.
\end{equation}

We summarize all the conclusions on genuinely nonlinear waves in the following
theorem to close this section.
\begin{theorem}
    For the $1$- and $3$-characteristic fields, the variables
    $E_0$, $u=\frac{E_1}{E_0}$ and $p=E_2-\frac{E_1^2}{E_0}$ on both sides of the wave 
    have the relationship with the wave type as in the \cref{tab:wavetype}.
  \begin{table}[ht]
    \centering
    \begin{tabular}{|l|c|c|c|}
      \hline 
      Wave type & $E_0$ & $u$ & $p$ \\
      \hline 
      \multirow{2}{*}{Rarefaction wave}
      & $\frac{\mathstrut}{\mathstrut}$1-wave $E_0^L>E_0^R$
              & \multirow{2}{*}{$u^L < u^R$}
                    & 1-wave $p^L>p^R$ \\  
      \cline{2-2}
      \cline{4-4}
      & $\frac{\mathstrut}{\mathstrut}$3-wave $E_0^L<E_0^R$
              &     & 3-wave $p^L<p^R$ \\
      \hline \hline   
      \multirow{2}{*}{Shock wave}
      & $\frac{\mathstrut}{\mathstrut}$1-wave $E_0^L<E_0^R$
              & \multirow{2}{*}{$u^L > u^R$}
                    & 1-wave $p^L<p^R$ \\  
      \cline{2-2}
      \cline{4-4}
      & $\frac{\mathstrut}{\mathstrut}$3-wave $E_0^L>E_0^R$
              & & 3-wave $p^L>p^R$ \\
      \hline 
    \end{tabular}
    \caption{\label{tab:wavetype}The relationship between the wave type and the
variables $E_0$, $u$, and $p$.}
  \end{table}
\end{theorem}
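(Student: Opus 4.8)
The theorem gathers the four genuinely-nonlinear cases (rarefaction and shock, each in the $1$- and $3$-field) into \cref{tab:wavetype}, so the plan is to establish the sign pattern of $E_0$, $u$, and $p$ across the wave in each case. The recurring tool is the inequality \eqref{eq:eigenvaluescondition}, i.e. $\lambda_1-u<0<\lambda_3-u$ with $u=E_1/E_0$; this single pair of inequalities is what converts the abstract wave conditions into definite signs.

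For rarefactions I would parametrise the integral curve by \eqref{eq:integralcurve}, so that $\tilde E_0'=C$, $\tilde E_1'=C\lambda_k$, $\tilde E_2'=C\lambda_k^2$, and then differentiate the definitions of $u$ and $p$ to obtain the clean ODEs $\tilde u'=C(\lambda_k-\tilde u)$ and $\tilde p'=C(\lambda_k-\tilde u)^2$. Since $C$ keeps a fixed sign along the curve and the consistency condition \eqref{eq:consistent}, combined with $\tfrac{1}{C}\od{\lambda_k}{\zeta}=\Delta_k$ (negative for $k=1$, positive for $k=3$, from the proof of \cref{thm:wavetype}), fixes $\sgn(C\zeta_*)$, integrating each ODE from $0$ to $\zeta_*$ determines the monotonicity. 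The sign of $\lambda_k-u$ from \eqref{eq:eigenvaluescondition} then pins down every entry: $p$ is strictly monotone with the sign of $C$, $u$ increases in both fields, while $E_0$ increases in the $3$-field and decreases in the $1$-field.

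For shocks I would begin with the Rankine--Hugoniot relations \eqref{eq_shock_RH}, read off the speed \eqref{eq:shock_sk}, and derive the two identities $(E_0^R-E_0^L)(u^R-u^L)=\tfrac{(E_0^R-E_0^L)^2}{E_0^L}(s_k-u^R)$ and \eqref{eq:shock_p}. The latter is a manifest perfect square, so $p^R-p^L$ always carries the sign of $E_0^R-E_0^L$; in the former the factor $\sgn(s_k-u^R)$ is fixed field-wise by the entropy condition \eqref{eq:entropycondition} and \eqref{eq:eigenvaluescondition} (giving $s_3-u^R>0$ but $s_1-u^R<0$). Thus the shock half of the table collapses to determining $\sgn(E_0^R-E_0^L)$, which I would obtain from the Hugoniot-curve continuation argument: for an infinitesimal step $d=E_0^r-E_0^l$ the Taylor expansion gives $\lambda_k^r-\lambda_k^l=d\,\Delta_k|_{\bU^l}+O(d^2)$, so the entropy condition forces $\sgn(d)=-\sgn(\Delta_k)$, and propagating this along the whole curve yields $E_0^L>E_0^R$ for the $3$-field and $E_0^L<E_0^R$ for the $1$-field.

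The main obstacle is precisely this last step. For rarefactions the integral-curve ODEs hand over the sign of $E_0^R-E_0^L$ directly, but the Rankine--Hugoniot system is purely algebraic and does not orient the jump by itself. The delicate points are justifying the linearisation $\lambda_k^r-\lambda_k^l\approx d\,\Delta_k$ together with the weak-shock approximation $s_k\approx\lambda_k^l$, and then carrying the infinitesimal sign information along the entire Hugoniot curve rather than for a single weak shock. Once $\sgn(E_0^R-E_0^L)$ is fixed, the $u$- and $p$-entries follow mechanically from the two identities above, and assembling the four cases reproduces \cref{tab:wavetype}.
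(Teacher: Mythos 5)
Your proposal follows the paper's own proof essentially step for step: the same integral-curve ODEs $\tilde u'=C(\lambda_k-\tilde u)$, $\tilde p'=C(\lambda_k-\tilde u)^2$ combined with \eqref{eq:eigenvaluescondition} and the sign of $C(\zeta_*)\zeta_*$ for rarefactions, the same two Rankine--Hugoniot identities for shocks, and the same weak-shock Taylor expansion along the Hugoniot curve (forcing $\sgn(d)=-\sgn(\Delta_k)$ via the entropy condition) to orient $E_0^R-E_0^L$. The only caveat is a cosmetic one in the $1$-field shock case: the usable form of the velocity identity there is $(E_0^R-E_0^L)(u^R-u^L)=\frac{(E_0^R-E_0^L)^2}{E_0^R}\,(s_1-u^L)$, whose sign follows from $s_1<\lambda_1^L<u^L$, whereas your stated factor $\sgn(s_1-u^R)$ is not directly fixed by the entropy condition and \eqref{eq:eigenvaluescondition}; the paper sidesteps this by treating only the $3$-field explicitly and invoking symmetry.
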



\section{Numerical simulation} \label{sec:numericalresults}
In this section, we discuss the numerical scheme for the \MPN model, and perform
numerical simulations on some typical examples to demonstrate its numerical
efficiency.

\subsection{Numerical scheme}
Because the convection part of the \MPN model is hyperbolic conservation laws in
the sense of balance laws, we discretize it by the finite volume method. The
source term and the governing equation of internal energy
\eqref{eq:internalenergy} both contain the term $T^4$, which is usually a stiff
term. Hence, an implicit scheme is adopted to deal with the stiff terms.

Precisely, we assume the spatial domain is $[z_l, z_r]$, and the number of
discretization cell is $N_{\cell}$. A uniform discretization yields
the spatial step $\Delta z = \frac{z_r-z_l}{N_{\cell}}$, discretization points $z_i = z_l +
(i-1/2)\Delta z$, $i=1,\cdots,N$, and mesh cells $[z_{i-1/2},z_{i+1/2}]$,
$i=1,\cdots,N$ with $z_{i-1/2}=z_i-\Delta z/2$.

Denote the approximation of the solution in $i$-cell at time step $t_n$ by
$\bU_i^n$, and analogous for the source $\bS$ and the internal energy $e$.
The numerical scheme for the \MPN system is
%
%
\begin{equation} \label{eq:discretization}
\begin{aligned}
    \dfrac{\bU_i^{n+1}-\bU_i^n}{c\Delta t}
    &+ \frac{\bF_{i+1/2}-\bF_{i-1/2}}{\Delta z} = \bS_i^{n+1},\\
    \dfrac{e_i^{n+1}-e_i^{n}}{\Delta t} &=
    \sigma_{a,i}^n\left(E_{0,i}^{n+1}-a c(T_i^{n+1})^4\right),
\end{aligned}
\end{equation}
where the $k$-th element of the source term $\bS_i^{n+1}$ has the form
\begin{equation}
    S_{k,i}^{n+1} = 
    -\sigma^n_{t,i} E_{k,i}^{n+1} 
    + \frac{1-(-1)^{k+1}}{2k+2} \left( ac \sigma_{a,i}^n (T_i^{n+1})^4
    +\sigma_{s,i}^n E_{0,i}^{n+1} + s^n_i \right).
\end{equation}
Here we adopt the Lax-Friedrich scheme in the numerical flux $\bF_{i+1/2}$.
Due to the fact that the source term and the governing equation of 
the internal energy is
implicitly discretized, the constraints of the time step is 
the CFL condition
\begin{equation}
    \Delta t = \mathrm{CFL} \cdot \min_{i}\frac{\Delta
    z}{\max_{k}|\lambda_k(\bU_i^n)|}.
\end{equation}
We set the CFL number to be $0.3$ in all the numerical simulations.
It is worth to point out that in the absence of any external source of
radiation, i.e. $s=0$, adding the first equation of the discretization of
moments in \eqref{eq:discretization} and the discretization of the internal
energy in \eqref{eq:discretization} yields
\begin{equation}
    \dfrac{e_i^{n+1}-e_i^n}{\Delta t}+
    \dfrac{E_{0,i}^{n+1}-E_{0,i}^n}{c\Delta t}+\dfrac{F_{0,i+1/2}-F_{0,i-1/2}}{\Delta
    z} = 0,
\end{equation}
which is the discretization version of the conservation of the total energy
\eqref{eq:conservation}.

\subsubsection{Boundary condition} 
The ansatz of the \MPN provides an injective function between the concerned
moments in the \MPN model and the intensity in the form \eqref{eq:ansatz}.
This allows us to construct the boundary condition of the \MPN model based on
the boundary condition of the RTE. Without loss of generality, we take the left
boundary as an example.
For the RTE, the intensity on the boundary is 
\begin{equation}
    I^B(t,\mu) = 
    \begin{cases}
        I(z=z_l,t,\mu),  &   \mu < 0,\\
        I_{\out}(t,\mu),  &  \mu > 0,
    \end{cases}
\end{equation}
where $I_{\out}$, which is the intensity outside of the domain, is problem
dependent.
For instance, 
the intensity outside the domain for the common used reflective boundary
condition is
\begin{equation}
    I_{\out}(t,\mu)=I(z=z_l, t,-\mu), \quad \mu>0,
\end{equation}
and for the vacuum boundary condition, we have
\begin{equation}
    I_{\out}(t,\mu) = 0, \quad \mu>0.
\end{equation}

Due to the ansatz \eqref{eq:ansatz}, for the boundary condition of the \MPN
model, the intensity close to the domain $I(z=z_l,t,\mu)$ is replaced by
the ansatz constructed by the moments close to the domain $\hat{I}(\mu;
\bU(z=z_l,t))$. Then, one can directly evaluate the moments on the boundary.
Particularly, the flux across the boundary for the $k$-th moment is
\begin{equation} \label{eq:boundaryflux}
    F^B_k=\int_{-1}^0 \mu^{k+1} \hat{I}(\mu;\bU(z=z_l,t))\dd\mu +
    \int_{0}^1 \mu^{k+1}I_{\out}(t,\mu)\dd\mu.
\end{equation}



\subsubsection{Implementation}
The implementation of the numerical scheme \eqref{eq:discretization} is
straightforward except the moment closure, where one has to evaluate $\mE_i$,
$i=0,\cdots,2N+1$ fast and accurately. However, a naive implementation on
$\mE_i$ will lose the accuracy when $\alpha$ varies in $(-1,1)$. In the
following, we discuss the details of the implementation.

Note that $\mE_k=\moment{\weight(\mu)}_k$. If $|\alpha|$ is close to $0$, using
Taylor expansion on the weight function $\weight(\mu)$, one can obtain
\begin{equation*}
    \begin{aligned}
        \mE_k &=
        \int_{-1}^1\varepsilon\sum_{m=0}^\infty\frac{(m+1)(m+2)(m+3)}{6}(-\alpha)^m\mu^{m+k}\dd\mu \\
        &=\sum_{m+k \text{ is even}}\varepsilon\frac{(m+1)(m+2)(m+3)}{3}\frac{(-\alpha)^m}{m+k+1}.
    \end{aligned}
\end{equation*}
Hence, we have
\begin{equation}\label{eq:Ek_taylor}
    \mE_k = \left\{ \begin{array}{ll}
    \sum\limits_{j=0}^\infty\frac{2(2j+1)(j+1)(2j+3)\alpha^{2j}}{3}\frac{\varepsilon}{2j+k+1}, & k \text{ is even},\\
    -\alpha\sum\limits_{j=0}^{\infty}\frac{4(j+1)(2j+3)(j+2)\alpha^{2j}}{3}\frac{\varepsilon}{2j+k+2}, & k \text{ is odd}.
    \end{array} \right.
\end{equation}
We truncate the summation of series as $j=0,\dots,n_s$, then its error  is
$O((n_s+3)^3\alpha^{2n_s+2})$.
Clearly, the formula \eqref{eq:Ek_taylor} is efficient for the evaluation of
$\mE_k$ if $|\alpha|$ is small.

On the other hand, if $|\alpha|$ is not small, we let
\begin{equation*}
    J_k^{(n)}=\int_{-1}^1\frac{\varepsilon}{(1+\alpha\mu)^n}\mu^k\dd\mu,
\end{equation*}
then $\mE_k=J_k^{(4)}$.
Direct calculation using the integral by part yields
\begin{equation}\label{eq:rec_J}
    J_k^{(n)} = \frac{k}{(n-1)\alpha}J_{k-1}^{(n-1)} + \frac{\varepsilon}{(n-1)\alpha}
    \left( \frac{(-1)^{k}}{(1-\alpha)^{n-1}}-\frac{1}{(1+\alpha)^{n-1}} \right).
\end{equation}
For the case $n=1$, using the following recursive relationship
\[ 
   J_k^{(1)} = 
   \begin{cases}
     -\frac{1}{\alpha}J_{k-1}^{(1)},& k \text{ is even},\\
     -\frac{1}{\alpha}J_{k-1}^{(1)} + \frac{\varepsilon}{\alpha k},& k \text{ is odd},
   \end{cases}
\]
 we obtain
\begin{equation}\label{eq:Jk1}
    J_k^{(1)} = \varepsilon\begin{cases}
        \frac{\ln(1+\alpha)-\ln(1-\alpha)}{\alpha^{k+1}} 
        -\sum_{j=1}^{\lfloor k/2\rfloor}\frac{2}{(k+1-2j)\alpha^{2j}}, & 
        k \text{ is even},\\
        \frac{\ln(1-\alpha)-\ln(1+\alpha)}{\alpha^{k+1}} 
        +\sum_{j=0}^{\lfloor k/2 \rfloor}\frac{2}{(k-2j)\alpha^{1+2j}}, &
        k \text{ is odd}.
    \end{cases}
\end{equation}
The equations \eqref{eq:rec_J} and \eqref{eq:Jk1} provide a formula to evaluate
$\mE_k$ for the case $|\alpha|$ not small.

In our implementation, the algorithm for $\mE_k$ is a combination of
\eqref{eq:Ek_taylor} and \eqref{eq:rec_J}.

\subsection{Bilateral inflow}
This example is used to study the behavior of the solution of the \MPN model,
hence the right hand side of \eqref{eq:radiativetransfer} vanishes, i.e., the RTE
degenerates into
\begin{align}\label{eq:equationinvacuum}
  \frac{1}{c}\pd{I}{t}+\mu\pd{I}{z} = 0.
\end{align}
Using the method of characteristics, we obtain the analytical solution  
\begin{align}\label{eq:solutioninvacuum}
  I(z,\mu,t) = I_0(z-c\mu t,\mu),
\end{align}
where $I_0$ is the initial value at $t=0$. Here we choose the initial value as
\begin{equation}
    I_0(z,\mu) = 
    \begin{cases}
        ac\delta(\mu-1), & z\leq 0.2, \\
        0,& 0.2<z\leq 0.8,\\
        \frac{1}{2}ac,& z>0.8,
    \end{cases}
\end{equation}
which is consist of two Riemann problems.
The initial intensity on the left is a Dirac delta function, which is an 
extremely anisotropic distribution. Generally, it is challenging to approximate
such a function for the method based on the polynomial expansion, including the
\PN and \MPN models. 
The initial intensity on the right is an equilibrium, however, because the
initial intensity in the middle is zero, the intensity for the right
Riemann problem is not continuous. 
In the following, we perform simulations to study the efficiency of the \MPN
model on this bilateral flow.


\begin{figure}[ht]
  \centering
  \subfloat[$E_0$ of \MPtwo and $P_2$]{
	\label{fig:delta2_E0}
  \includegraphics[width=0.33\textwidth,height=0.16\textheight]{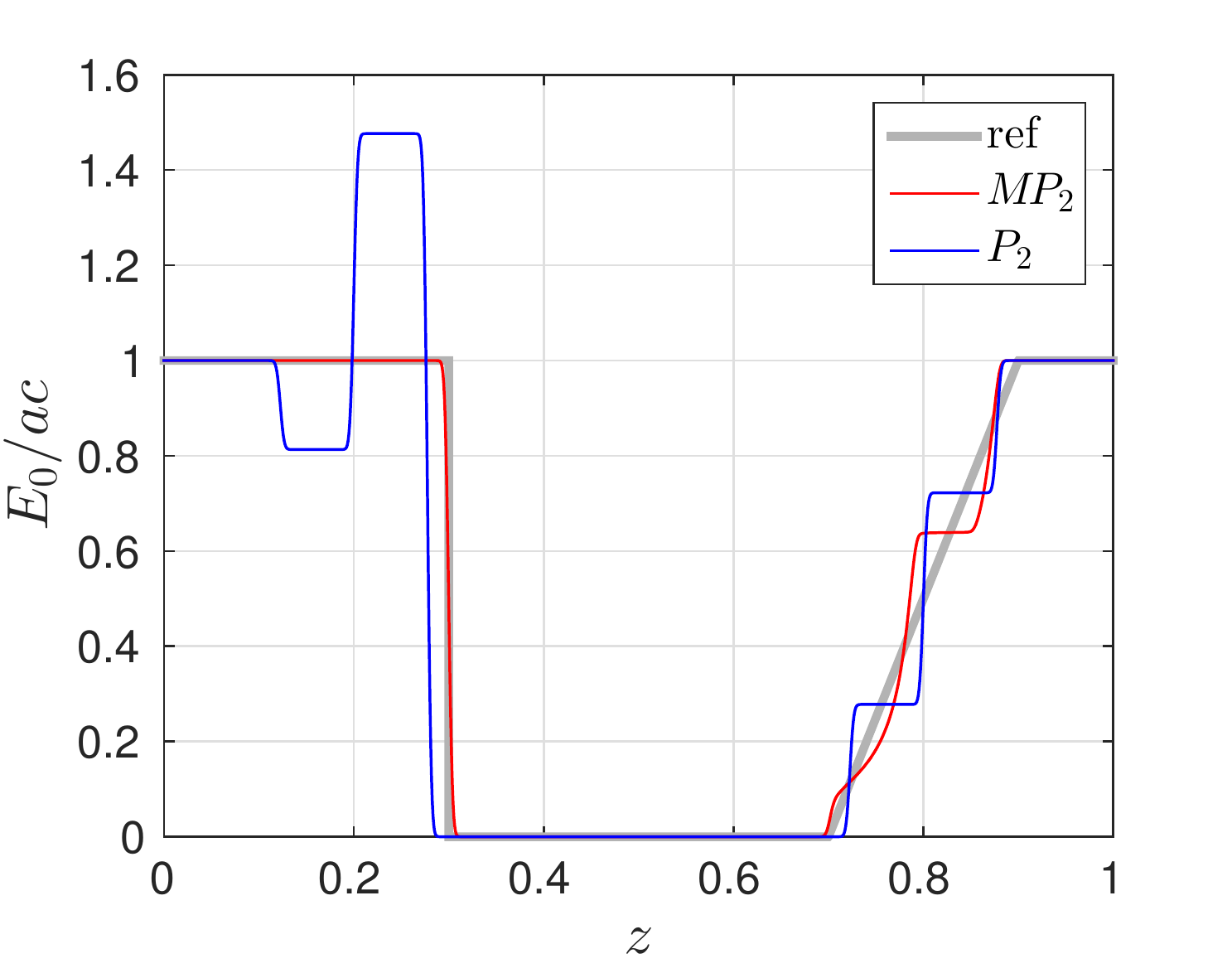}
  }
  \subfloat[$E_0$ of \MP{6} and $P_6$]{
	\label{fig:delta2_E06}
  \includegraphics[width=0.33\textwidth,height=0.16\textheight]{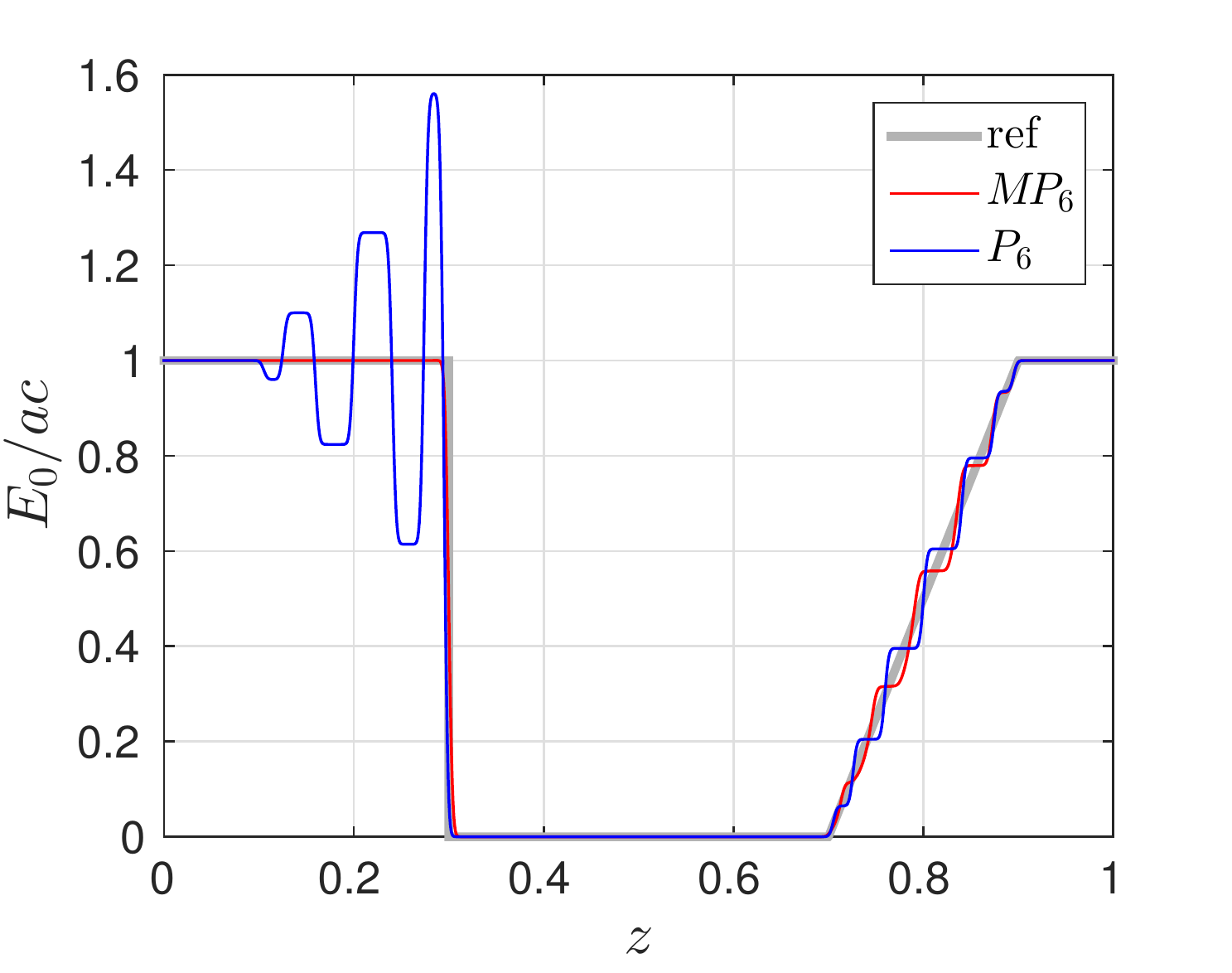}
  }
  \subfloat[$E_0$ of \MP{10} and $P_{10}$]{
  \includegraphics[width=0.33\textwidth,height=0.16\textheight]{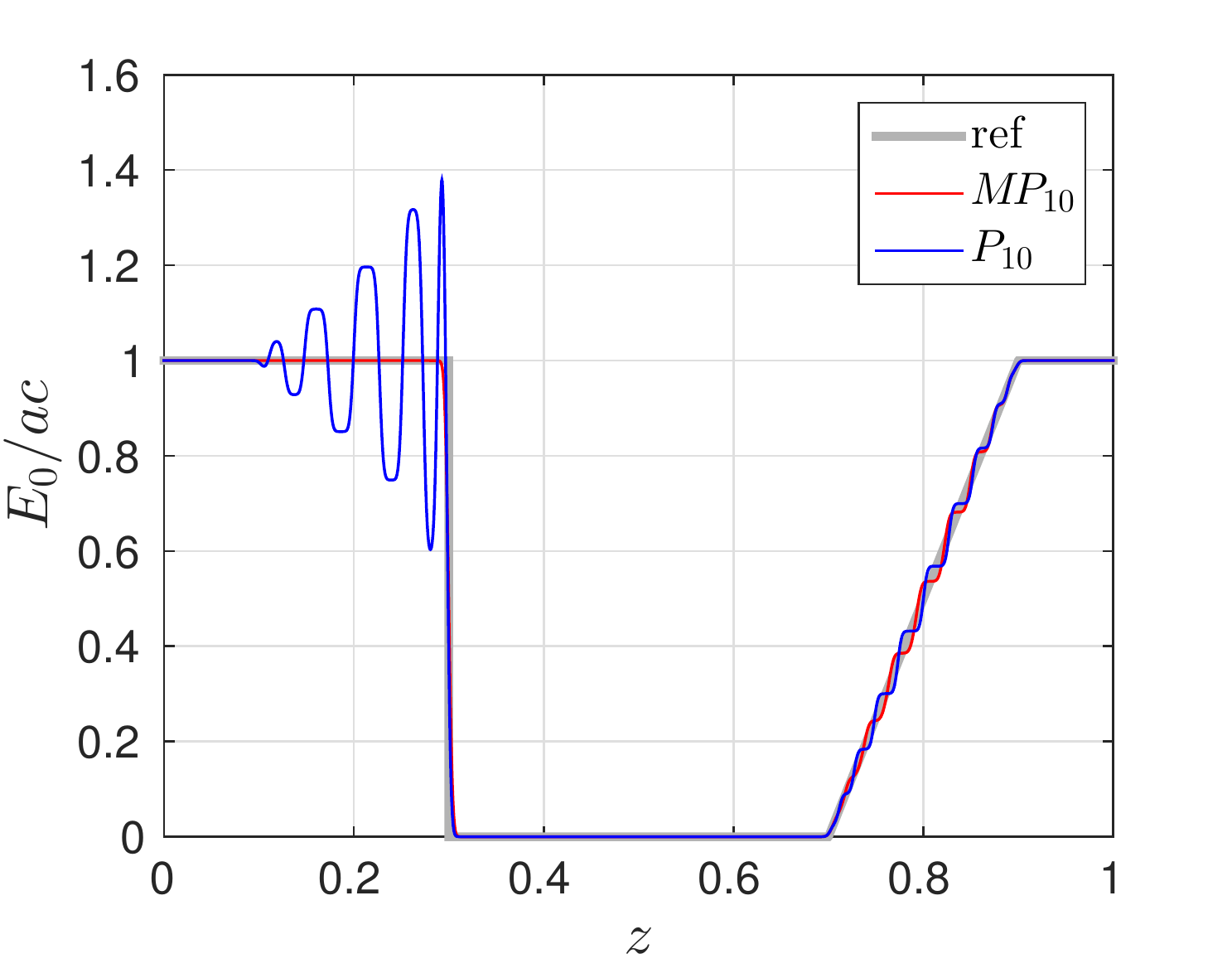}
  }\\
  \subfloat[${E_1}$ of \MPtwo and $P_2$]{
	\label{fig:delta2_E1}
  \includegraphics[width=0.33\textwidth,height=0.16\textheight]{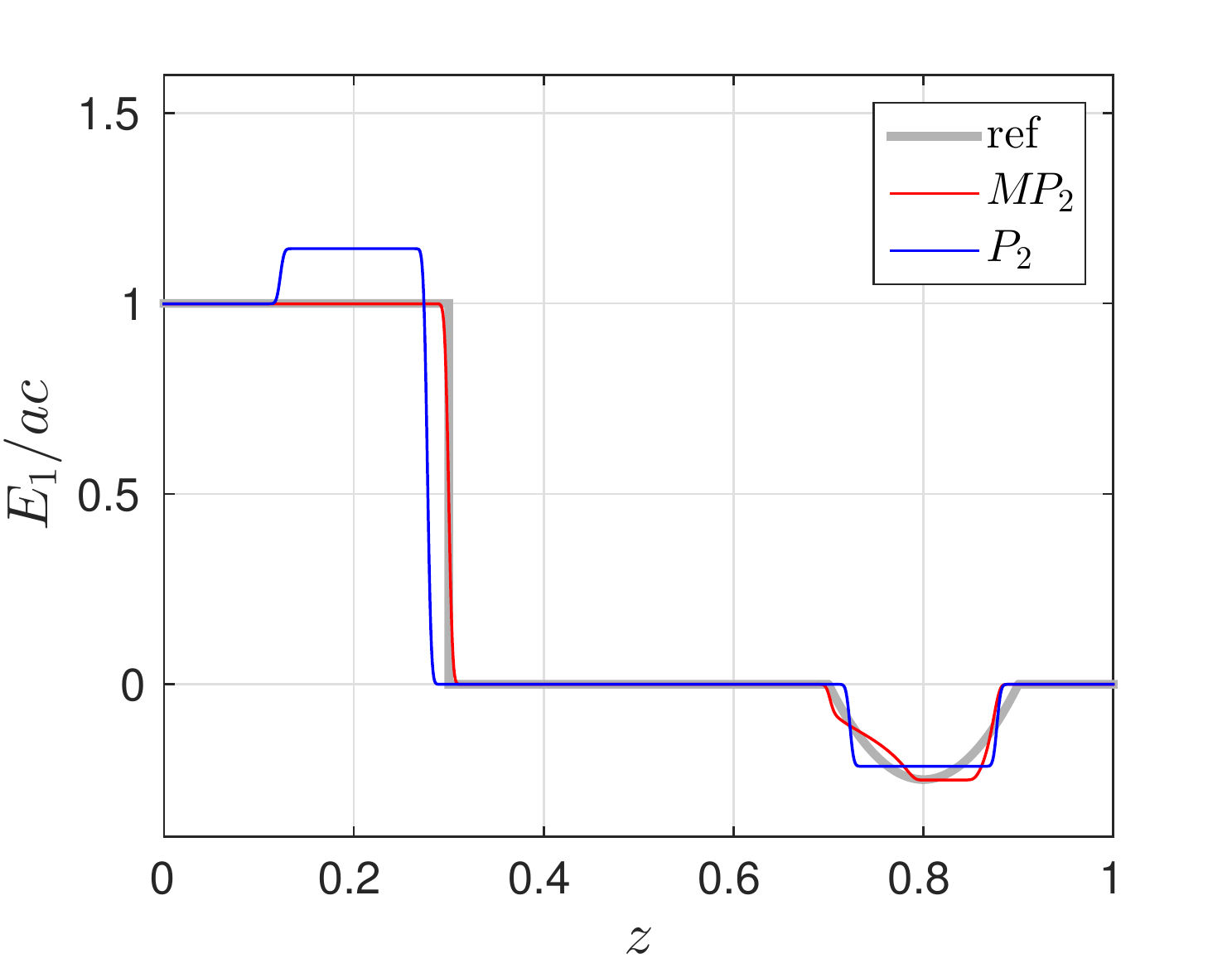}
  }
  \subfloat[${E_1}$ of \MP{6} and $P_6$]{
	\label{fig:delta2_E16}
  \includegraphics[width=0.33\textwidth,height=0.16\textheight]{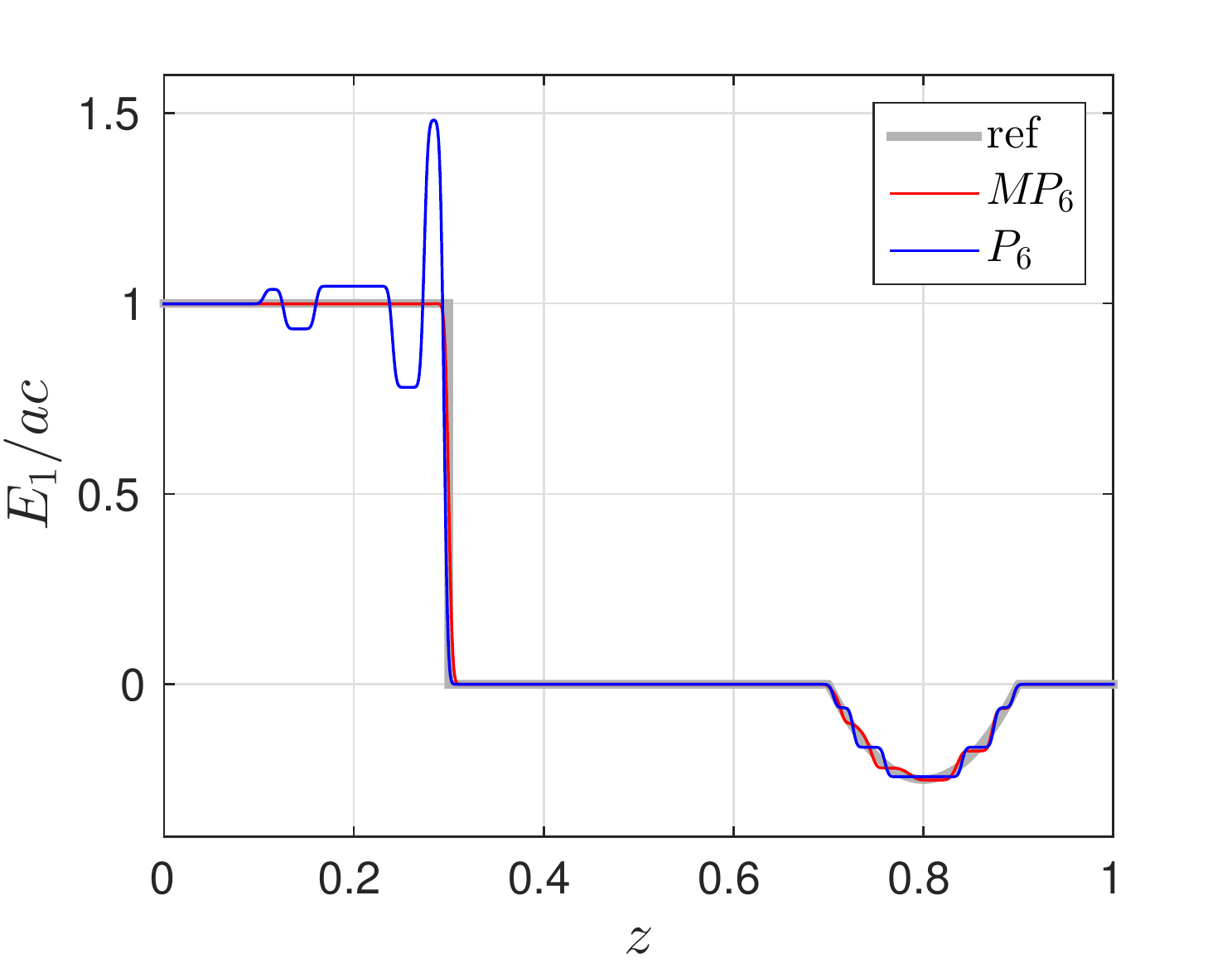}
  }
  \subfloat[${E_1}$ of \MP{10} and $P_{10}$]{
  \includegraphics[width=0.33\textwidth,height=0.16\textheight]{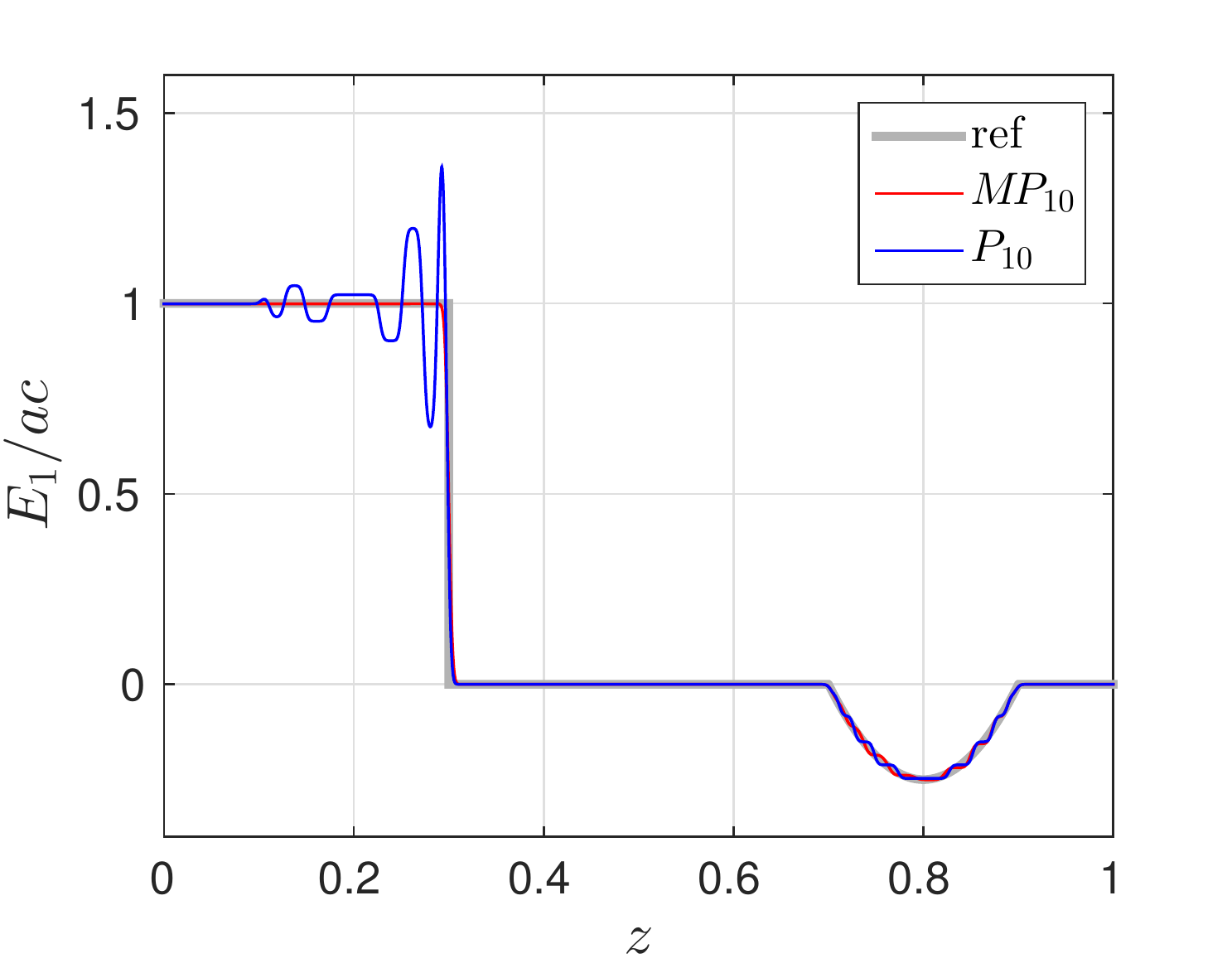}
  }
  \caption{Profile of $E_0$ and $E_1$ for the \MPN model and the \PN
  model for the bilateral inflow.}\label{fig:delta_E0}
\end{figure}


We simulate the problem by the \MPN model and the \PN model till $\ctend =0.1$.
Because the speed of light is finite, we can limit the computational domain in
$[0,1]$, which is uniformly discretized with the number of cells to be 
$N_{\cell}=100000$.
\cref{fig:delta_E0} presents the profile of $E_0$ and $E_1$ for the \MPN and \PN
models with $N=2,6,10$ and the reference solution. 

Clearly, from the left part ($z<0.5$) of each figure, there is an oscillation in
the results of both $E_0$ and $E_1$ of the \PN model, 
and as $N$ increases, the oscillation frequency increases. This Gibbs phenomenon
is caused by the failure of the approximation to the Dirac delta function by
Legendre series.
On the other hand, the results of the \MPN model agree with the reference
solution well, even when $N=2$, which indicates the \MPN has the ability to
simulate the strongly anisotropic problem. This supports the argument
at the end of \cref{sec:MPNmodel} that the ansatz of the \MPN model has the
ability to approximate anisotropic distributions because of its specific weight
function.
Moreover, according to \cref{fig:delta2_E0} and \cref{fig:delta2_E1}, one can
observe that the fastest wave of the $P_2$ model spread much slower than the
reference solution, because the characteristic speed of the $P_2$ model is less
than $1$. The \MPtwo does not have this issue due to its specific weight
function, which allows the characteristic speed reaches $1$.

From the right part ($z>0.5$) of each figure, both the \MPN and \PN models have
a good agreement with the reference solution, and it turns better as $N$
increases. This indicates the \MPN model also has the ability to simulate the
problem with the discontinuous intensity. Compared with the \PN model, one
can observe that the results of the \MPN model are a bit closer to the reference
solution. 


\subsection{Gaussian source problem} \label{sec:gaussian}
This example simulates particles with an initial specific intensity that is a
Gaussian distribution in space \cite{Frank2012Perturbed}:
\begin{align}
  I_0(z,\mu) =
  \frac{ac}{\sqrt{2\pi\theta}}e^{-\frac{z^2}{2\theta}},
  \quad \theta = \frac{1}{100},\quad  z\in(-L,L).
\end{align}
Here we set a large enough $L=\ctend+1$ to ensure that the energy reaching the
boundaries is negligible, and vacuum boundary conditions are prescribed at both
boundaries.
The medium is purely scattering with $\sigma_s=\sigma_t=1$, thus 
the material coupling term vanishes. We also set the external source to be zero.

\begin{figure}[htb]
  \centering
  \subfloat[$E_0$ of \MP{3} and $P_3$]{
  \includegraphics[trim={0mm 0mm 0mm 0mm},clip,width=0.33\textwidth,height=0.16\textheight]{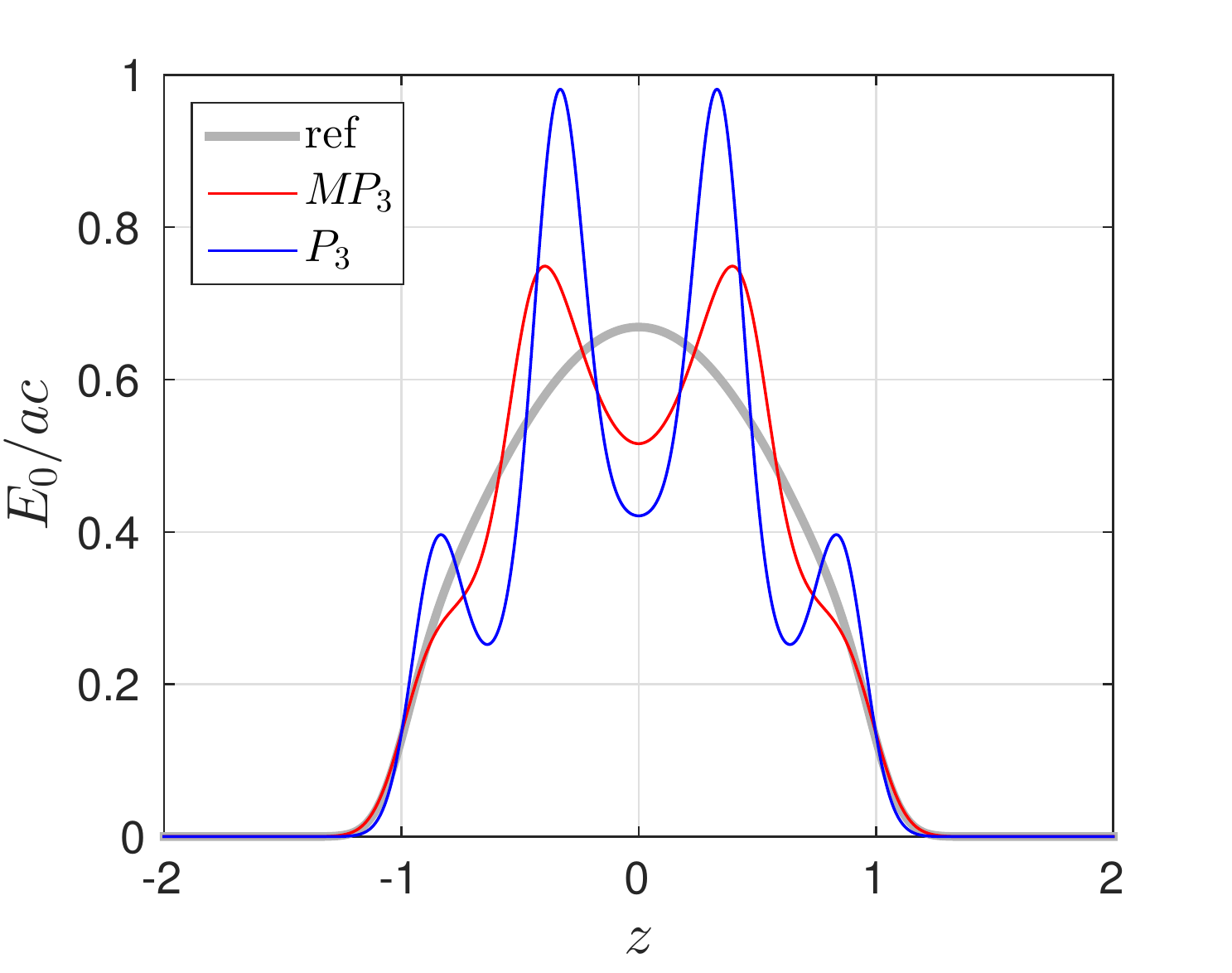}
  }
  \subfloat[${E_1}$ of \MP{3} and $P_3$]{
  \includegraphics[trim={0mm 0mm 0mm 0mm},clip,width=0.33\textwidth,height=0.16\textheight]{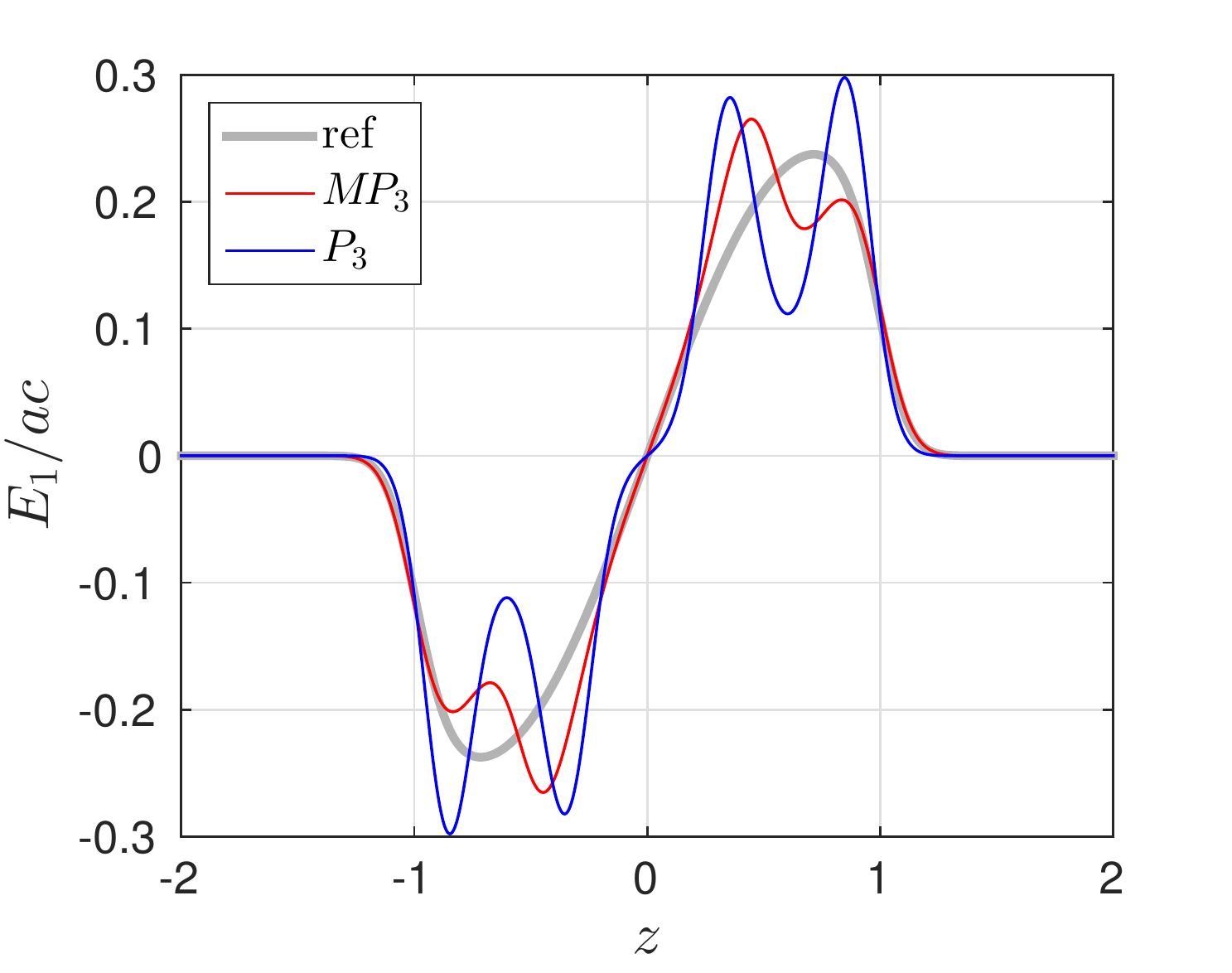}
  }
  \subfloat[${E_1}/{E_0}$ of \MP{3} and $P_3$]{
  \includegraphics[trim={0mm 0mm 0mm 0mm},clip,width=0.33\textwidth,height=0.16\textheight]{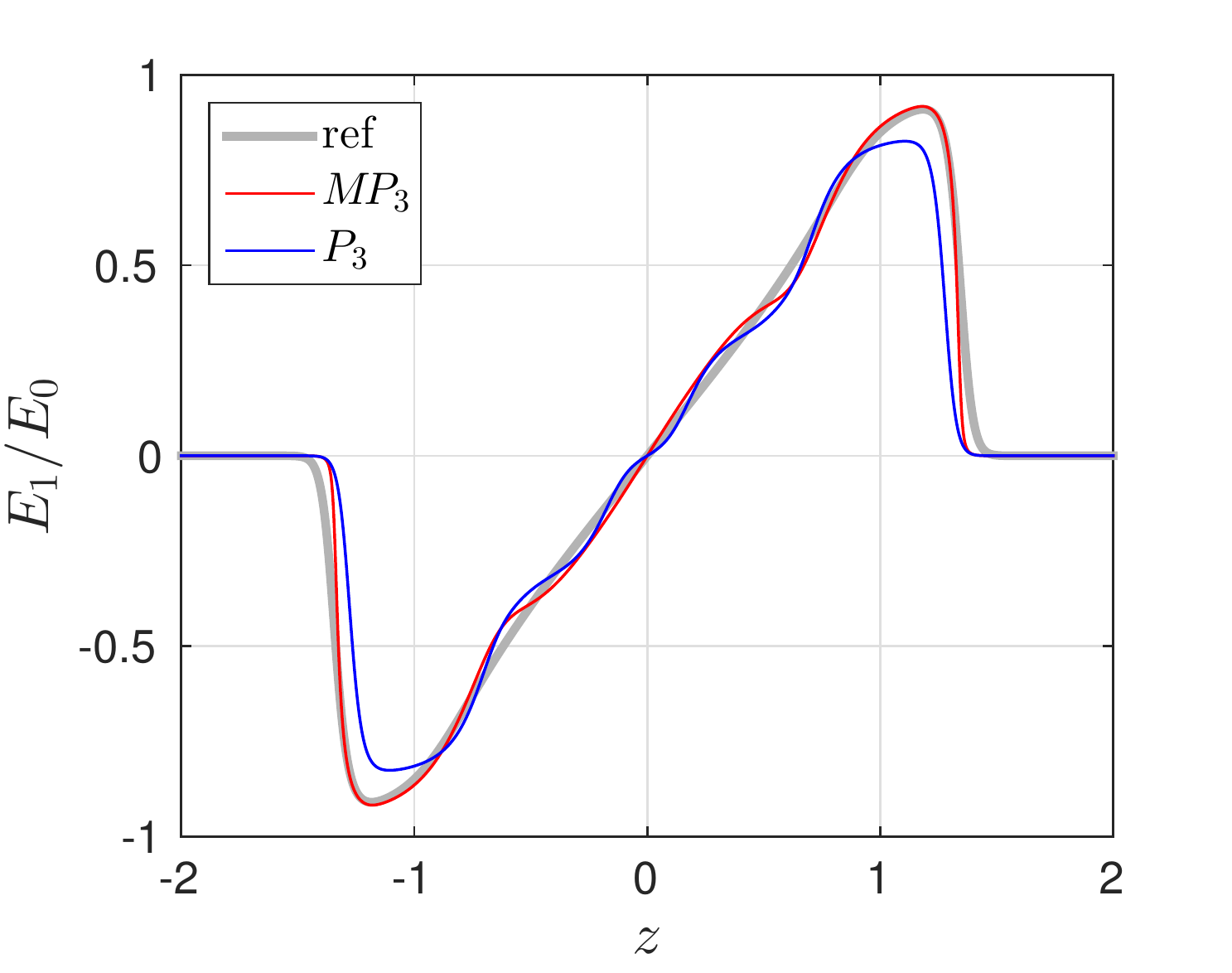}
  }\\
  \subfloat[$E_0$ of \MP{10} and $P_{10}$]{
  \includegraphics[trim={0mm 0mm 0mm 0mm},clip,width=0.33\textwidth,height=0.16\textheight]{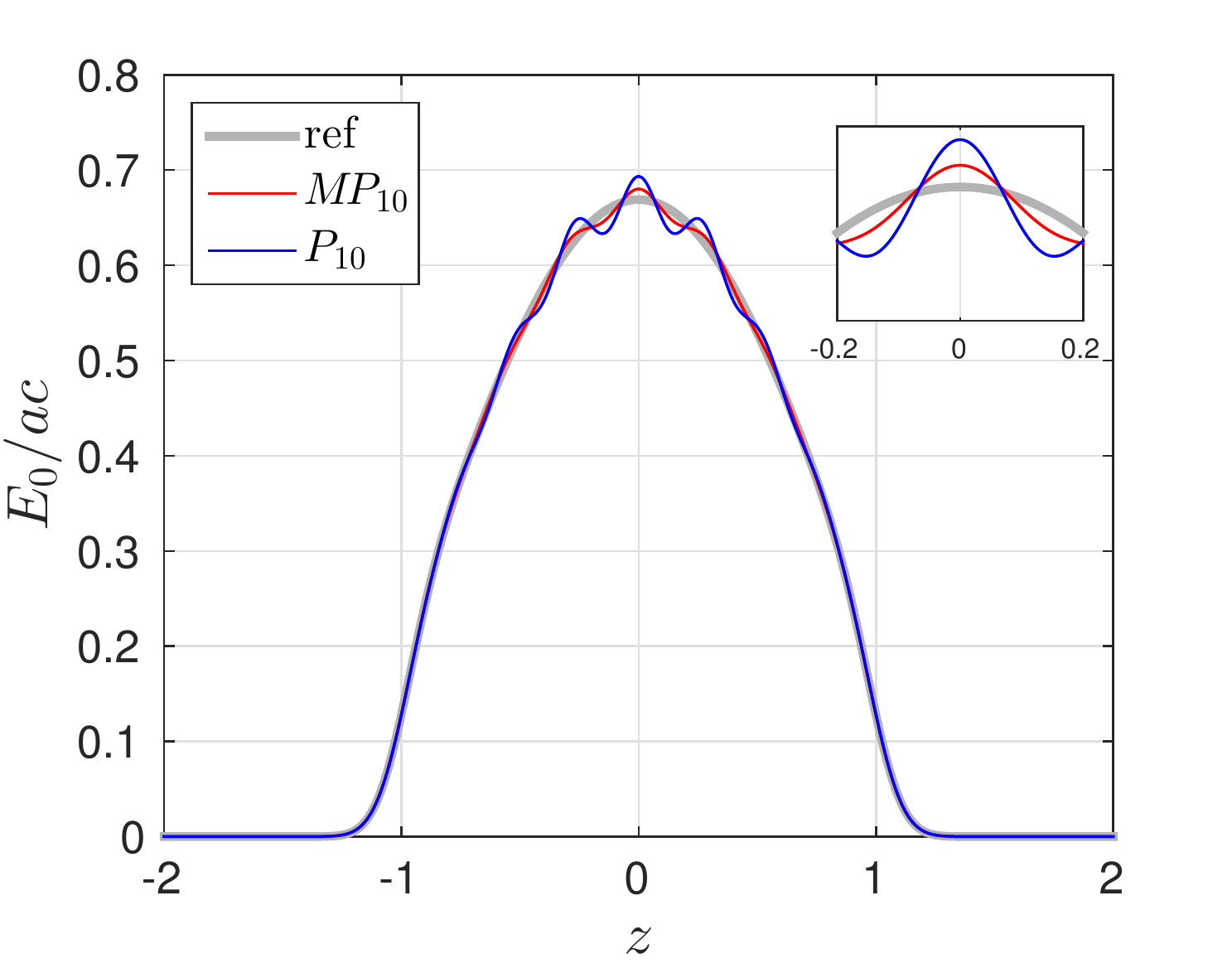}
  }
  \subfloat[$E_1$ of \MP{10} and $P_{10}$]{
  \includegraphics[trim={0mm 0mm 0mm 0mm},clip,width=0.33\textwidth,height=0.16\textheight]{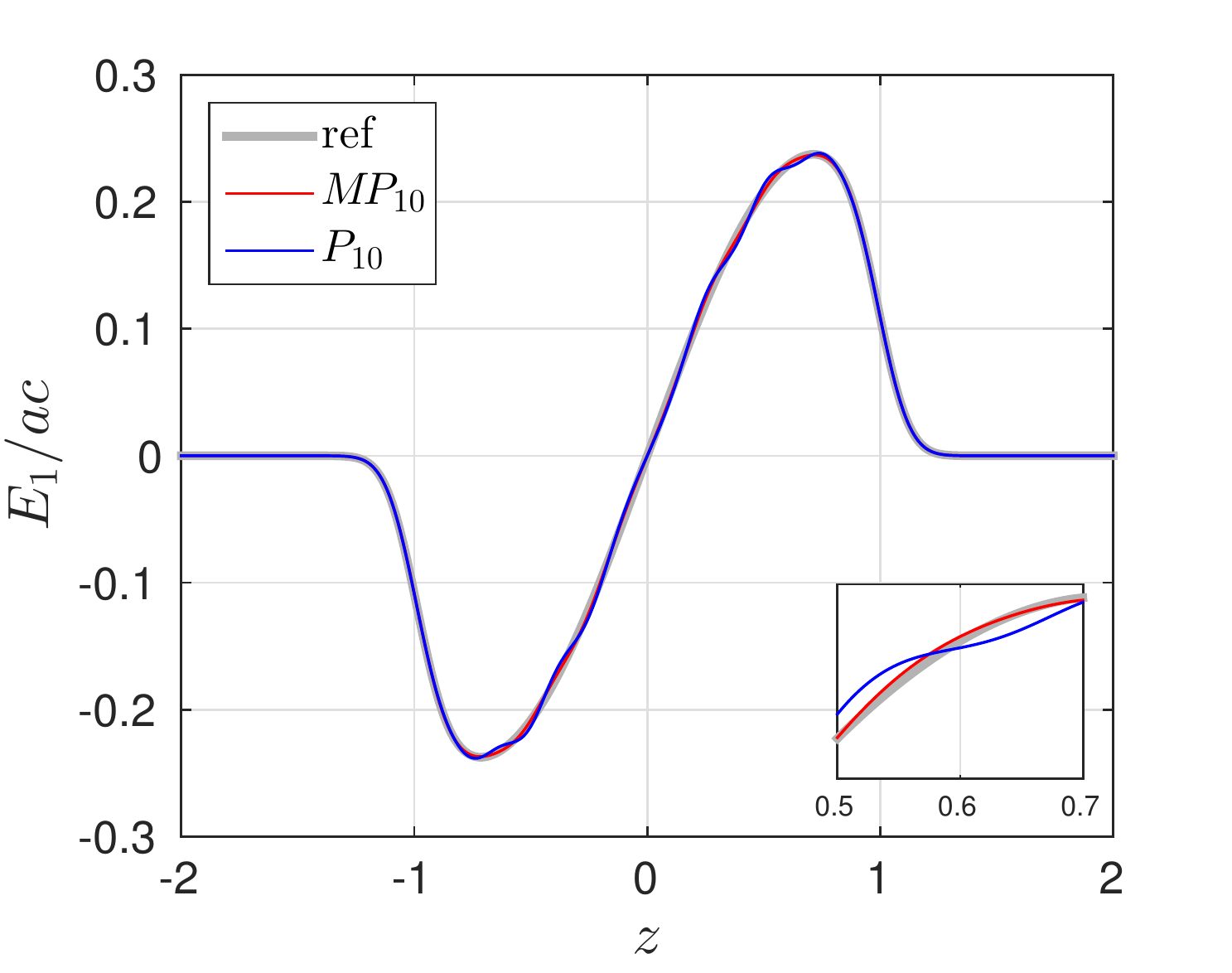}
  }
  \subfloat[${E_1}/{E_0}$ of \MP{10} and $P_{10}$]{
  \includegraphics[trim={0mm 0mm 0mm 0mm},clip,width=0.33\textwidth,height=0.16\textheight]{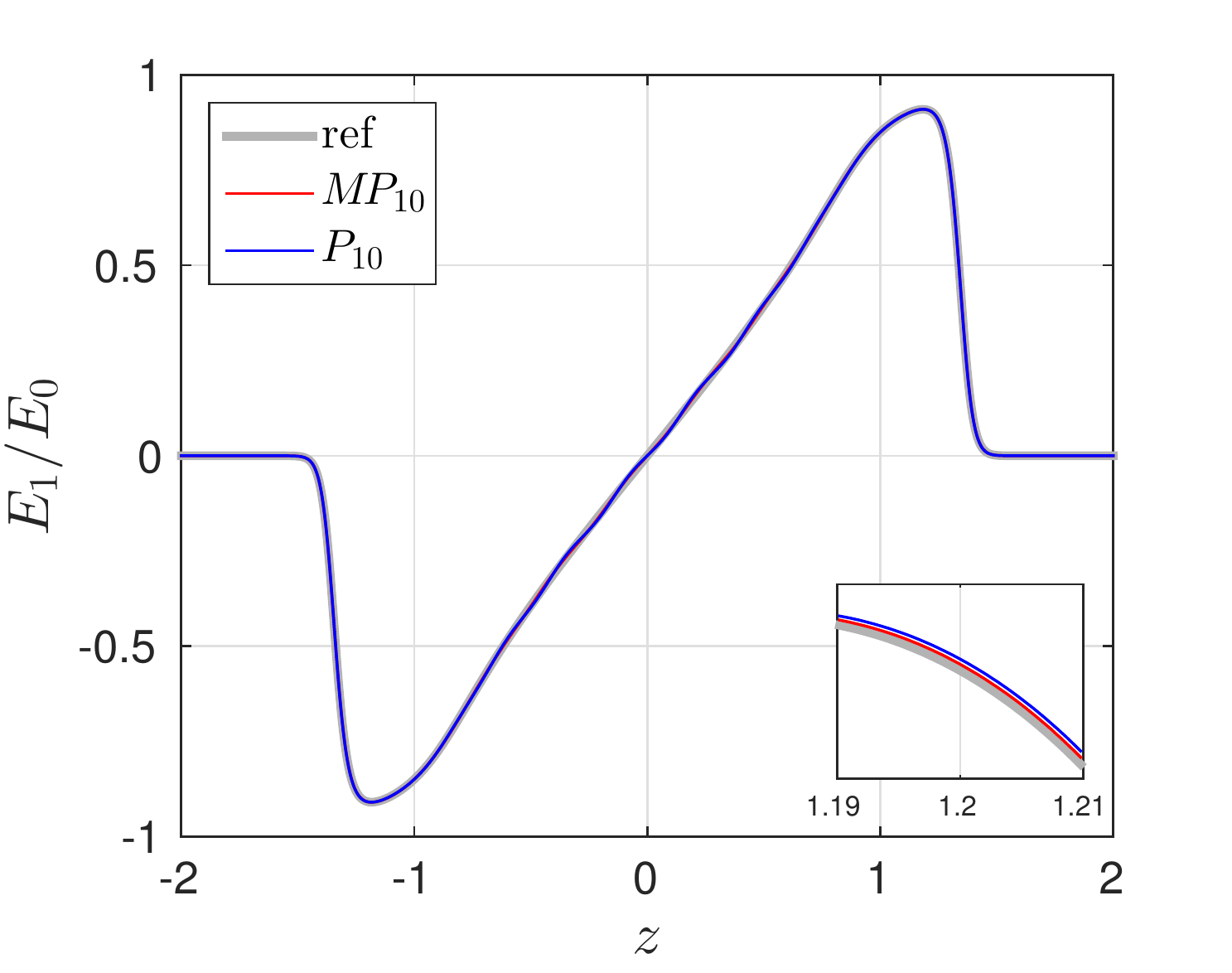}
  }
  \caption{Numerical results of the Gaussian source problem.}\label{fig:Gaussian}
\end{figure}

\begin{figure}[htb]
  \centering
  \subfloat{
  \includegraphics[width=0.4\textwidth,height=0.16\textheight]{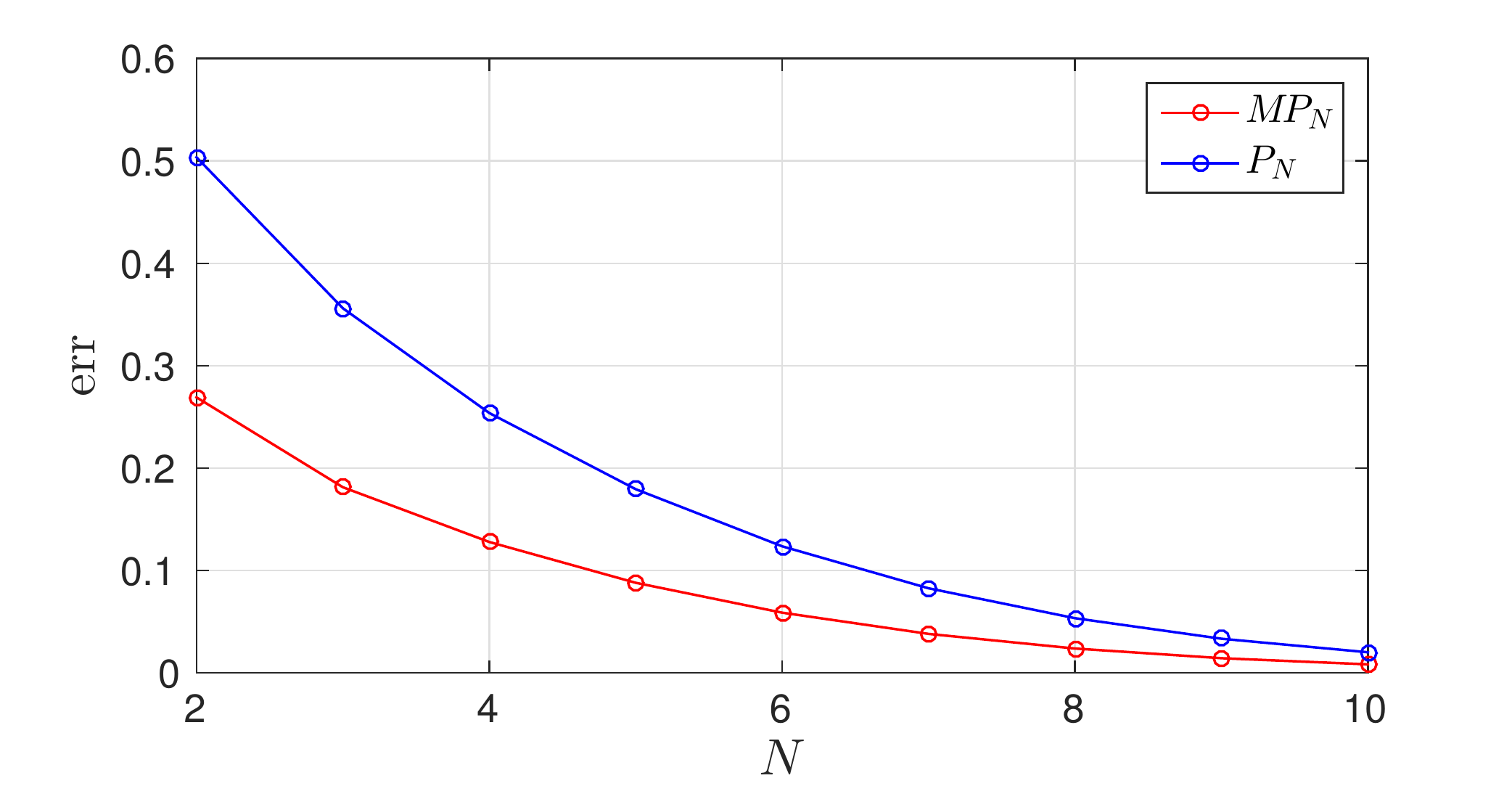}
  }\quad 
  \subfloat{
  \includegraphics[width=0.4\textwidth,height=0.16\textheight]{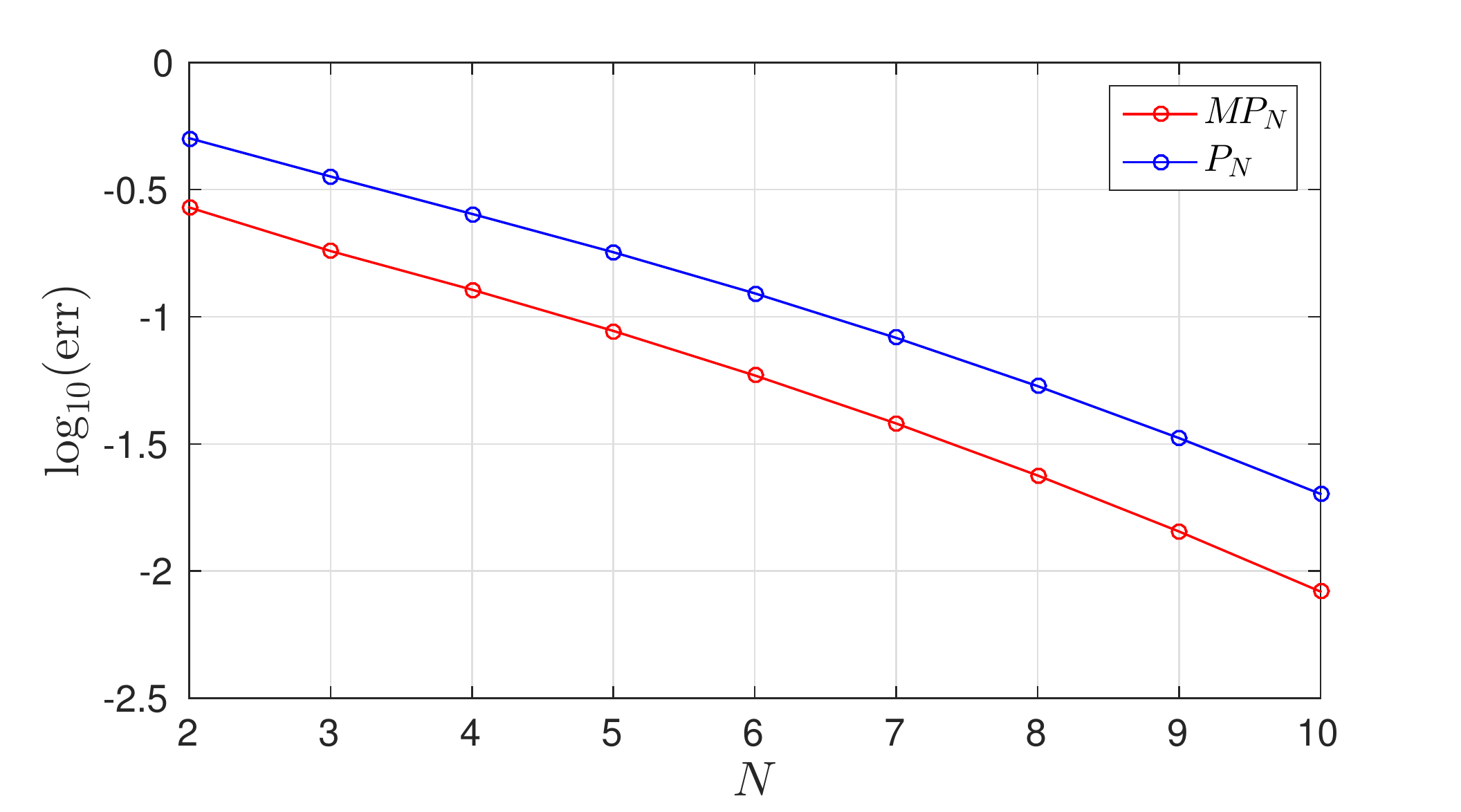}
  }
  \caption{Relative $\ell_2$ error (err) of $E_0$ of the Gaussian source problem
      in the linear scale (left figure) and the semi-logarithm scale (right figure).}
  \label{fig:Gaussian_error}
\end{figure}

In this problem, we set $\ctend=1$, thus the problem domain is
$[-2,2]$, and the number of cells is
$N_{\cell}=8000$, with $\Delta z = \frac{1}{2000}$. 
\Cref{fig:Gaussian} presents the numerical results, including
the profiles of $E_0$, $E_1$, and $\frac{E_1}{E_0}$ 
of the \MPN model, the \PN model and the
reference solution, which is the solution of the $P_{31}$ model. The relative
$\ell_2$ errors of $E_0$ of the \MPN and \PN models are shown in
\cref{fig:Gaussian_error}.
Spurious oscillations occur in the numerical solutions of $E_0$ of both the \MPN
model and the \PN model, and the oscillation amplitude decreases as the number
of moments increases. In the comparison of the \PN model, the \MPN model is more
effective in reducing the oscillations. 

Moreover, the $\frac{E_1}{E_0}$ in this problem can be sufficiently large (close
to $1$) to make the distribution function anisotropic. 
Clearly, both the numerical results of the \MPN model and the \PN model can
approximate the reference solution well while $N$ gets larger, and the
approximation of the \MPN model is much better than that of the \PN model,
according to both the results of the solutions and the relative error of the two
models. Precisely, \cref{fig:Gaussian_error} shows that the relative $\ell^2$
error of $E_0$ for the \MPN model is only half of that for the \PN model with
the same $N$.  This result also supports our argument in \cref{sec:MPNmodel} that the
\MPN model approximates better than the $P_N$ model for the anisotropic
distribution function.



\subsection{Pure absorbing problem}
This example is adopt to compare the approximation efficiency of \MPN model and
\PN model. 
We use the setup in \cite{buchan2005linear} as the computational domain
$[-5,5]$, the absorption coefficient $\sigma_a\equiv 0.5$, the scattering
coefficient $\sigma_s\equiv 0$, and the external source term
\[  
    s(z) = \begin{cases}
        a\,c,\quad &-1\leq z\leq 1,\\
        0,\quad &\text{otherwise}.
    \end{cases}
\]
The material coupling term is ignored, and the vacuum boundary condition are
prescribed at the both boundaries. We care about the steady-state solution of
this problem. In the practical simulation, we set the initial intensity as
$I_0(z,\mu) = 10^{-8}ac$ with the number of cells $N_{\cell}=20000$ and $\Delta
z = \frac{1}{2000}$.

\begin{figure}[htb]
  \centering
  \subfloat[$E_0$ of \MPtwo and $P_2$]{
  \includegraphics[trim={0mm 0mm 0mm 0mm},clip,width=0.33\textwidth,height=0.16\textheight]{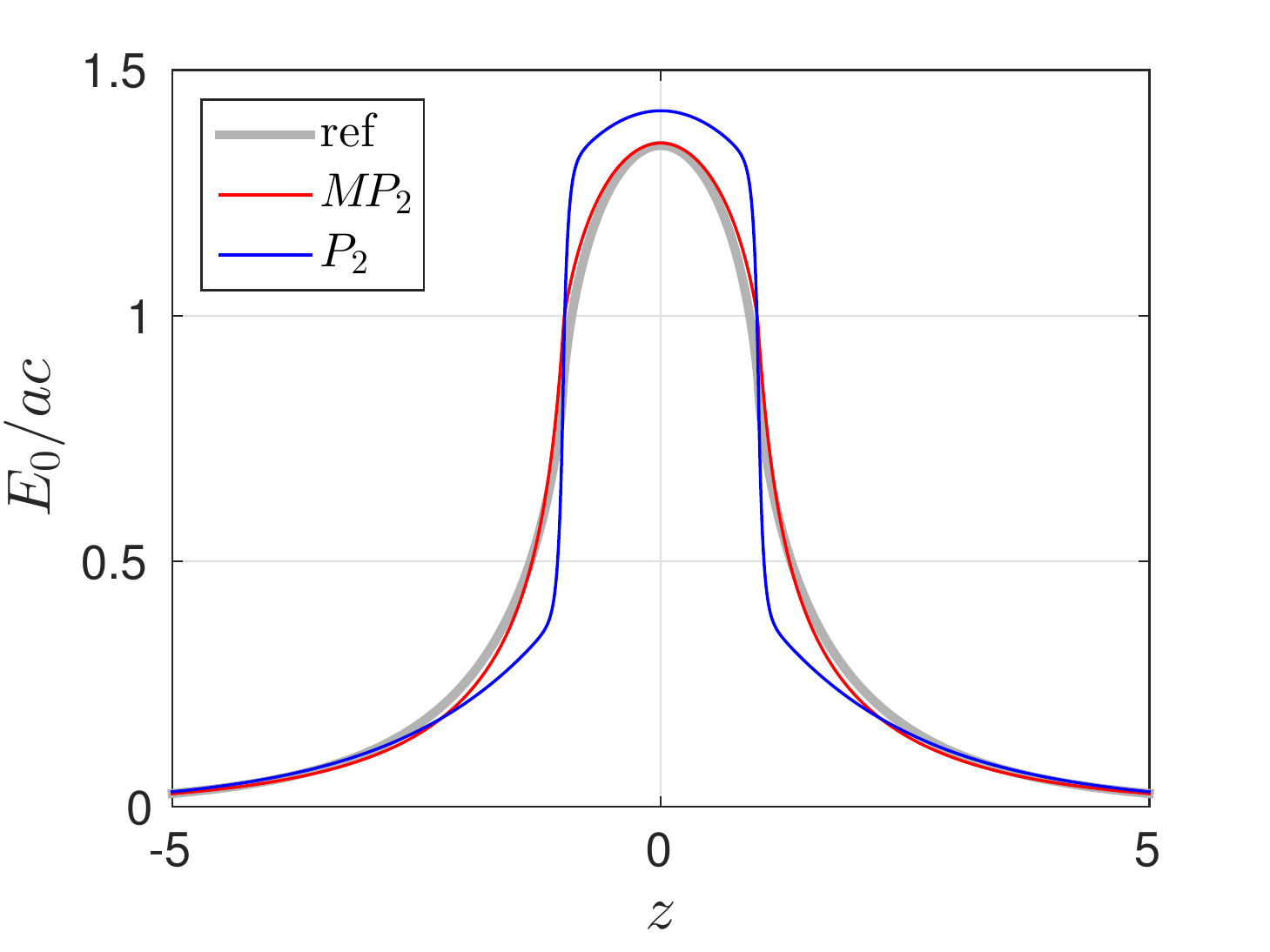}
  }
  \subfloat[$E_1$ of \MPtwo and $P_2$]{
  \includegraphics[trim={0mm 0mm 0mm 0mm},clip,width=0.33\textwidth,height=0.16\textheight]{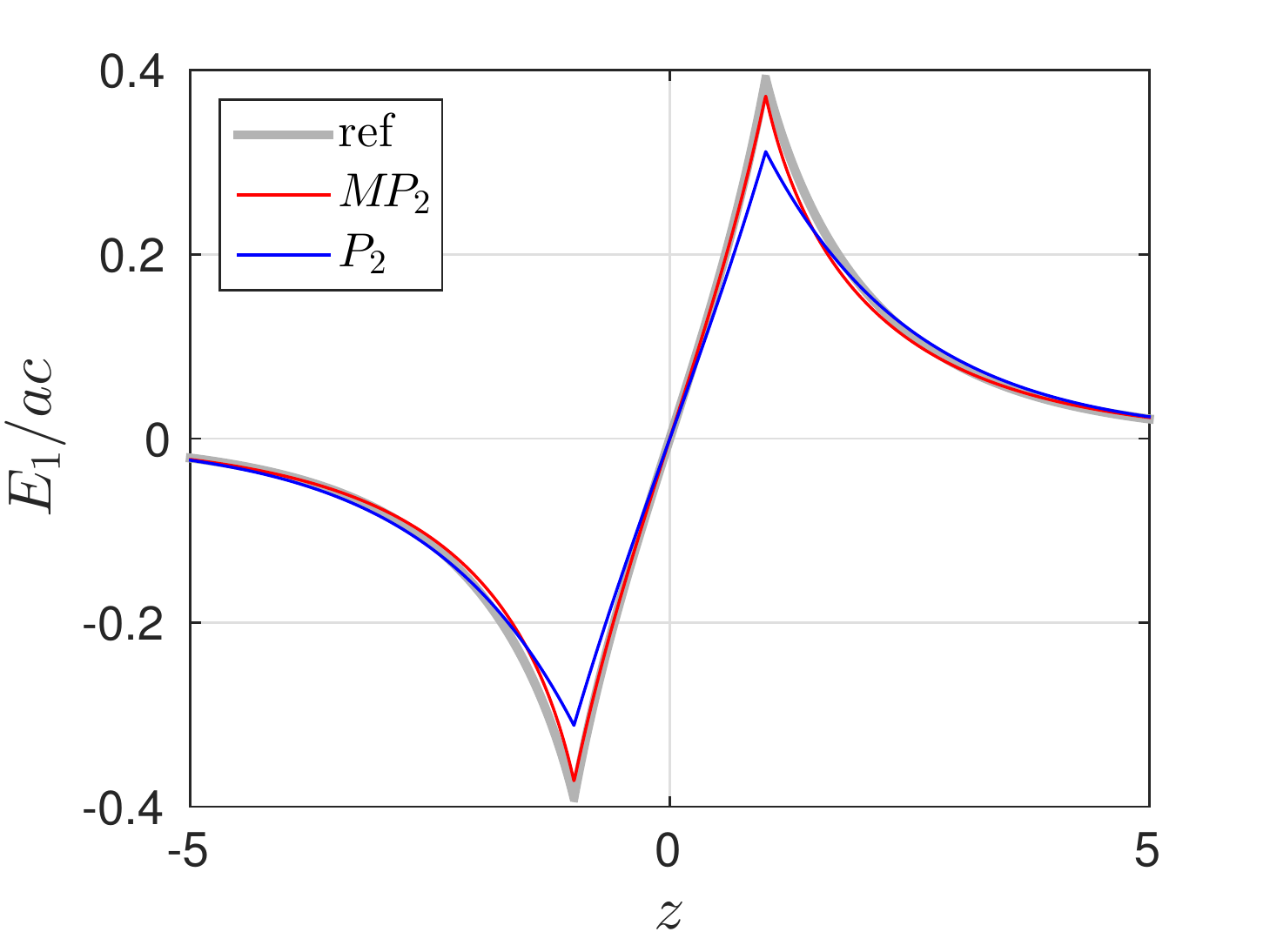}
  }
  \subfloat[$E_1/E_0$ of \MPtwo and $P_2$]{
  \includegraphics[trim={0mm 0mm 0mm 0mm},clip,width=0.33\textwidth,height=0.16\textheight]{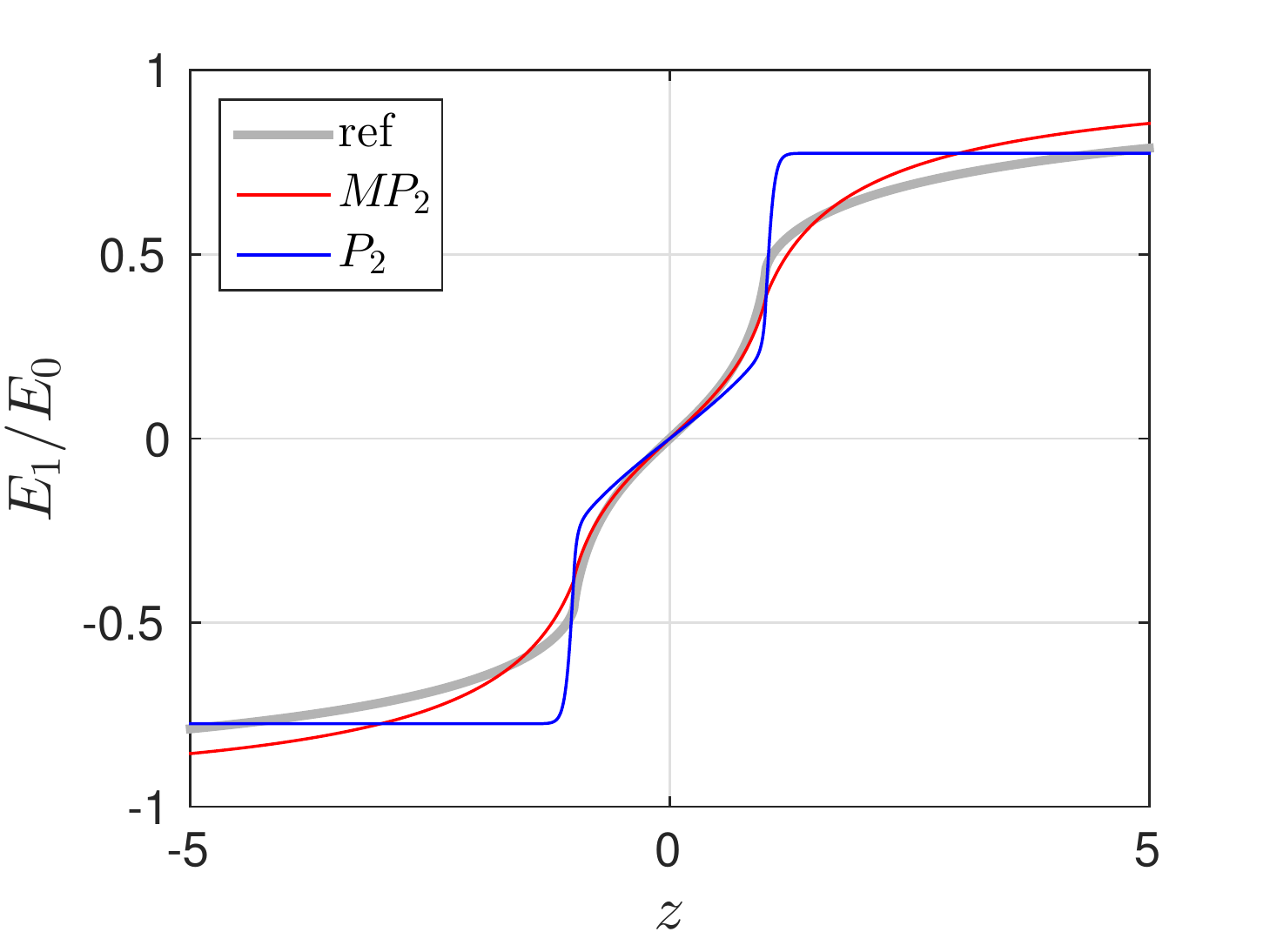}
  }\\
  \subfloat[${E_0}$ of \MP{10} and $P_{10}$]{
  \includegraphics[trim={0mm 0mm 0mm 0mm},clip,width=0.33\textwidth,height=0.16\textheight]{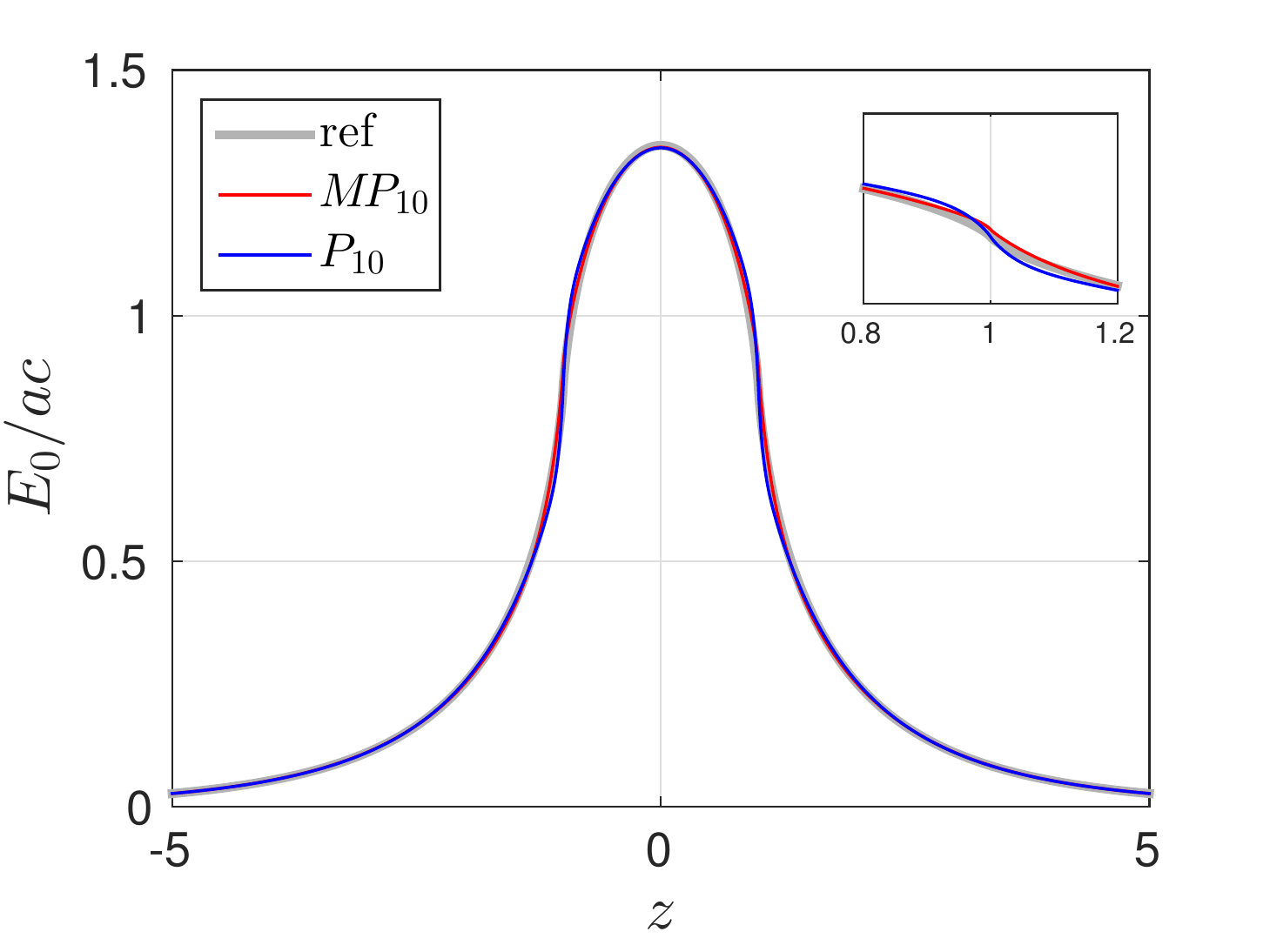}
  }
  \subfloat[${E_1}$ of \MP{10} and $P_{10}$]{
  \includegraphics[trim={0mm 0mm 0mm 0mm},clip,width=0.33\textwidth,height=0.16\textheight]{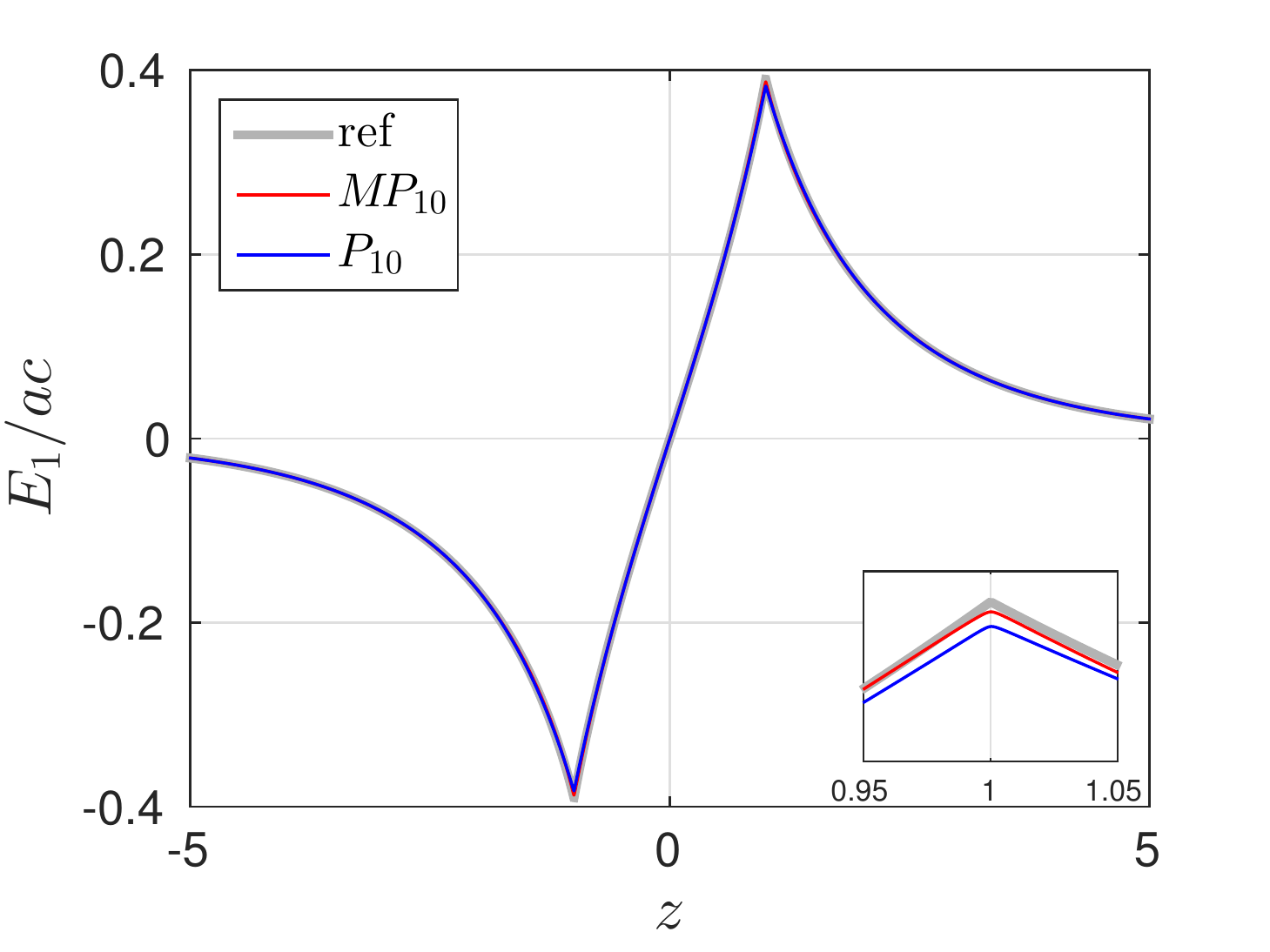}
  }
  \subfloat[${E_1/E_0}$ of \MP{10} and $P_{10}$]{
  \includegraphics[trim={0mm 0mm 0mm 0mm},clip,width=0.33\textwidth,height=0.16\textheight]{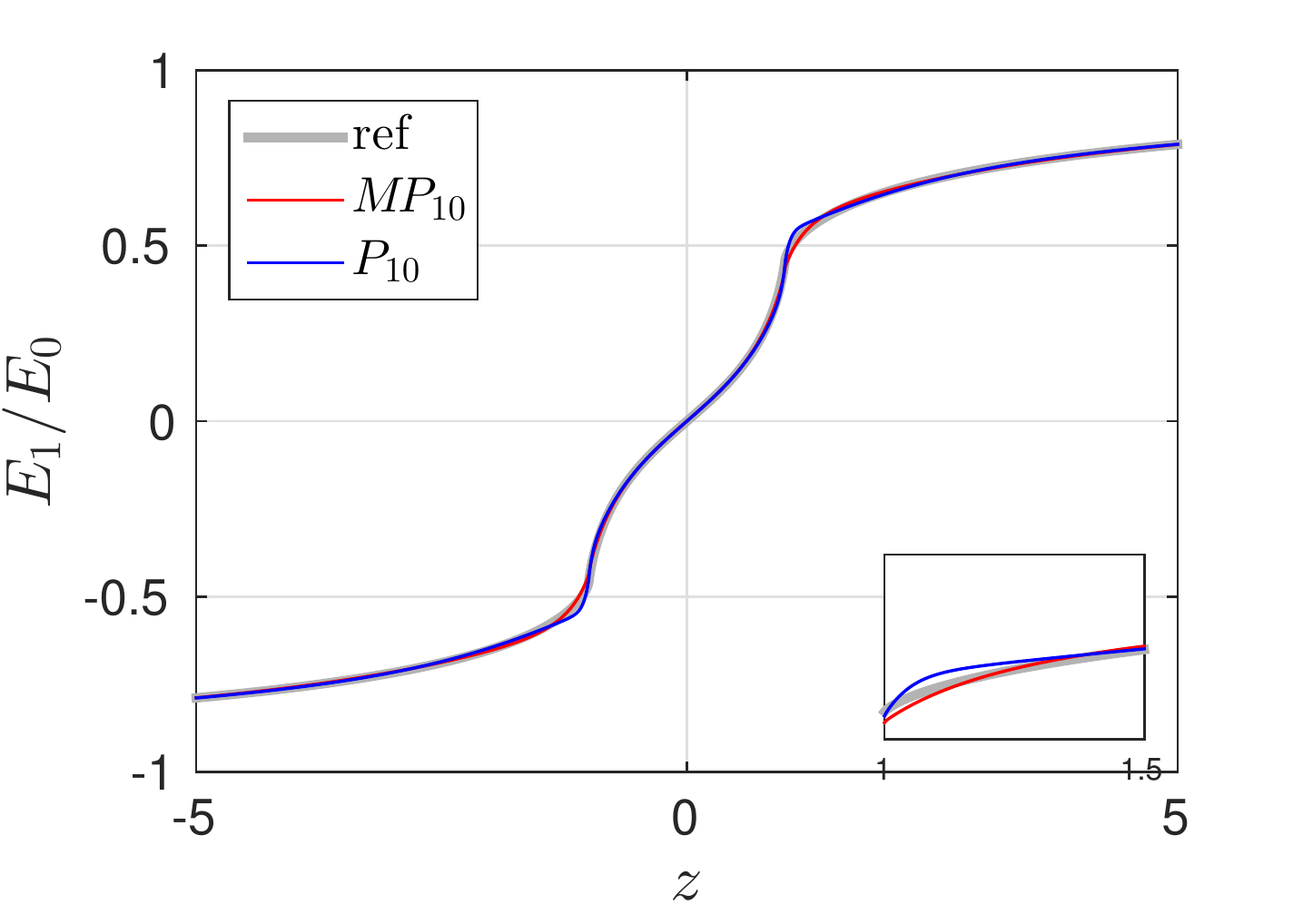}
  }
  \caption{Numerical results of the pure absorbing problem.}
  \label{fig:PureAbsorb_data}
\end{figure}

\begin{figure}[htb]
  \centering
  \subfloat{
  \includegraphics[width=0.4\textwidth,height=0.16\textheight]{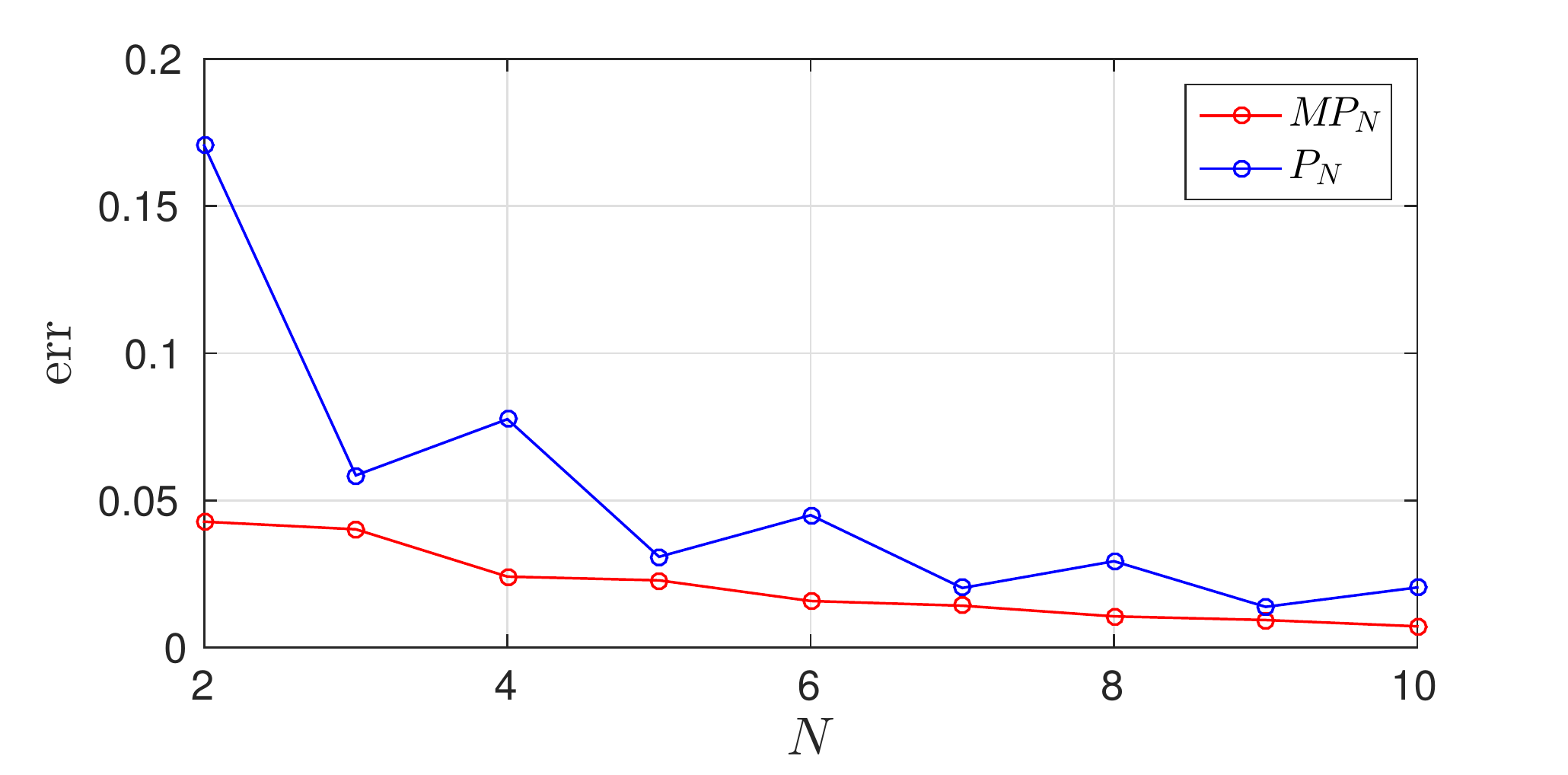}
  }\quad 
  \subfloat{
  \includegraphics[width=0.4\textwidth,height=0.16\textheight]{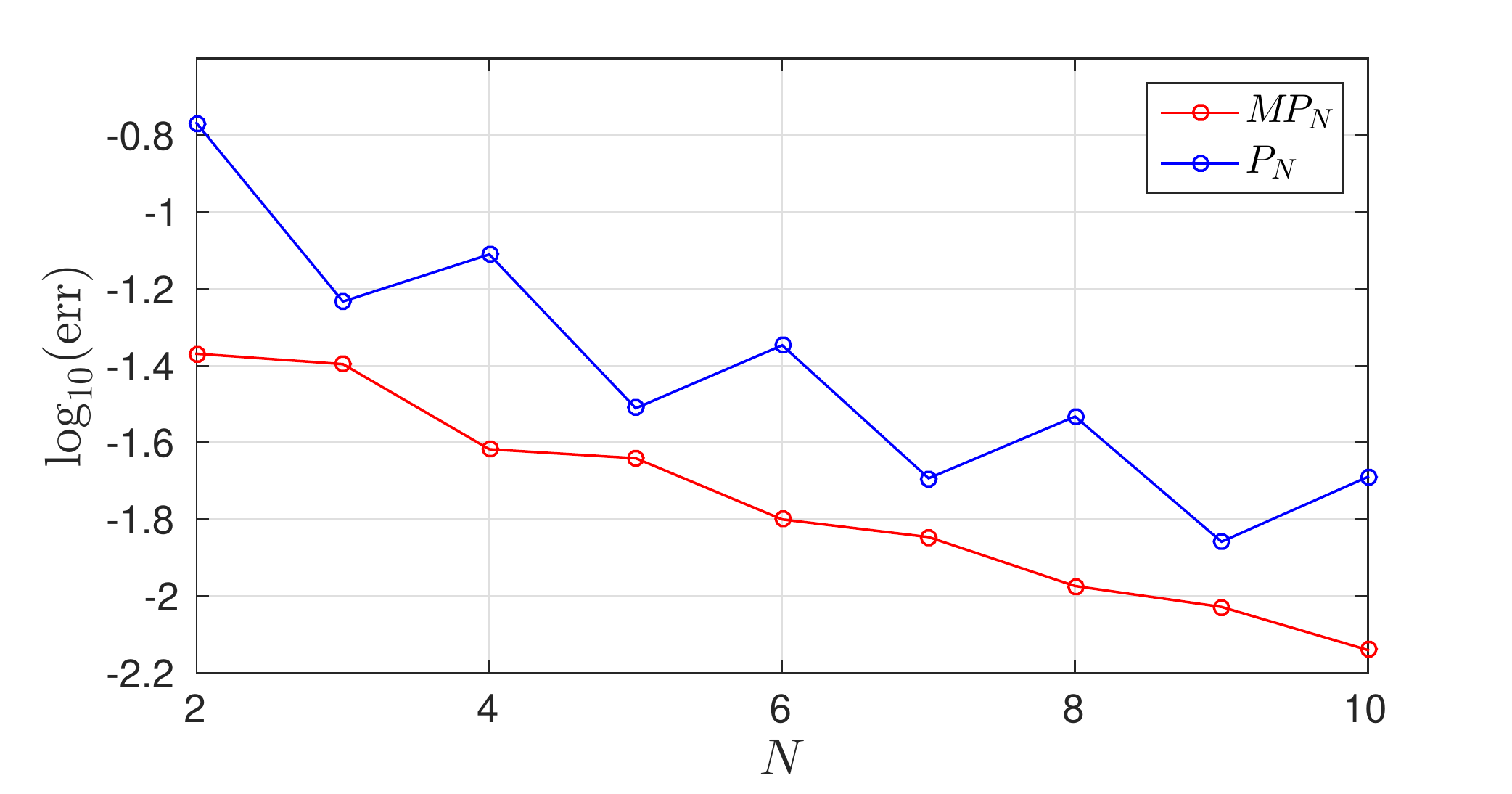}
  }\quad 
  \caption{Relative $\ell_2$ error (err) of $E_0$ of the pure absorbing problem
  in the linear scale (left figure) and the semi-logarithm scale (right
  figure).}
    \label{fig:PureAbsorb_error}
\end{figure}

\Cref{fig:PureAbsorb_data} presents the profile of $E_0$, $E_1$, and
$\frac{E_1}{E_0}$ of the \MPN model and the \PN model with $N=2,10$, and the
relative $\ell_2$ errors of $E_0$ for different $N$ are shown in
\cref{fig:PureAbsorb_error}. All the conclusions in \cref{sec:gaussian}
are also valid for this example. What we want to point out is that 
the results of the \MPtwo model is good enough for this example.

\subsection{Su-Olson problem}
The Su-Olson problem \cite{Olson2000Diffusion} is a non-equilibrium radiative
transfer problem with a material coupling term. The computation domain is $[0,
30]$ and the absorption and scattering opacity are $\sigma_a=1$ and
$\sigma_s=0$, respectively. The external source terms $s$ is given by
\[  
  s(z,t) = \begin{cases}
      ac, &\text{if } 0\leq z\leq \frac{1}{2}, 
      \text{ and } 0\leq c\,t\leq 10,\\
      0, &\text{otherwise}.
  \end{cases}
\]
The relationship between the temperature and the internal energy is given by
$e(T) = aT^4$. The left boundary condition is the reflective boundary condition,
while the right boundary condition is the vacuum boundary condition.

\begin{figure}[ht]
  \centering
  \subfloat{
	\includegraphics[trim={0mm 0mm 0mm 0mm},clip,
	width=0.33\textwidth,height=0.16\textheight]{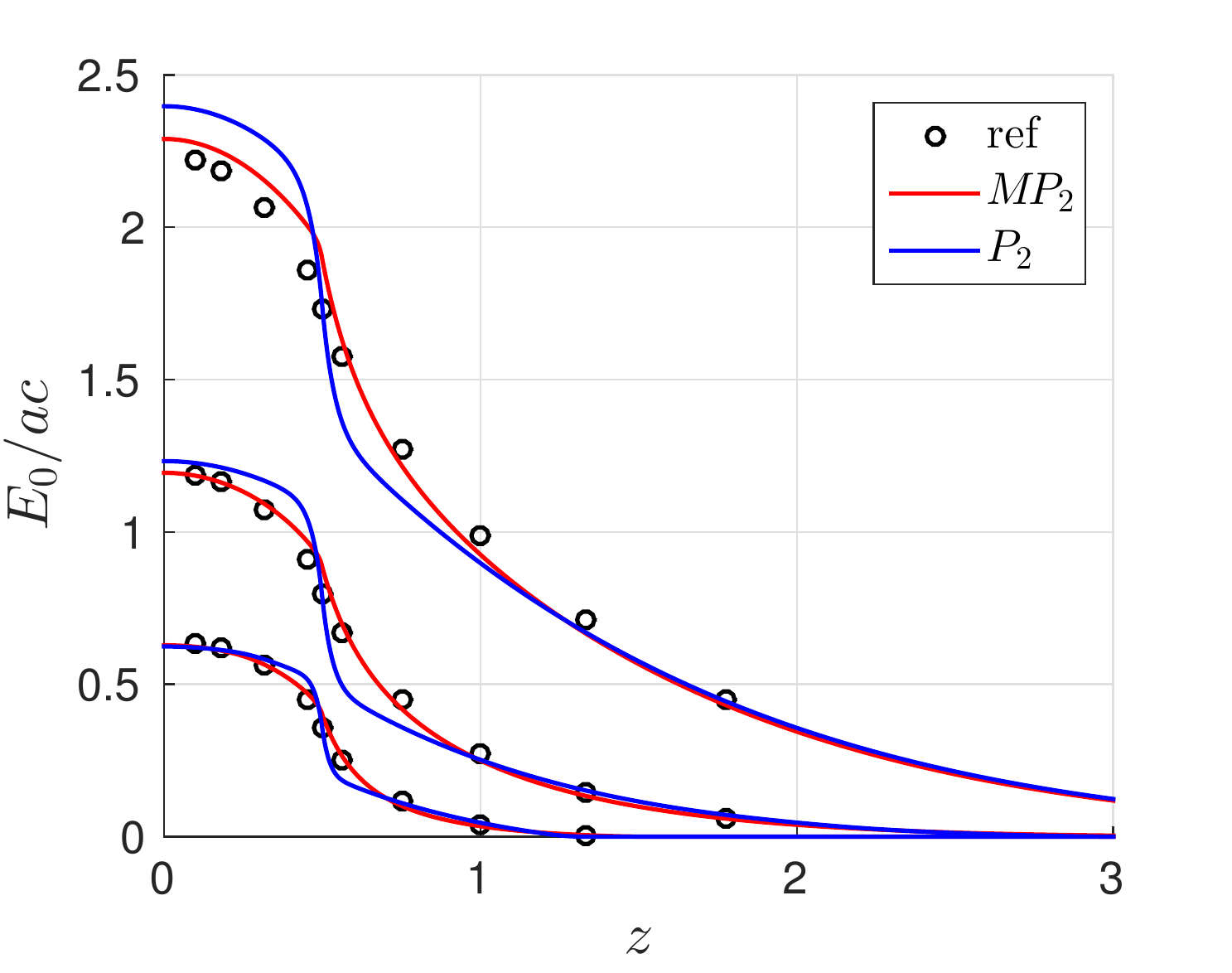}
  }
  \subfloat{
  \includegraphics[trim={0mm 0mm 0mm 0mm},clip,width=0.33\textwidth,height=0.16\textheight]{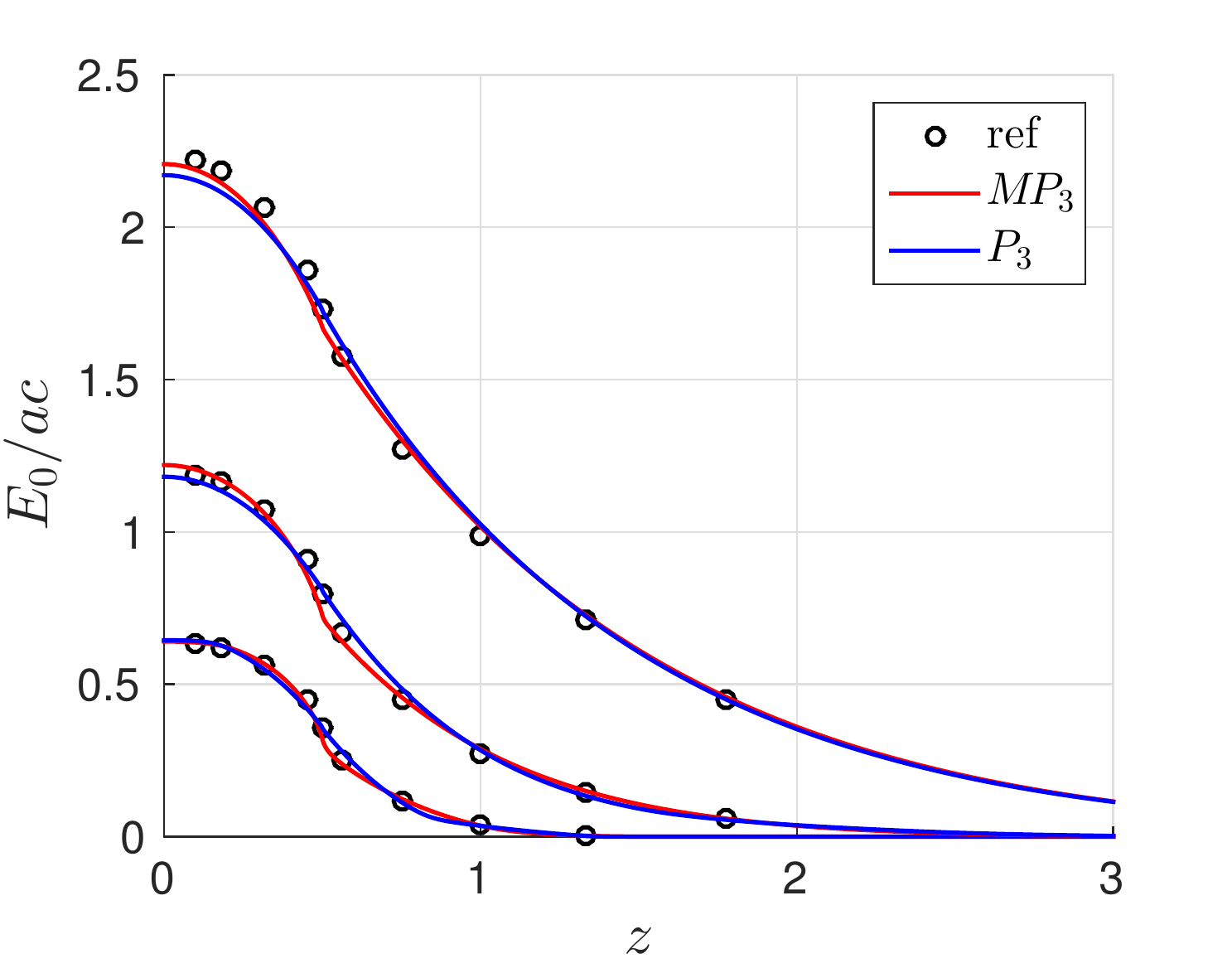}
  }
  \subfloat{
  \includegraphics[trim={0mm 0mm 0mm 0mm},clip,width=0.33\textwidth,height=0.16\textheight]{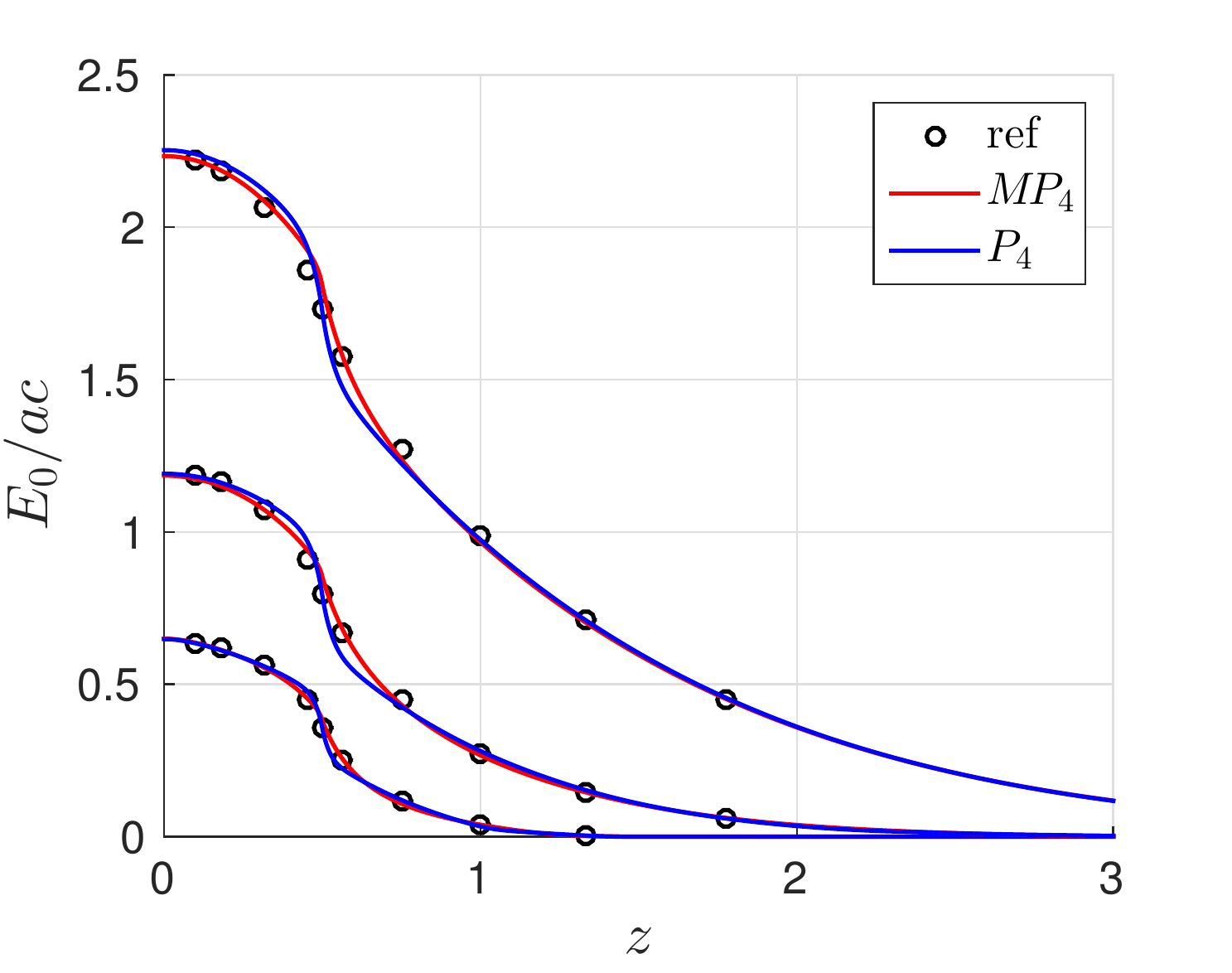}
  }\\
  \subfloat{
  \includegraphics[trim={0mm 0mm 0mm 0mm},clip,width=0.33\textwidth,height=0.16\textheight]{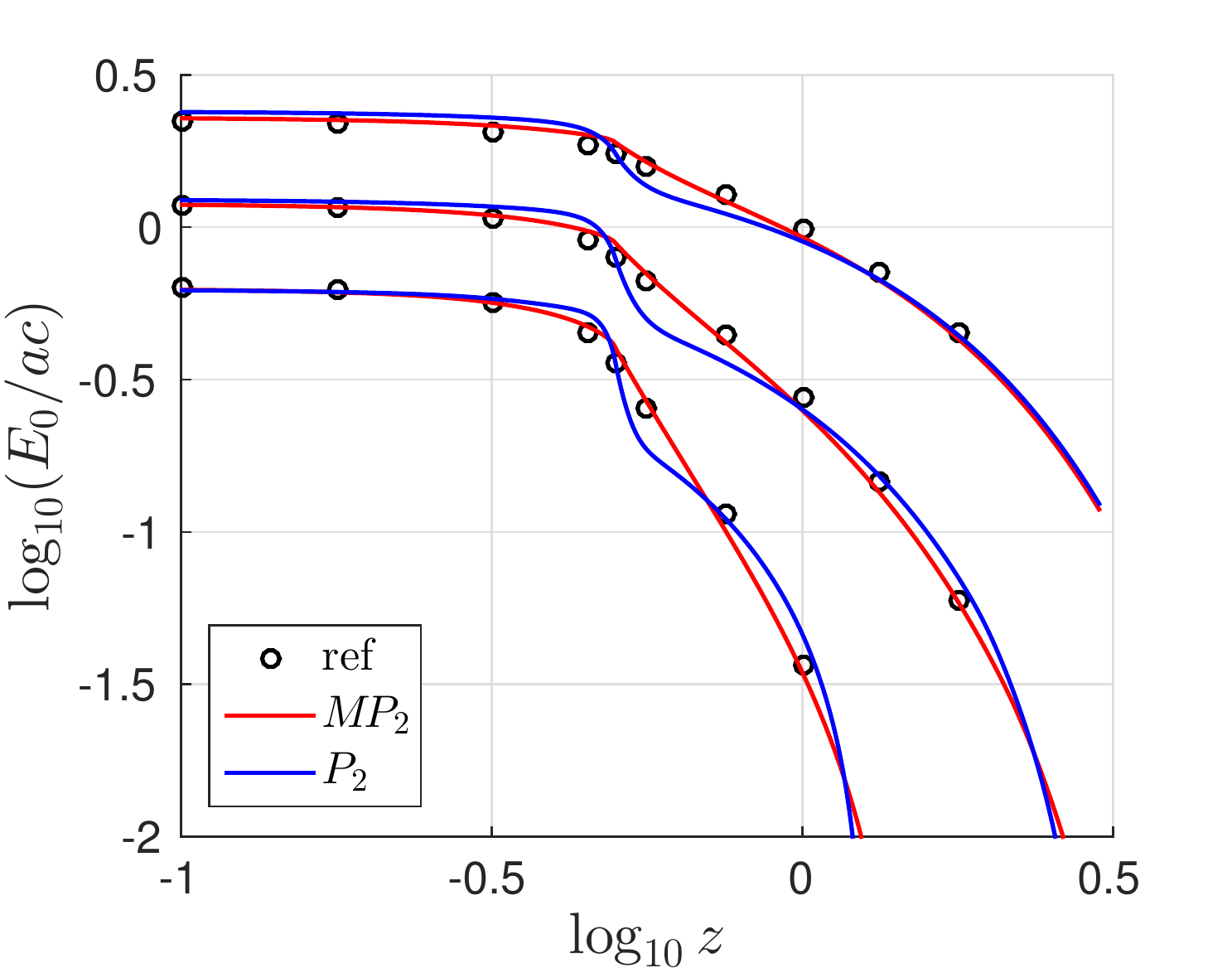}
  }
  \subfloat{
  \includegraphics[trim={0mm 0mm 0mm 0mm},clip,width=0.33\textwidth,height=0.16\textheight]{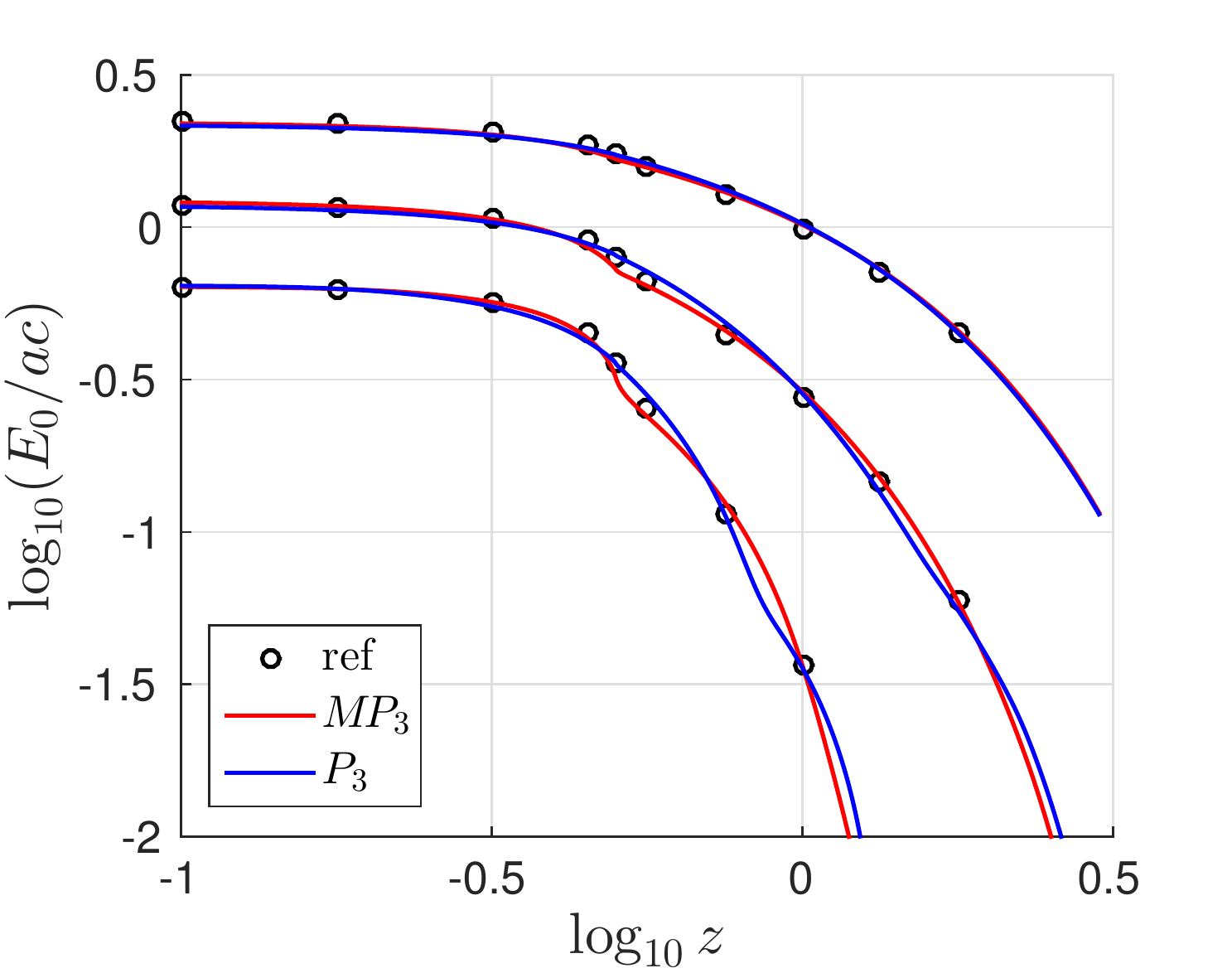}
  }
  \subfloat{
  \includegraphics[trim={0mm 0mm 0mm 0mm},clip,width=0.33\textwidth,height=0.16\textheight]{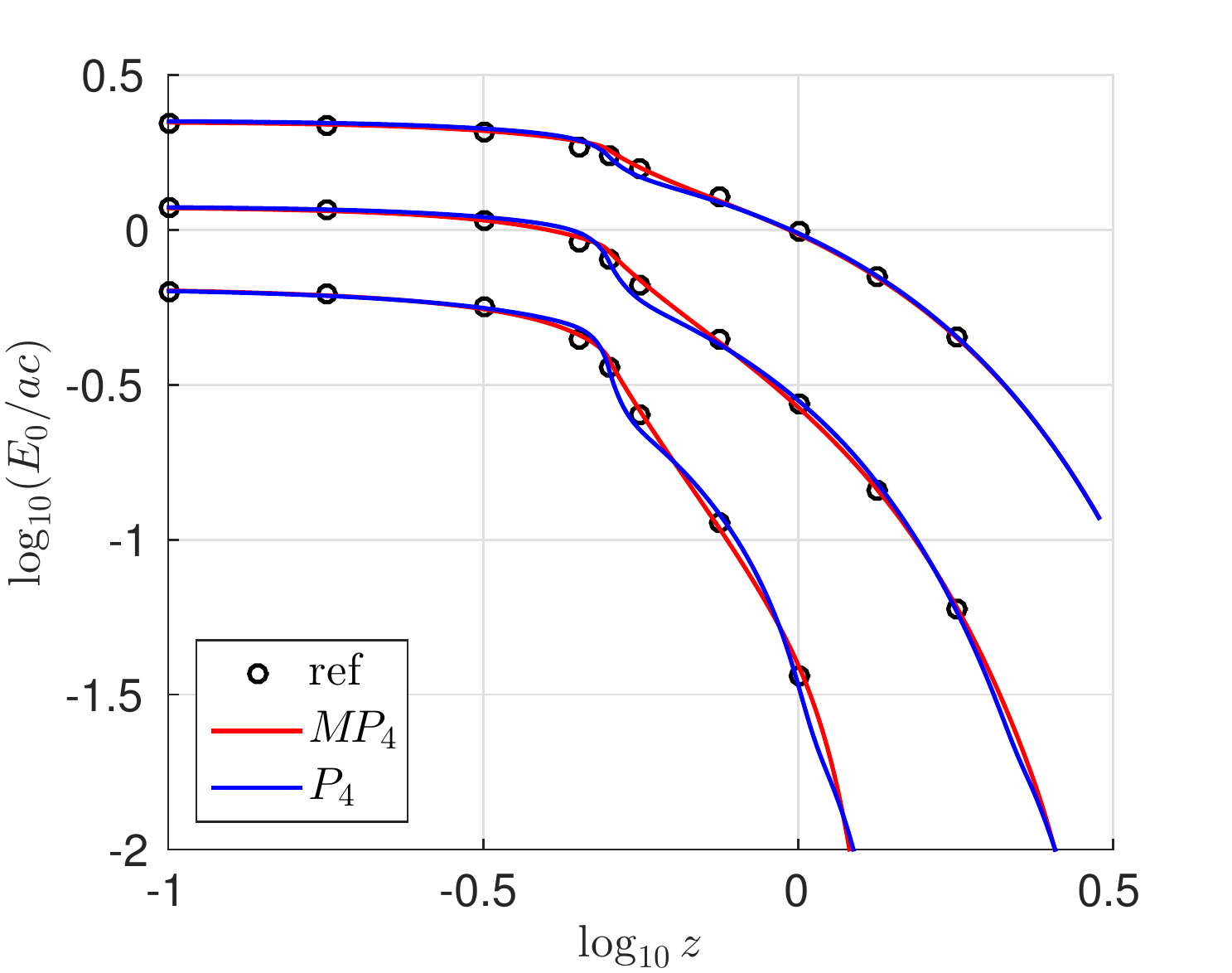}
  }\\
  \caption{Profile of $E_0$ of the \MPN and \PN models with $N=2,3,4$ (from left
    to right) for the Su-Olson problem at different time $\ctend =1$, $3.16$ and
      $10$ (from down to up in each figure) with the linear scale (upper figures)
      and logarithm scale (lower figures).}
  \label{fig:SuOlson_data}
\end{figure}

The setup of the simulation is $N_{\cell}=60000$ with $\Delta z=\frac{1}{2000}$.
\Cref{fig:SuOlson_data} presents the profile of $E_0$ of the \MPN model and the
\PN model with $N=2,3,4$ at different end time $\ctend=1$, $3.16$, and $10$. The
reference solution is the semi-analytic solution in \cite{su1997analytical}.

For both the \PN and \MPN models, the results agree with the reference well as
$N$ increases.  The \MPN model has the capability to simulate such benchmark a
few moments, for instance, $N=2$.  Moreover, in the comparison of the \PN model,
the \MPN model shows superiority to handle such material coupling problem.


\section{Conclusion} \label{sec:conclusion}
According to the two viewpoints on the \PN and \MN models, we proposed the \MPN
model by expanding the specific intensity around the ansatz of the \Mone model
in terms of orthogonal polynomials. The certain selection of the weight function
permitted the \MPN model to simulate the problems with strong anisotropic
distribution. The \MPN model had an explicit expression of the closure for
arbitrary $N$, which allowed us to directly solve it cheaply in the comparison of the
\MN model. In all the numerical tests, only a few moments (for instance, $N=2$
is good enough for many tests) were required to give good numerical results.
Hence, it was believed that the \MPN model could be used
to solve the RTE fast and accurately. The current work focused on the novel idea
on the construction of the \MPN model to approximate the RTE for a grey medium in
the slab geometry. The extension to the general medium and 3D case was in
process.

\section*{Acknowledgements}
The work of Y.F. is partially supported by the U.S. Department of Energy, Office
of Science, Office of Advanced Scientific Computing Research, Scientific
Discovery through Advanced Computing (SciDAC) program and the National Science
Foundation under award DMS-1818449.
The work of R.L. and L.Z. is partially supported by Science Challenge Project,
No. TZ2016002 and the National Natural Science Foundation of China (Grant No.
91630310 and 11421110001, 11421101).


\appendix
\section{Proof of the \cref{pro:M1P2}}\label{sec:proof_M1P2}
In this appendix, we prove the \cref{pro:M1P2}. 
Except \cref{pro:M1P2} b), the case $E_1=0$ is trival.
Without loss of generality, we need only to check the case $\alpha\in(0,1)$
for all the properties except \cref{pro:M1P2} b).
Noticing \eqref{eq:E3ofM1P2}, we can obtain the following relationships with 
direct calculation:
\begin{align}
    \pd{E_3}{E_2} &= -\frac{ C_1}{ C_2},\label{eq:dev_E3E2}\\
    \pd{\alpha}{E_0} &= \frac{4 - 2\sqrt{4 - 3(E_1/E_0)^2}}{E_1 \sqrt{4
    - 3(E_1/E_0)^2}},\label{eq:dev_alE0}\\
    \pd{\alpha}{E_1} &= -E_0\frac{4 - 2\sqrt{4 - 3(E_1/E_0)^2}}{E_1^2 \sqrt{4
    - 3(E_1/E_0)^2}}, \label{eq:dev_alE1}
\end{align}
where
\begin{equation*}
    \begin{aligned}
        C_1 &= 2 (\alpha^2 (-6 + 7 \alpha^2) + \alpha (12
        - 13 \alpha^2 + \alpha^4)\text{atanh}(\alpha) + 6 (-1 +
        \alpha^2) \text{atanh}^2(\alpha)),\\
        C_2 &= \alpha( \alpha^2 (-3 + 4 \alpha^2) - 6\alpha (-1 +
        \alpha^2) \text{atanh}(\alpha) + (-3 + 2 \alpha^2 + \alpha^4)
        \text{atanh}^2(\alpha)).
    \end{aligned}
\end{equation*}
Since \eqref{eq:dev_E3E2} is a function of $\alpha$, we let $\kappa(\alpha) :=
\pd{E_3}{E_2}$. Direct calculations yield
\begin{align}
    \pd{\kappa(\alpha)}{\alpha} &=
    \frac{4C_3C_4}{C_2^2},\label{eq:dev_kappa}
\end{align}
where
\begin{equation*}
    \begin{aligned}
        C_3 &=(\text{atanh}(\alpha)-\alpha)
        (\alpha(3-2 \alpha^2)- 3 (1-\alpha^2) \text{atanh}(\alpha)),\\
        C_4 &=
        \alpha^2 (-3 + 8 \alpha^2) + (6\alpha - 8
        \alpha^3)\text{atanh}(\alpha)
        + 3 (-1 + \alpha^4) \text{atanh}^2(\alpha).
    \end{aligned}
\end{equation*}

\begin{enumerate}
  \renewcommand{\theenumi}{\alph{enumi})}
\item[\ref{itm:E3E0}] 
    To prove $\pd{E_3}{E_0}>0$ when $\alpha\in(0,1)$, we need to verify that
    $\pd{^2E_3}{E_2\partial E_0} > 0$ and $\pd{E_3}{E_0}
    \Big|_{E_2=\frac{E_1^{2}}{E_0}} > 0$.
    Since $\pd{E_3}{E_2}$ in \eqref{eq:dev_E3E2} is a function of $\alpha$,
    using the chain rule gives
    \[ 
        \pd{^2E_3}{E_0\partial E_2} =
        \pd{\kappa(\alpha)}{\alpha}\pd{\alpha}{E_0}=
        \frac{4C_3C_4}
        {C_2^2}\pd{\alpha}{E_0}.
    \]
    When $\alpha\in(0,1)$, we have $E_1\in(-1,0)$, so one can check that
    $C_3>0$, $C_4<0$ and $\pd{\alpha}{E_0}<0$.  Thus
    we obtain $\pd{^2E_3}{E_2\partial E_0}>0$.

  Then we consider the situation $E_2=E_1^2/E_0$ to make
  $\pd{E_3}{E_0}$ a function of $\alpha > 0$, which can be written as
  \[ 
    \pd{E_3}{E_0}\Big|_{E_2=\frac{E_1^2}{E_0}}
    =\frac{(1-\alpha^2)C_5}{\alpha^2 (-9 + \alpha^4)
      C_2^2 },
  \]
  where 
  \begin{align*}
    \begin{aligned}
      C_5=&-3 \alpha^5 (-135 + 450 \alpha^2 - 537 \alpha^4 + 222
      \alpha^6 + 8 \alpha^8) \\
      &+ 
      \alpha^4 (-2025 + 6129 \alpha^2 - 6345 \alpha^4 + 2067 \alpha^6 + 230 \alpha^8 + 
      24 \alpha^{10}) \text{atanh}(\alpha)\\
      &- 
      2 \alpha^3 (-2025 + 5508 \alpha^2 - 4671 \alpha^4 + 681 \alpha^6 + 472 \alpha^8 + 
      51 \alpha^{10}) \text{atanh}^2(\alpha)\\
      &+ 
      2 \alpha^2 (-2025 + 4887 \alpha^2 - 3033 \alpha^4 - 630 \alpha^6 +
      689 \alpha^8 + 103 \alpha^{10} + 
      9 \alpha^{12}) \text{atanh}^3(\alpha)\\
      &+ \alpha(2025 - 4266 \alpha^2 + 1431 \alpha^4 + 1668 \alpha^6 -
      601 \alpha^8 - 
      218 \alpha^{10} - 39 \alpha^{12}) \text{atanh}^4(\alpha)\\
      &+ 
      3 (5 + \alpha^2) (-3 + 2 \alpha^2 + \alpha^4)^3 \text{atanh}^5(\alpha),
    \end{aligned}
  \end{align*}
  is an elementary function of $\alpha$. One can
  check that $C_5 < 0$ when $\alpha\in(0,1)$. Therefore, $\pd{E_3}{E_0}>0$
  when $E_2 > \dfrac{E_1^2}{E_0}$.
\item[\ref{itm:E3E1}] 
    Let $G(\alpha)\triangleq \pd{E_3}{E_1}\Big|_{E_2=\frac{E_1^2}{E_0}}$. To
    prove $\pd{E_3}{E_1}\Big|_{E_1=0}>0$, it is sufficient to show
    $\pd{^2E_3}{E_1\partial{E_2}}>0$, and $G(0)>0$. Using the chain rule gives
    \[
        \pd{^2E_3}{E_1\partial{E_2}} = 
        \pd{\kappa(\alpha)}{\alpha}\pd{\alpha}{E_1}.
    \]
    Since $\pd{\alpha}{E_1}<0$ when $\alpha\in(0,1)$, we obtain
    $\pd{^2E_3}{E_1\partial E_2}>0$, according to the proof of
    \cref{pro:M1P2}\ref{itm:E3E0}.
    Direct calculations yield
   \[ 
   G(\alpha) = \frac{\frac{64}{7875}\alpha^{21} + O(\alpha^{23})}
   {\frac{16}{675}\alpha^{21} + O(\alpha^{23})},
   \]
   thus we have $G(0) = \frac{12}{35}>0$.
\item[\ref{itm:E3E2_1}] 
    Noticing \eqref{eq:dev_E3E2}, to prove $\pd{E_3}{E_2}<0$ when
    $\alpha\in(0,1)$, we need only to check that the signs of $C_1$ and
    $C_2$ are same. Because $C_1$ and $C_2$ are elementary
    function of $\alpha$, direct calculation yields $C_1,C_2>0$ when
    $\alpha\in(0,1)$.
\item[\ref{itm:E3E2}]
     Noticing that \eqref{eq:dev_E3E2} is a function of $\alpha$, independent on
     $E_2$, and $\alpha$ is independent on $E_2$, we have that $E_3$ is
     linear dependent on $E_2$, i.e., $\pd{^2E_3}{E_2^2}=0$.

\item[\ref{itm:E3geE1}] 
   With the help of \eqref{eq:dev_E3E2}, we can obtain
   \begin{equation*}
       \mE_3-\mE_1^3-(\mE_2-\mE_1^2)\pd{E_3}{E_2}=
       -\dfrac{3(1-\alpha^2)^2 C_6}
       {\alpha^3(3+\alpha^2)^3 C_2},
   \end{equation*}
   where
   \begin{align*}
    \begin{aligned}
      C_6=& -\alpha^3 (27-72\alpha^2+39\alpha^4+2\alpha^6)
       +\alpha^2(81-171\alpha^2+69\alpha^4+19\alpha^6+2\alpha^8) \text{atanh}(\alpha) \\
      & -3\alpha(1-\alpha^2)^2(27+12\alpha^2+\alpha^4) \text{atanh}(\alpha)^2
       +(1-\alpha^2)^2(3+\alpha^2)^3 \text{atanh}^3(\alpha).
    \end{aligned}
  \end{align*}
  One can show that $C_6>0$ when $\alpha\in(0,1)$ because $C_6$ is
  an elementary function. This results 
  \[
   \mE_3-\mE_1^3-(\mE_2-\mE_1^2)\pd{E_3}{E_2} < 0.
  \]

\end{enumerate}

\bibliographystyle{abbrv}
\bibliography{../../article,../references}

\begin{thebibliography}{10}

\bibitem{alldredge2016approximating}
G.~W. Alldredge, R.~Li, and W.~Li.
\newblock Approximating the m 2 method by the extended quadrature method of
  moments for radiative transfer in slab geometry.
\newblock {\em Kinetic \& Related Models}, 9(2), 2016.

\bibitem{Benassi1984High}
M.~Benassi, R.~Garcia, A.~Karp, and C.~Siewert.
\newblock A high-order spherical harmonics solution to the standard problem in
  radiative transfer.
\newblock {\em The Astrophysical Journal}, 280:853--864, 1984.

\bibitem{Bird}
G.~Bird.
\newblock {\em Molecular Gas Dynamics and the Direct Simulation of Gas Flows}.
\newblock Oxford: Clarendon Press, 1994.

\bibitem{broadwell1964study}
J.~E. Broadwell.
\newblock Study of rarefied shear flow by the discrete velocity method.
\newblock {\em Journal of Fluid Mechanics}, 19(03):401--414, 1964.

\bibitem{Bru02}
T.~A. Brunner.
\newblock Forms of approximate radiation transport.
\newblock {\em Tech. Rep SAND2002-1778}, 2002.

\bibitem{brunner2001one}
T.~A. Brunner and J.~P. Holloway.
\newblock One-dimensional riemann solvers and the maximum entropy closure.
\newblock {\em Journal of Quantitative Spectroscopy and Radiative Transfer},
  69(5):543--566, 2001.

\bibitem{brunner2005two}
T.~A. {Brunner} and J.~P. {Holloway}.
\newblock {Two-dimensional time dependent Riemann solvers for neutron
  transport}.
\newblock {\em Journal of Computational Physics}, 210:386--399, 2005.

\bibitem{buchan2005linear}
A.~G. Buchan, C.~C. Pain, M.~D. Eaton, R.~P. Smedley-Stevenson, and A.~J.~H.
  Goddard.
\newblock Linear and quadratic octahedral wavelets on the sphere for angular
  discretisations of the boltzmann transport equation.
\newblock {\em Annals of Nuclear Energy}, 32(11):1224--1273, 2005.

\bibitem{framework}
Z.~Cai, Y.~Fan, and R.~Li.
\newblock A framework on moment model reduction for kinetic equation.
\newblock {\em SIAM J. Appl. Math.}, 75(5):2001--2023, 2015.

\bibitem{collins1975quantifier}
G.~E. Collins.
\newblock Quantifier elimination for real closed fields by cylindrical
  algebraic decompostion.
\newblock In {\em Automata Theory and Formal Languages 2nd GI Conference
  Kaiserslautern, May 20--23, 1975}, pages 134--183. Springer, 1975.

\bibitem{Davison1960on}
B.~Davison.
\newblock On the rate of convergence of the spherical harmonics method for the
  plane case, isotropic scattering.
\newblock {\em Canadian Journal of Physics}, 38(11):1526--1545, 1960.

\bibitem{densmore2012hybrid}
J.~D. Densmore, K.~G. Thompson, and T.~J. Urbatsch.
\newblock A hybrid transport-diffusion monte carlo method for
  frequency-dependent radiative-transfer simulations.
\newblock {\em Journal of Computational Physics}, 231(20):6924--6934, 2012.

\bibitem{dubroca1999theoretical}
B.~Dubroca and J.-L. Feugeas.
\newblock Theoretical and numerical study on a moment closure hierarchy for the
  radiative transfer equation.
\newblock {\em Comptes Rendus de l'Academie des Sciences Series I Mathematics},
  329(10):915--920, 1999.

\bibitem{duderstadt1979transport}
J.~J. Duderstadt and W.~R. Martin.
\newblock Transport theory.
\newblock 1979.

\bibitem{Eshagh2009On}
M.~Eshagh.
\newblock On the convergence of spherical harmonic expansion of topographic and
  atmospheric biases in gradiometry.
\newblock {\em Contributions to Geophysics and Geodesy}, 39(4):273 -- 299,
  2009.

\bibitem{fan2018fast}
Y.~Fan, J.~An, and L.~Ying.
\newblock Fast algorithms for integral formulations of steady-state radiative
  transfer equation.
\newblock {\em to appear in Journal of Computational Physics}, 2018.

\bibitem{Fan2015}
Y.~Fan, J.~Koellermeier, J.~Li, R.~Li, and M.~Torrilhon.
\newblock Model reduction of kinetic equations by operator projection.
\newblock {\em Journal of Statistical Physics}, 162(2):457--486, 2016.

\bibitem{Fischer1972The}
A.~E. Fischer and J.~E. Marsden.
\newblock The einstein evolution equations as a first-order quasi-linear
  symmetric hyperbolic system, {I}.
\newblock {\em Communications in Mathematical Physics}, 28(1):1--38, 1972.

\bibitem{Grad}
H.~Grad.
\newblock On the kinetic theory of rarefied gases.
\newblock {\em Comm. Pure Appl. Math.}, 2(4):331--407, 1949.

\bibitem{Hauck2011high}
C.~D. Hauck.
\newblock High-order entropy-based closures for linear transport in slab
  geometries.
\newblock {\em Communications in Mathematical Sciences}, 9(1):187--205, 1 2011.

\bibitem{Frank2012Perturbed}
C.~D. Hauck, M.~Frank, and E.~Olbrant.
\newblock Perturbed, entropy-based closure for radiative transfer.
\newblock {\em SIAM Journal on Applied Mathematics}, 6(3):557--587, 2013.

\bibitem{hayakawa2007coupled}
C.~K. Hayakawa, J.~Spanier, and V.~Venugopalan.
\newblock Coupled forward-adjoint monte carlo simulations of radiative
  transport for the study of optical probe design in heterogeneous tissues.
\newblock {\em SIAM Journal on Applied Mathematics}, 68(1):253--270, 2007.

\bibitem{jeans1917stars}
J.~H. Jeans.
\newblock Stars, gaseous, radiative transfer of energy.
\newblock {\em Monthly Notices of the Royal Astronomical Society}, 78:28--36,
  1917.

\bibitem{klose2002optical}
A.~D. Klose, U.~Netz, J.~Beuthan, and A.~H. Hielscher.
\newblock Optical tomography using the time-independent equation of radiative
  transfer—part 1: forward model.
\newblock {\em Journal of Quantitative Spectroscopy and Radiative Transfer},
  72(5):691--713, 2002.

\bibitem{koch2004evaluation}
R.~Koch and R.~Becker.
\newblock Evaluation of quadrature schemes for the discrete ordinates method.
\newblock {\em Journal of Quantitative Spectroscopy and Radiative Transfer},
  84(4):423--435, 2004.

\bibitem{larsen2010advances}
E.~W. Larsen and J.~E. Morel.
\newblock Advances in discrete-ordinates methodology.
\newblock In {\em Nuclear Computational Science}, pages 1--84. Springer, 2010.

\bibitem{leveque2002finite}
R.~J. LeVeque.
\newblock {\em Finite volume methods for hyperbolic problems}, volume~31.
\newblock Cambridge university press, 2002.

\bibitem{levermore1996moment}
C.~D. Levermore.
\newblock Moment closure hierarchies for kinetic theories.
\newblock {\em Journal of Statistical Physics}, 83(5-6):1021--1065, 1996.

\bibitem{marshak20053d}
A.~Marshak and A.~Davis.
\newblock {\em 3D radiative transfer in cloudy atmospheres}.
\newblock Springer Science \& Business Media, 2005.

\bibitem{McClarren2008Semi}
R.~G. McClarren, T.~M. Evans, R.~B. Lowrie, and J.~D. Densmore.
\newblock Semi-implicit time integration for {$P_N$} thermal radiative
  transfer.
\newblock {\em Journal of Computational Physics}, 227(16):7561--7586, 2008.

\bibitem{mcclarren2008solutions}
R.~G. McClarren, J.~P. Holloway, and T.~A. Brunner.
\newblock On solutions to the {$P_n$} equations for thermal radiative transfer.
\newblock {\em Journal of Computational Physics}, 227(5):2864--2885, 2008.

\bibitem{minerbo1978maximum}
G.~N. Minerbo.
\newblock Maximum entropy eddington factors.
\newblock {\em Journal of Quantitative Spectroscopy and Radiative Transfer},
  20(6):541--545, 1978.

\bibitem{Olson2000Diffusion}
G.~L. Olson, L.~H. Auer, and M.~L. Hall.
\newblock Diffusion {$P_1$} and other approximatioe forms of radiation
  transport.
\newblock {\em Journal of Quantitative Spectroscopy and Radiative Transfer},
  64(6):619--634, 2000.

\bibitem{pomraning1973equations}
G.~C. {Pomraning}.
\newblock {\em {The equations of radiation hydrodynamics}}.
\newblock Pergamon Press, 1973.

\bibitem{Stamnes1988Electromagnetic}
K.~Stamnes, S.-C. Tsay, W.~Wiscombe, and K.~Jayaweera.
\newblock Numerically stable algorithm for discrete-ordinate-method radiative
  transfer in multiple scattering and emitting layered media.
\newblock {\em Appl. Opt.}, 27(12):2502--2509, Jun 1988.

\bibitem{su1997analytical}
B.~Su and G.~L. Olson.
\newblock An analytical benchmark for non-equilibrium radiative transfer in an
  isotropically scattering medium.
\newblock {\em Annals of Nuclear Energy}, 24(13):1035--1055, 1997.

\bibitem{tarvainen2005hybrid}
T.~Tarvainen, M.~Vauhkonen, V.~Kolehmainen, and J.~P. Kaipio.
\newblock Hybrid radiative-transfer--diffusion model for optical tomography.
\newblock {\em Applied optics}, 44(6):876--886, 2005.

\end{thebibliography}

\end{document}